\definecolor{vertFonce}{rgb}{0,0.5,0}
\definecolor{numLignes}{rgb}{0.17,0.57,0.7}	%{43,145,175}
\definecolor{gris}{rgb}{0.5,0.5,0.5}
\definecolor{grisFonce}{rgb}{0.2,0.2,0.2}
\definecolor{orange}{rgb}{1,0.65,0.31}		%{255,167,79}
\definecolor{orangeFonce}{rgb}{1,0.4,0}
\definecolor{bleuFonce}{rgb}{0,0,0.4}
\definecolor{rougeFonce}{rgb}{0.3,0,0}
\definecolor{rougeWord}{rgb}{0.5,0,0}
\definecolor{vertClair}{rgb}{0.8,1,0.8}
\definecolor{rougeClair}{rgb}{1,0.5,0.5}
\newtheorem{lem}{Lemma}[section]
\newtheorem{thm}{Theorem}
\newtheorem{cor}{Corollary}[section]
\newtheorem{prop}{Proposition}[section]
\newtheorem{remark}{Remark}[section]
\newenvironment{system*}{%
	\equation\nonumber\left\{\ \begin{aligned}
}{%
	\end{aligned} \right. \endequation%
}
\newcommand		{\subsetArrow}	{\mathrel{\ooalign{$\subset$\cr%
\hidewidth\raise-.087ex\hbox{$_\shortrightarrow\mkern-1.5mu$}\cr}}}
\newcommand		{\subsetarrow}	{\mathrel{\ooalign{$\subset$\cr%
\hidewidth\raise-1.45ex\hbox{$\vec{}\mkern6mu$}\cr}}}
\newcommand		{\N}		{\mathbb N}			% naturels {0,1,2,3,...}
\newcommand		{\ZZ}		{\mathbb Z}			% integers
\newcommand		{\Z}		{\ZZ}
\newcommand		{\RR}		{\mathbb R}			% réels
\newcommand		{\R}		{\RR}
\newcommand		{\Rd}		{\R^d}
\newcommand		{\Rdd}		{\R^{2d}}
\renewcommand	{\SS}		{\mathds S}			% sphère unité
\newcommand		{\cH}		{\mathcal H}		% Hilbert
\renewcommand	{\L}		{\mathcal L}		% Applications linéaires ou loi d'une variable aléatoire
\newcommand		{\fS}		{\mathfrak S}		% Schatten spaces
\newcommand     {\cW}		{\mathcal W}		% Espaces de Sobolev quantiques
\newcommand		{\cB}		{\mathcal B}
\newcommand		\sfT		{\mathsf T}			% Opérateur de transport
\newcommand		{\lt}			{\left}				%
\newcommand		{\rt}			{\right}			%
\renewcommand	{\(}			{\lt(}
\renewcommand	{\)}			{\rt)}
\newcommand		{\set}[1]		{\lt\{#1\rt\}}
\newcommand		{\bangle}[1]	{\lt\langle #1\rt\rangle}
\newcommand		{\scalar}[2]	{\bangle{#1,#2}}
\newcommand		{\com}[1]		{\lt[{#1}\rt]}		% commutator
\newcommand		{\n}[1]			{\lt|{#1}\rt|}
\newcommand		{\nrm}[1]		{\lt\|{#1}\rt\|}
\newcommand		{\snrm}[1]		{\lVert #1\rVert}
\newcommand		{\bnrm}[1]		{\big\lVert #1\big\rVert}
\newcommand		{\Bnrm}[1]		{\Big\lVert #1\Big\rVert}
\newcommand		{\Nrm}[2]		{\nrm{#1}_{#2}}
\newcommand		{\sNrm}[2]		{\snrm{#1}_{#2}}
\newcommand		{\bNrm}[2]		{\bnrm{#1}_{#2}}
\newcommand		{\BNrm}[2]		{\Bnrm{#1}_{#2}}
\renewcommand		{\d}		{\mathop{}\!\mathrm{d}}		% différentielle
\newcommand			{\Dx}		{\nabla_x}
\newcommand			{\Dv}		{\nabla_\xi}
\newcommand			{\conj}[1]	{\overline{#1}}		% conjugaison complexe
\newcommand			{\Id}		{\mathrm{Id}}		% Identity operator
\DeclareMathOperator{\cF}		{\mathcal{F}}		% transformée de fourier
\DeclareMathOperator{\tr}		{Tr}				% Trace
\newcommand		{\F}[1]			{\cF\!\( #1 \)}		% transformée de fourier
\newcommand		{\Tr}[1]		{\tr\!\( #1 \)} 	% Trace
\newcommand		{\intd}			{\int_{\Rd}}
\newcommand		{\intdd}		{\int_{\Rdd}}
\newcommand		{\iintd}		{\iint_{\Rdd}}
\newcommand		{\sumj}			{\sum_{j\in J}}
\newcommand		{\jj}			{\mathrm{j}}	% j\in\N
\newcommand		{\cC}			{\mathcal{C}}
\newcommand		{\Inprod}[2]	{\Braket{#1 | #2}}
\newcommand		{\op}		{\boldsymbol{\rho}}	% density operator
\newcommand		{\opmu}		{\boldsymbol{\mu}}	% operator
\newcommand		{\opnu}		{\boldsymbol{\nu}}	% operator
\newcommand		{\tildop}		{\,\tilde{\!\op}}	% Wick quantization
\newcommand		{\ttildop}		{\,\tilde{\tilde{\!\op}}}	% Wick quantization
\newcommand		{\opp}		{\boldsymbol{p}}
\newcommand		{\opz}		{\boldsymbol{z}}
\newcommand		{\Dh}		{\boldsymbol{\nabla}}	% quantum gradient
\newcommand		{\DDh}		{\boldsymbol{\Delta}}	% quantum Laplacian
\newcommand		{\Dhx}[1]	{\Dh_{\!x} #1}			% quantum d_x
\newcommand		{\Dhv}[1]	{\Dh_{\!\xi} #1}		% quantum d_xi
\newcommand		{\Dhvj}[1]	{\Dh_{\!\xi_\jj} #1}	% quantum d_xi_j
\newcommand		{\floor}[1]		{\lt\lfloor{#1}\rt\rfloor}
\newcommand		{\CS}	{\cC^\mathrm{S}}
\newcommand		{\CB}	{\cC^\mathrm{B}}
\newcommand		{\CSh}	{\widetilde{\cC}^\mathrm{\,S}}
\newcommand		{\CBh}	{\widetilde{\cC}^\mathrm{\,B}}
\title[\textsc{On Quantum Sobolev Inequalities}]{\Large On Quantum Sobolev Inequalities}
\author[\textsc{L.~Lafleche}]{\large\textsc{Laurent Lafleche}} %\orcidlink{0000-0002-5656-3019}}}
\address{Institut Camille Jordan, UMR 5208 CNRS \\\& Université Claude Bernard Lyon 1, France}
\curraddr{\textsc{Unit\'e de Math\'ematiques Pures et Appliqu\'ees, \'Ecole Normale Supérieure de Lyon, allée d’Italie, 69364 Lyon, France}}
\email{laurent.lafleche@ens-lyon.fr}
\subjclass[2020]{46E35 $\cdot$ 81Q20 $\cdot$ 81S30 $\cdot$ 47A30 (81S07, 46N50).}
\keywords{Operator inequalities, trace inequalities, Sobolev inequalities, semiclassical approximation, uncertainty principle}
\begin{document}

\begin{abstract}\small\vspace{20pt}
	We investigate the quantum analogue of the classical Sobolev inequalities in the phase space, with the quantum Sobolev norms defined in terms of Schatten norms of commutators. These inequalities provide an uncertainty principle for the Wigner--Yanase skew information, and also lead to new bounds on the Schatten norms of the Weyl quantization in terms of its symbol. As an intermediate tool, we obtain the analogue of Hardy--Littlewood--Sobolev's inequalities for a semiclassical analogue of the convolution, and introduce quantum Besov spaces. Explicit estimates are obtained on the optimal constants.
\end{abstract}

\begingroup
\def\uppercasenonmath#1{} % this disables uppercasing title
\let\MakeUppercase\relax % this disables uppercasing authors
\maketitle
\thispagestyle{empty} % removes page number on this page
\endgroup

\bigskip

%----------  Table of Contents  ----------
\renewcommand{\contentsname}{\centerline{Contents}}
\setcounter{tocdepth}{2}	% profondeur du commentaire
\tableofcontents
%\addcontentsline{toc}{chapter}{Table of Contents}

\section{Introduction}

	Semiclassical analysis allows to understand the similarities between quantum and classical mechanics, and more generally to obtain the leading behavior of highly oscillating waves. From a physical viewpoint, this corresponds to take units such that the constant $\hbar$ becomes negligible, or at least very small. However, error terms corresponding to high power of $\hbar$ often involve the knowledge of some uniform in $\hbar$ regularity.
	
	A typical object used to make the link between classical and quantum mechanics is the Weyl quantization, that associates to a function $f = f(x,\xi)$ of the phase space an operator
	\begin{equation}\label{eq:Weyl_def_0}
		\op_f := \iintd \widehat{f}(y,\xi) \,e^{2i\pi\(y\cdot x + \xi\cdot \opp\)}\d y \d\xi
	\end{equation}
	acting on $L^2(\Rd)$, by analogy with the Fourier inversion formula. Here, $(y,\xi)\in\Rdd$ are the phase space variables, $x$ is the operator of multiplication by the variable $x$, $\opp = -i\hbar\nabla$ is the quantum analogue of the momentum, and $\widehat{f}$ denotes the Fourier transform of $f$, defined by
	\begin{equation*}
		\F{f}\!(y,\xi) = \widehat{f}(y,\xi) = \iintd f(y',\xi')\, e^{-2i\pi\(y\cdot y'+\xi\cdot\xi'\)}\d y'\d \xi'. 
	\end{equation*}
	Equivalently, $\op_f$ is the operator with integral kernel
	\begin{equation*} %\label{eq:Weyl_def}
		\op_f(x,y) = \intd e^{-2i\pi\(y-x\)\cdot\xi} \, f(\tfrac{x+y}{2},h\xi)\d\xi.
	\end{equation*}
	Another widely used object is the Wigner transform, that is the inverse operation: it associates to an operator $\op$ a function on the phase space
	\begin{equation}\label{eq:Wigner_def}
		f_{\op}(x,\xi) = \intd e^{-i\,y\cdot\xi/\hbar} \,\op(x+\tfrac{y}{2},x-\tfrac{y}{2})\d y.
	\end{equation}
	In general, these pointwise identities should be understood in the sense of distributions. One of the difficulties when trying to prove the propagation of regularity uniformly with respect to $\hbar$ of the Wigner transform of a solution for example of the Schrödinger or  Hartree equation is the fact that the Wigner transform is not positive and its Lebesgue norms $L^p(\Rdd)$ are not conserved except for $p=2$ (see e.g. \cite{lions_sur_1993}). Therefore, it is more convenient to work at the level of operators and to study conveniently scaled Schatten norms of these operators as a replacement of the Lebesgue norms, that we will denote by\footnote{Notice that the spaces of operators with finite $\L^p$ norm are just the standard Schatten spaces. If one wants to define a space taking into account the $\hbar$, then one should rather consider a space $\ell^\infty\L^p$ of sequences $(\op_\hbar)_{\hbar\in(0,1)}$ of operators indexed by $\hbar$ such that $\sup_{\hbar\in(0,1)} \Nrm{\op}{\L^p} < \infty$.}
	\begin{equation}\label{eq:def_norm}
		\Nrm{\op}{\L^p} = h^{\frac{d}{p}} \Nrm{\op}{p} = h^{\frac{d}{p}} \Tr{\n{\op}^p}^\frac{1}{p},
	\end{equation}
	where $\n{\op} = \sqrt{\op^*\op}$. The normalization implies in particular that $\|\op_f\|_{\L^2} = \Nrm{f}{L^2(\Rdd)}$ and $h^d\tr(\op_f) = \intdd f(x,\xi)\d x\d\xi$. For similar reasons, it is convenient to consider the quantum analogue of gradients in the phase space. The correspondence principle leads to define the quantum gradients by the following formulas
	\begin{equation}\label{eq:quantum_gradients}
		\Dhx \op := \com{\nabla,\op} \quad \text{ and } \quad \Dhv \op := \com{\frac{x}{i\hbar},\op}.
	\end{equation}
	As was noticed for instance in \cite{benedikter_mean-field_2014}, they can also be seen just as the Weyl quantization of the classical phase space gradients since
	\begin{equation}\label{eq:weyl_quantization_gradients}
		\op_{\Dx f} = \Dhx{\op_f} \quad \text{ and } \quad \op_{\Dv f} = \Dhv{\op_f}.
	\end{equation}	
	Motivated by the above quantum analogues of gradients and Lebesgue norms, we prove in this paper the corresponding quantum analogue of the classical Sobolev inequalities. In dimension $d=3$, it implies for example inequalities such as
	\begin{equation}\label{eq:case_p=2}
		\Nrm{\op}{\L^3}^2 \leq 2\,C^2 \Nrm{\Dhx\op}{\L^2} \Nrm{\Dhv\op}{\L^2}
	\end{equation}
	where $C = \CS_{1,2}+\frac{1}{2\sqrt{2\pi}} = \frac{1}{2\sqrt{2\pi}}\((\frac{20}{9})^{1/6}+1\)$ with $\CS_{1,2}$ the optimal constant in the classical Sobolev inequality in dimension $6$.
	
	One should not confuse the norms appearing on the right-hand side of the above inequality with norms of the form $\Nrm{(-\Delta)^{n/2}\op\,(-\Delta)^{n/2}}{\L^p}$ which are sometimes also used for operators, for example in works such as \cite{chadam_time-dependent_1976, brezzi_three-dimensional_1991, castella_$l^2$_1997, erdos_derivation_2001, lafleche_propagation_2019}. In the semiclassical approximation, these should rather be considered as the analogue of weighted Lebesgue norms or moments, since for sufficiently nice positive operators and for $p=1$, it holds
	\begin{equation*}
		\Nrm{\(-\Delta\)^{n/2}\op\(-\Delta\)^{n/2}}{\L^1} = \hbar^{-2n} \Nrm{\n{\opp}^n\op\n{\opp}^n}{\L^1} = \hbar^{-2n} \intdd f_{\op}(x,\xi) \n{\xi}^{2n} \d x\d\xi.
	\end{equation*}
	
	Inequality~\eqref{eq:case_p=2} can be seen as an uncertainty principle. If $\op\geq 0$ is an operator such that $\Tr{\op} = 1$, one can define its Wigner--Yanase skew information~\cite{wigner_information_1963} with respect to a self-adjoint operator $K$ by
	\begin{equation}\label{eq:skew}
		I_K(\op) = \frac{1}{2} \Tr{\n{\com{K,\sqrt{\op}}}^2}.
	\end{equation}
	The square root of this quantity is smaller than the standard variation $\sigma_K(\op)$, defined by $\sigma_K^2(\op) = h^d \tr(\op\n{K}^2) - (h^d \Tr{\op\,K})^2$, and reduces to it when $\op$ is a projection operator of rank one. Then in dimension $d=3$, Inequality~\eqref{eq:case_p=2} implies
	\begin{equation*}
		\sigma_x(\op)\,\sigma_{\opp}(\op) \geq \sqrt{I_x(\op) \, I_{\opp}(\op)} \geq \frac{\hbar}{8\pi C^2} \Nrm{\op}{3/2}.
	\end{equation*}
	The question of finding an analogue of the Heisenberg uncertainty inequalities for the skew information has been an active field of research. As was noticed in \cite{kosaki_matrix_2005} with a counterexample, it was first wrongly claimed in~\cite{luo_informational_2004} that the skew information satisfies the same uncertainty principle as the standard variation. Because of this fact, uncertainty principles for variants of the skew uncertainty where then investigated in several papers, such as~\cite{luo_heisenberg_2005, yanagi_wigneryanasedyson_2010, yang_generalized_2022}.
%	In particular, take an operator of the form $\op = (Nh^d)^{-1}\sum_{j=1}^N \ket{\psi_j}\bra{\psi_j}$, that is the reduced density matrix of an $N$-body Slater determinant, with eigenfunctions $(\psi_j)_{j\in\set{1,\dots,N}}$ being orthonormal functions in $L^2(\Rd)$. Then the previous inequality becomes
%	\begin{equation*}
%		\(h^3 N\)^{2/3} \leq C h^3 \iintd \frac{1}{h^2}\n{x-y}^2 \n{\sum_{j=1}^N \psi_j(x)\,\conj{\psi_j(y)}}^2 \d x\d y
%	\end{equation*}

	The boundedness of the quantum gradients~\eqref{eq:quantum_gradients} in the above defined quantum Lebesgue norms~\eqref{eq:def_norm} independently of $\hbar$ is also at the core of several works dealing with semiclassical effective equations. It is used in~\cite{benedikter_mean-field_2014, benedikter_mean-field_2016-1, porta_mean_2017, golse_empirical_2019, chong_many-body_2021} in the context of the derivation of the Hartree equation in the mean-field regime, in~\cite{marcantoni_dynamics_2023} for the derivation of the Bogoliubov--de Gennes equations and in \cite{cardenas_effective_2023} for the case of Bose--Fermi mixtures, in~\cite{saffirio_semiclassical_2019, lafleche_strong_2023, chong_l2_2023} in the context of the derivation of the Vlasov equation from the Hartree--Fock equation in the semiclassical regime, in~\cite{golse_convergence_2021} in the context of the convergence of numerical schemes for quantum dynamics. Propagation of these quantities in the case of the Hartree and Hartree--Fock equation with Coulomb potential was proved in~\cite{chong_many-body_2021, chong_global--time_2022} where it was used to derive the Hartree--Fock and Vlasov equations with singular potentials in a combined mean-field and semiclassical limit. Other mixed norms of the quantum gradients were also considered in~\cite{porta_mean_2017, saffirio_semiclassical_2019}.
	
	In another context, a similar definition of quantum derivatives were also used in~\cite{bauer_self-adjointness_2023} in the study of Toeplitz operators. Quantum Sobolev spaces have also links with quantum generalizations of optimal transport distances. The Wigner--Yanase--Dyson information appears naturally when considering the self-distance of some of these generalizations \cite{de_palma_quantum_2021-1, lafleche_quantum_2023}, and some of these generalizations can be compared with quantum Sobolev norms of negative order \cite{golse_convergence_2021, golse_semiclassical_2021, lafleche_quantum_2023}.
	
	These semiclassical Schatten bounds on the quantum gradients are natural in the case when the system is in a state corresponding to a bounded distribution in the phase space. This type of bounds were proved to hold for certain pure states (that is projection operators) in~\cite{fournais_optimal_2020}, and in \cite{chong_semiclassical_2023} for thermal states. Other examples of more general states can be found in \cite{benedikter_mean-field_2014, lafleche_strong_2023, chong_many-body_2021, benedikter_effective_2022}. The regularity of projection operators in terms of the norms defined in this paper is studied in~\cite{lafleche_optimal_2023}.

\section{Notations, basic properties and results.}
	
\subsection{Classical Sobolev spaces}
	
	Denote by $z = (x,\xi) \in\Rdd$ the phase space variable. Then, in the classical setting, the homogeneous Sobolev spaces of order $1$ can be defined as the sets of phase space functions $f$ vanishing at infinity\footnote{In the sense that for any $\lambda>0$, the set $\set{x\in\Rdd : \n{f(x)}>\lambda}$ has finite measure.} such that the following norms are finite
	\begin{equation*}
		\Nrm{f}{\dot{W}^{1,p}(\Rdd)} = \Nrm{\nabla_z f}{L^p(\Rdd)}.
	\end{equation*}
	One can also define the corresponding non-homogeneous space endowed with the norm $\Nrm{f}{W^{1,p}(\Rdd)} = \Nrm{f}{L^p(\Rdd)} + \Nrm{\nabla_z f}{L^p(\Rdd)}$. By convention, we set $\dot{W}^{0,p} := L^p$. We will here mostly work on homogeneous spaces and not repeat these details.
	
	\subsubsection{Fractional Sobolev spaces} There is not a unique way to define the generalization to fractional order. We refer to \cite{stein_singular_1970, triebel_theory_1992, adams_sobolev_2003, mazya_sobolev_2011} for details on the extensions of Sobolev spaces to fractional order, with spaces such as the Besov spaces $B^s_{p,q}$ and the Triebel--Lizorkin spaces $F^s_{p,q}$. In the case $p=\infty$ and $s\in(0,1)$, the most commonly used spaces are the H\"older spaces $\dot{W}^{s,\infty} = \dot{B}^s_{\infty,\infty}$ defined by the norm
	\begin{equation*}
		\Nrm{f}{\dot{W}^{s,\infty}(\Rdd)} := \sup_{(z,w)\in\Rdd\times\Rdd\setminus\{z=w\}} \frac{\n{f(z)-f(w)}}{\n{z-w}^s}.
	\end{equation*}
	More generally, the most commonly called fractional Sobolev spaces are the Bessel--Sobolev spaces $\dot{H}^{s,p} = \dot{F}^s_{p,2}$ defined through the norm
	\begin{equation*}
		\Nrm{f}{\dot{H}^{s,p}(\Rdd)} := \Nrm{(-\Delta_z)^{s/2}f}{L^p(\Rdd)}
	\end{equation*}
	where $(-\Delta_z)^{s/2}$ denotes the fractional Laplacian and we will write $\dot{H}^{s} := \dot{H}^{s,2}$, and	the Sobolev--Slobodeckij spaces $\dot{W}^{s,p} = \dot{B}^s_{p,p}$ with norm
	\begin{equation}\label{eq:Sobolev_frac_def_0}
		\Nrm{f}{\dot{W}^{s,p}(\Rdd)}^p :=  \gamma_{s,p}\, \iint_{\Rdd\times\Rdd} \frac{\n{f(z)-f(w)}^p}{\n{z-w}^{2d+sp}}\d w\d z
	\end{equation}
	for $s\in(0,1)$, where $\gamma_{s,p}$ is chosen so that $\Nrm{f}{\dot{W}^{s,2}} = \Nrm{f}{\dot{H}^{s}}$, $\Nrm{f}{\dot{W}^{s,p}} \to \Nrm{f}{\dot{W}^{n,p}}$ when $s\to n$ with $n=0$ or $n=1$, and $\Nrm{f}{\dot{W}^{s,p}} \to \Nrm{f}{\dot{W}^{s,\infty}}$ when $p\to \infty$.
	
	\begin{remark}
		One can take for instance
		\begin{equation*}
			\gamma_{s,p} = \frac{p\n{\omega_{-2s}}}{4\,\omega_{2d+sp}} \(\frac{\pi\,\omega_{p+1}}{s^\frac{p-2}{2}}\)^s,
		\end{equation*}
		where $\omega_d = \lvert\SS^{d-1}\rvert = \frac{2\,\pi^{d/2}}{\Gamma(d/2)}$. Indeed it follows from the integral formula for the fractional Laplacian, for $s\in(0,1)$ and $z\in\Rdd$
		\begin{equation}\label{eq:lap_frac_integral_cl}
			(-\Delta_z)^\frac{s}{2} f(z) = c_{2d,s} \intdd \frac{f(w) - f(z)}{\n{w-z}^{2d+s}}\d w
		\end{equation}
		where $c_{2d,s} := \(2\pi\)^s \frac{\omega_{-s}}{\omega_{2d+s}}$, that $\gamma_{s,2} = \(2\pi\)^{2s} \frac{\n{\omega_{-2s}}}{2\,\omega_{2d+2s}}$. Moreover, it is not difficult to check using the Stirling formula that $\gamma_{s,p}^{1/p} \to 1$ when $p\to\infty$. On the other hand, it was proved in~\cite{bourgain_another_2001} (see also \cite[Equation~(44)]{brezis_how_2002}) that $\gamma_{s,p} \sim \frac{p\,\omega_{p+1}}{2\,\omega_{2d+p}} \(1-s\)$ when $s\to 1$ and it was proved in~\cite{mazya_bourgain_2002} that $\gamma_{s,p} \sim \frac{p\,s}{2\,\omega_{2d}}$ when $s\to 0$.
	\end{remark}
	
	\begin{remark}\label{rmk:bessel_vs_sobolev}
		In the case $s=1$ and $p\in(1,\infty)$ as well as in the case $p=2$, the spaces $\dot{W}^{s,p}$ and $\dot{H}^{s,p}$ coincide (see e.g.~\cite{calderon_lebesgue_1961}). Their norms are equal in the case when $p = 2$.
	\end{remark}
	
	\subsubsection{Classical Sobolev inequalities} One of the important feature of Sobolev spaces is the Sobolev inequality, originally proved in~\cite{sobolev_theorem_1938} and then generalized to many more spaces (see e.g.~\cite{lieb_analysis_2001, adams_sobolev_2003}). It tells that regularity implies additional integrability. More precisely, in the particular case of $s \in [0,1]$, when $1\leq p\leq q < \infty$ satisfy
	\begin{equation}\label{eq:exponent_sobolev}
		\frac{1}{p}-\frac{1}{q} = \frac{s}{2d}
	\end{equation}
	then there exists a constant $\CS_{s,p}$ such that for any $f\in \dot{W}^{s,p}(\Rdd)$
	\begin{equation}\label{eq:optimal_sobolev_cst}
		\Nrm{f}{L^q(\Rdd)} \leq \CS_{s,p} \Nrm{f}{\dot{W}^{s,p}(\Rdd)}.
	\end{equation}
	When $p > 1$, an analogue version holds for the Bessel--Sobolev spaces, i.e. there exists a constant $\CB_{s,p}$ such that for any $f\in \dot{H}^{s,p}(\Rdd)$
	\begin{equation}\label{eq:optimal_bessel_cst}
		\Nrm{f}{L^q(\Rdd)} \leq \CB_{s,p} \Nrm{f}{\dot{H}^{s,p}(\Rdd)}.
	\end{equation}
	As found by \cite{aubin_problemes_1976, talenti_best_1976}, when $s=1$, the optimal constant in Inequality~\eqref{eq:optimal_sobolev_cst} is given by
	\begin{equation*}
		\CS_{1,p} = \frac{1}{2d} \(\frac{q}{p'}\)^{1/p'} \(\frac{\omega_{4d/p}\,\omega_{4d/q'}}{\omega_{2d+2}\,\omega_{4d}}\)^{1/d}.
	\end{equation*}
	When $p=2$, the optimal constant was found in~\cite{lieb_sharp_1983} to be
	\begin{equation*}
		\CB_{s,2} = \CS_{s,2} = \frac{1}{\pi^s\,\omega_{2d+1}^{s/(2d)}}\(\frac{\omega_{2d+2s}}{\omega_{2d-2s}}\)^{1/2}.
	\end{equation*}
	To conclude this section, notice that replacing $f$ by $(-\Delta)^{t/2} f$ with $t \in (0,1)$ shows that more generally, if $0\leq t\leq s\leq 1$ and $1\leq p\leq q < \infty$ satisfy $\frac{1}{p}-\frac{1}{q} = \frac{s-t}{2d}$, then
	\begin{equation}\label{eq:regu_shift}
		\Nrm{f}{\dot{H}^{t,q}(\Rdd)} \leq \CB_{s-t,p} \Nrm{f}{\dot{H}^{s,p}(\Rdd)}.
	\end{equation}

\subsection{Quantum Sobolev spaces}\label{sec:quantum_Sobolev_spaces}
	
	The above considerations lead us to define the quantum analogue of Sobolev spaces in the phase space. For any compact operator $\op$, we define the homogeneous quantum Sobolev norms\footnote{This is in general a semi-norm, but gives a norm when restricted to the set of compact operators $\op$ such that $\Dh^n\op$ is compact.} of integer order by
	\begin{equation*}
		\Nrm{\op}{\dot{\cW}^{n,p}} := \Nrm{\Dh^n\op}{\L^p},
	\end{equation*}
	where we used the notation $\Dh = (\Dhx,\Dhv)$, so that $\n{\Dh\op}^2 = \n{\Dhx\op}^2 + \n{\Dhv\op}^2$. More details about vector valued operators can be found in Appendix~\ref{appendix:vec_operators}. In particular, when $n=1$, this norm is uniformly-in-$\hbar$ equivalent to the norm
	\begin{equation*}
		\Nrm{\Dhx\op}{\L^p} + \Nrm{\Dhv\op}{\L^p} = \Nrm{\com{\nabla,\op}}{\L^p} + \frac{1}{\hbar}\Nrm{\com{x,\op}}{\L^p}
	\end{equation*}
	used in~\cite{benedikter_hartree_2016, chong_many-body_2021}. As proved in Proposition~\ref{prop:commut_exponential}, this is also uniformly-in-$\hbar$ equivalent to the norm
	\begin{equation*}
		\Nrm{\Dhx\op}{\L^p} + \sup_{\xi\in\R^d\setminus\{0\}}\, \frac{1}{\hbar\n{\xi}} \Nrm{\com{e^{i\,\xi\cdot x}, \op}}{\L^p}
	\end{equation*}
	which is considered in~\cite{benedikter_mean-field_2014, benedikter_mean-field_2016-1, cardenas_effective_2023, marcantoni_dynamics_2023}.
	
	When $s\in(0,1)$, one can define the quantum analogue of the Gagliardo semi-norms $\dot{W}^{s,p}$ by setting for compact operators $\op$
	\begin{equation}\label{eq:Sobolev_frac_def}
		\Nrm{\op}{\dot{\cW}^{s,p}}^p :=  \gamma_{s,p}\, h^d\intdd \frac{\Tr{\n{\sfT_z\op - \op}^p}}{\n{z}^{2d+sp}}\d z.
	\end{equation}
	Here, $\sfT_z$ denotes the quantum phase space translation operator defined for $z_0 = (x_0,\xi_0)\in\Rdd$ by
	\begin{equation}\label{eq:translation_def}
		\sfT_{z_0}\op = e^{i\(\xi_0\cdot x -x_0\cdot\opp\)/\hbar} \op\, e^{i\(x_0\cdot\opp - \xi_0\cdot x\)/\hbar}.
	\end{equation}
	One can then define higher order Sobolev spaces by defining for $s\in\R_+\setminus\N$, $\Nrm{\op}{\dot{\cW}^{s,p}} = \Nrm{\Dh^n\op}{\dot{\cW}^{s-n,p}}$ with $n = \floor{s}$. In the particular case $p=\infty$, the limiting norm corresponding to \eqref{eq:Sobolev_frac_def} gives the quantum analogue of the H\"older norms, that is
	\begin{equation}\label{eq:Holder_norms}
		\Nrm{\op}{\dot{\cW}^{s,\infty}} := \sup_{z\in\Rdd\setminus\{0\}}\frac{\Nrm{\sfT_z\op - \op}{\L^\infty}}{\n{z}^{s}}.
	\end{equation}
	
	To define a quantum analogue of the Bessel-type Sobolev spaces, we start by defining the fractional Laplacian for operators. A convenient way is to go through the Wigner transform and Weyl quantization, as in Formula~\eqref{eq:weyl_quantization_gradients}. Hence we define
	\begin{equation}\label{eq:lap_frac_def}
		(-\DDh)^s\op := \op_{(-\Delta_z)^s f_{\op}}.
	\end{equation}
	One can then verify that $\DDh\op = \Dhx\cdot\Dhx\op + \Dhv\cdot\Dhv\op$ and get an analogue of the classical integral formulas such as Equation~\eqref{eq:lap_frac_integral_cl}. If $s\in(0,1)$, then
	\begin{equation}\label{eq:lap_frac_integral}
		(-\DDh)^\frac{s}{2}\op = c_{2d,s} \intdd \frac{\sfT_{z}\op - \op}{\n{z}^{2d+s}} \d z.
	\end{equation}
	We refer to Appendix~\ref{appendix:frac_lap} for more details on this quantum fractional Laplacian. Now, as a quantum analogue of the Bessel-type homogeneous Sobolev norms, one can define for any $s>-2d$,
	\begin{equation*}
		\Nrm{\op}{\dot{\cH}^{s,p}} := \Nrm{(-\DDh)^{s/2}\op}{\L^p}.
	\end{equation*}
	We will use the shortcut notation $\dot{\cH}^{s} = \dot{\cH}^{s,2}$. Since $\sNrm{\op_f}{\L^2} = \Nrm{f}{L^2(\Rdd)}$, one obtains from Remark~\ref{rmk:bessel_vs_sobolev} that for any compact operator $\op$,
	\begin{equation*}
		\Nrm{\op}{\dot{\cW}^{s,2}} = \Nrm{\op}{\dot{\cH}^s}.
	\end{equation*}
	To finish this section, let us comment about the case of a negative order of regularity. A general procedure to define negative Sobolev spaces is to proceed by duality and define for $s>0$ and an operator $\opmu$,
	\begin{equation*}
		\Nrm{\opmu}{\dot{\cW}^{-s,p}} := \sup_{\Nrm{\op}{\dot{\cW}^{s,p'}}\leq 1} h^d \Tr{\opmu\,\op}
	\end{equation*}
	where the supremum is taken over all compact operators $\op$ such that $\Nrm{\op}{\dot{\cW}^{s,p'}}\leq 1$. One can define similarly the non-homogeneous norm $\cW^{-s,p}$ by taking the supremum over all the compact operators satisfying $\Nrm{\op}{\cW^{s,p'}}\leq 1$ instead. Particular cases of these norms where used in \cite{golse_convergence_2021, golse_semiclassical_2021, lafleche_quantum_2023} where they are compared with optimal transport pseudo-norms. For example, it is proved in \cite[Proposition~B.2]{golse_semiclassical_2021} that
	\begin{equation*}
		\d(\op_1,\op_2) := \Nrm{\op_1-\op_2}{\cW^{-1,\infty}}
	\end{equation*}
	is a distance on the set of Hilbert--Schmidt operators which satisfies $\d(\op_1,\op_2) \leq 2^d \sup_{\Nrm{\varphi}{W^{1,\infty}\cap H^1}\leq 1} \intdd \(f_{\op_1}(z)-f_{\op_2}(z)\) \varphi(z)\d z$ and $\d(\op_1,\op_2) \leq \Nrm{\op_1-\op_2}{\L^\infty}$.

\subsection{Quantum Sobolev inequalities}

	We are now able to obtain the analogue of the classical Sobolev inequalities in terms of these Schatten-based Sobolev norms, everything being uniform in $\hbar$.

	\begin{thm}[\textbf{Gagliardo--Sobolev inequalities}]\label{thm:Gag_Sob}
		Let $s\in [0,1]$ and $1\leq p\leq q < \infty$ satisfying Identity~\eqref{eq:exponent_sobolev}. Then there exits a constant $\CSh_{s,p}$ independent of $\hbar$ such that for any compact operator $\op$,
		\begin{equation}\label{eq:Sobolev-Gagliardo}
			\Nrm{\op}{\L^q} \leq \CSh_{s,p} \Nrm{\op}{\dot{\cW}^{s,p}}.
		\end{equation}
		The optimal independent-of-$\hbar$ constant $\CSh_{s,p}$ satisfies
		\begin{align}\label{eq:Gag_Sob_cst_1}
			\CS_{s,p} \leq \CSh_{s,p} &\leq \CS_{s,p} + \frac{\omega_{2d}}{\omega_{2d+1}} & \text{ when } s=1
			\\\label{eq:Gag_Sob_cst_2}
			\CS_{s,p} \leq\CSh_{s,p} &\leq \CS_{s,p} + \tfrac{\theta_{s}}{\(8\pi\)^{s/2}} & \text{ when } p=2
			\\\label{eq:Gag_Sob_cst_frac}
			\CSh_{s,p} &\leq \CS_{s,p} + \frac{\(1-2^{\(s-1\)r}\)^\frac{1}{r}}{\gamma_{s,p}^{1/p}\(p'\)^{d+\frac{s}{2}}} \(\frac{\omega_{2d}}{\omega_{(2d+s)p'}}\)^\frac{1}{p'}  & \text{ when } s\in (0,1)
		\end{align}
		where $r = \max(p,p')$, $\CS_{s,p}$ is the constant of the corresponding classical Sobolev inequality~\eqref{eq:optimal_sobolev_cst} and $\theta_s = \sup_{r\geq 0} \frac{1-e^{-r}}{r^s}$ satisfies $\theta_s \leq 1$. The second inequality in Equation~\eqref{eq:Gag_Sob_cst_2} is also satisfied when $q=2$.
	\end{thm}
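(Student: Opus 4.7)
The plan is to reduce each case of the quantum Sobolev inequality to its classical counterpart~\eqref{eq:optimal_sobolev_cst} via a well-chosen regularization. In every regime the operator is split as $\op = \op_\eps^{\mathrm{reg}} + (\op - \op_\eps^{\mathrm{reg}})$ with a parameter $\eps>0$: the regular piece is reduced to classical Sobolev through the Weyl/Wigner correspondence and the semiclassical Young inequality provided earlier in the paper, while the remainder is bounded by the $\dot{\cW}^{s,p}$ or $\dot{\cH}^{s,p}$ semi-norm of $\op$, and optimization in $\eps$ yields the explicit correction terms.

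For the case $s=1$ (Inequality~\eqref{eq:Gag_Sob_cst_1}), I would take $\op_\eps^{\mathrm{reg}} = \op*\varphi_\eps$ with the semiclassical convolution $\op * \varphi := \intdd \varphi(z)\,\sfT_z\op\,dz$ and $\varphi_\eps$ a normalized indicator of the ball of radius $\eps$, whose first absolute moment produces the factor $\omega_{2d}/\omega_{2d+1}$. Since the Wigner transform of $\op*\varphi_\eps$ is the classical convolution $f_\op * \varphi_\eps$, classical Sobolev applied to this smooth symbol combined with the semiclassical Young inequality yields the leading constant $\CS_{1,p}$ on $\op*\varphi_\eps$. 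The remainder is controlled by
\begin{equation*}
\Nrm{\op - \op * \varphi_\eps}{\L^p} \leq \intdd \varphi_\eps(z)\,\Nrm{\sfT_z\op - \op}{\L^p}\,dz \leq \Nrm{\Dh\op}{\L^p} \intdd |z|\,\varphi_\eps(z)\,dz,
\end{equation*}
and combining the two contributions yields~\eqref{eq:Gag_Sob_cst_1}.

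For $p=2$ (Inequality~\eqref{eq:Gag_Sob_cst_2}) I would use the quantum heat semigroup $e^{t\DDh}$ instead, because the isometry $\Nrm{\op}{\L^2} = \Nrm{f_\op}{L^2}$ allows a clean transfer at $p=2$. Splitting $\op = e^{t\DDh}\op + (\op - e^{t\DDh}\op)$, the regularized piece has Wigner symbol $e^{t\Delta_z}f_\op$ and classical Sobolev supplies the main term $\CS_{s,2}$, while the remainder is estimated by functional calculus through the scalar bound $1-e^{-r} \leq \theta_s\,r^s$, giving $\Nrm{\op - e^{t\DDh}\op}{\L^2} \leq \theta_s\,t^{s/2}\,\Nrm{(-\DDh)^{s/2}\op}{\L^2}$; the factor $(8\pi)^{-s/2}$ then arises from the heat kernel normalization on $\Rdd$, and the $q=2$ case follows by duality after minimizing in $t$. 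For the fractional range $s\in(0,1)$ (Inequality~\eqref{eq:Gag_Sob_cst_frac}) I would work directly from the Gagliardo integral~\eqref{eq:Sobolev_frac_def}, splitting phase space into $\{|z|\leq R\}$ and $\{|z|>R\}$: the inner region bounds the remainder $\Nrm{\op - \op * \varphi_R}{\L^p}$ through Hölder's inequality against the Gagliardo weight and produces the factor $(\omega_{2d}/\omega_{(2d+s)p'})^{1/p'}$, while the outer region is absorbed after optimizing over $R$ with $r = \max(p,p')$, the exponent $r$ being imposed by the dyadic decomposition of annuli needed to reach $\gamma_{s,p}^{1/p}$.

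The main obstacle I expect is obtaining exactly the classical Sobolev constant $\CS_{s,p}$ as the leading term, rather than merely some constant multiple of it. This requires that the transfer between the Schatten $\L^q$-norm of the regularized operator and the Lebesgue $L^q$-norm of its Wigner transform become an equality in the limit $\eps \to 0$. For $p=2$ this is automatic thanks to the Plancherel-type identity; for other $p$ one must combine the semiclassical Young inequality of the paper with the observation that mollifiers of $L^1$-mass one produce operators whose $\L^q$-norm approaches the classical $L^q$-norm of the smoothed symbol as the smoothing scale vanishes, and then carefully track the $\eps$-optimization to recover the precise powers of $p'$ and the $\omega$-ratios appearing in the stated constants.
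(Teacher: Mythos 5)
Your overall blueprint (regularize, run the classical Sobolev inequality on a symbol, control the remainder by the quantum Sobolev seminorm) is indeed the paper's strategy, but two of your key steps do not hold as stated. The central gap is the main-term transfer: you never justify $\Nrm{\varphi_\eps\star\op}{\L^q}\leq\Nrm{f_{\op}*\varphi_\eps}{L^q}$ for $q\neq 2$, and it is not true for a general mollifier at a general scale, since the Wigner transform does not control Schatten norms; nor can the semiclassical Young inequality substitute for it, because Young returns $\Nrm{\varphi_\eps}{L^m}\Nrm{\op}{\L^p}$, i.e.\ a Schatten norm of $\op$ again, which is circular. The correct mechanism in the paper is not a limit $\eps\to 0$ (that limit just gives back $\op$, with no comparison of norms): it is that smoothing by the Gaussian $g_h$ at the \emph{fixed} coherent-state scale $\sqrt{\hbar}$ turns the smoothed operator into a T\"oplitz/Wick quantization, $g_{2h}\star\op=\ttildop=\tildop_{\tilde f_{\op}}$, so that the Berezin--Lieb type bounds \eqref{eq:Toplitz_bound} and \eqref{eq:Husimi_vs_op} provide \emph{both} transfers (Schatten to Lebesgue for the main term, Lebesgue to Schatten for the $\dot W^{s,p}$ seminorm of the Husimi function) with constant exactly $1$, which is what produces the sharp leading constant $\CS_{s,p}$. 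Your normalized indicator with free $\eps$ has none of this structure, and your closing claim that ``mollifiers of $L^1$-mass one produce operators whose $\L^q$-norm approaches the classical $L^q$-norm of the smoothed symbol'' is precisely the unproved (and false) step. The same objection applies to the heat semigroup with a free time $t$: the argument only closes when $t$ is pinned to the semiclassical scale, at which point it coincides with the paper's Gaussian convolution.

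The error bookkeeping also needs repair. The splitting requires the remainder in $\L^q$, not $\L^p$; the paper bounds $\Nrm{\ttildop-\op}{\L^q}$ by a Besov norm based on $\L^q$ carrying an explicit factor $h^{s/2}$ from the Gaussian moments, and then uses the Schatten monotonicity $h^{s/2}\Nrm{\cdot}{\L^q}\leq\Nrm{\cdot}{\L^p}$ (valid exactly because of \eqref{eq:exponent_sobolev}) to land on an $\L^p$-based quantity uniformly in $\hbar$. This again forces the smoothing scale to be $\sqrt{\hbar}$: optimizing a free $\eps$ cannot remove the $\hbar$-dependence. For $s\in(0,1)$, the constant in \eqref{eq:Gag_Sob_cst_frac} with $r=\max(p,p')$ does not come from a dyadic decomposition of annuli, but from passing through the quantum Besov norm $\dot{\cB}^s_{p,r}$ and the Clarkson--McCarthy-based comparison of Proposition~\ref{prop:comparison_besov_sobolev_frac}; your sketch has no substitute for this step. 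Finally, the theorem also asserts the lower bound $\CS_{s,p}\leq\CSh_{s,p}$ when $s=1$ or $p=2$, which your proposal never addresses (the paper proves it by testing with Wick quantizations of the classical extremizers and letting $h\to 0$), and the $q=2$ case is proved directly via the Wigner isometry and \eqref{eq:Husimi_vs_op}, not ``by duality.''
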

	
	In the case when $s=1$, applying the above theorem (together with Proposition~\ref{prop:schatten_norm_vec_valued_ineq}) to $\op_\lambda(x,y) = \lambda^d \op(\lambda\,x,\lambda\,y)$ and optimizing with respect to $\lambda$ implies the following result.
	\begin{cor}[\textbf{Sobolev inequalities}]
		If $1 \leq p\leq q<\infty$ satisfy $\frac{1}{2d} = \frac{1}{p} - \frac{1}{q}$ then for any compact operator $\op$,
		\begin{equation}\label{eq:Sobolev}
			\Nrm{\op}{\L^q} \leq 2^{1/r}\,\CSh_{1,p} \Nrm{\Dhx\op}{\L^p}^{1/2} \Nrm{\Dhv\op}{\L^p}^{1/2} \leq \CSh_{1,p} \(\Nrm{\Dhx\op}{\L^p}^r + \Nrm{\Dhv\op}{\L^p}^r\)^\frac{1}{r}
		\end{equation}
		where $r = \min(p,2)$.
	\end{cor}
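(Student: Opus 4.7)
The plan is to apply the Gagliardo--Sobolev inequality of Theorem~\ref{thm:Gag_Sob} with $s=1$ to a one-parameter rescaling $\op_\lambda$ of $\op$, and then optimize in $\lambda>0$, in exact analogy with the classical derivation of Gagliardo--Nirenberg from Sobolev. The additional technical input will be the vector-valued Schatten estimate of Proposition~\ref{prop:schatten_norm_vec_valued_ineq}, which is what ultimately forces the exponent $r=\min(p,2)$.

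First I would observe that the operator $\op_\lambda$ with integral kernel $\lambda^d\op(\lambda x,\lambda y)$ is nothing but the unitary conjugate $\sfU_\lambda\op\,\sfU_\lambda^*$ of $\op$ by the $L^2$-dilation $\sfU_\lambda f(x):=\lambda^{d/2}f(\lambda x)$. This has two consequences: Schatten norms are unitarily invariant, so $\Nrm{\op_\lambda}{\L^q}=\Nrm{\op}{\L^q}$, while the conjugation identities $\sfU_\lambda^*\nabla\sfU_\lambda=\lambda\nabla$ and $\sfU_\lambda^*\,x\,\sfU_\lambda=\lambda^{-1}x$ combined with the commutator definitions~\eqref{eq:quantum_gradients} yield
$$\Nrm{\Dhx\op_\lambda}{\L^p}=\lambda\,\Nrm{\Dhx\op}{\L^p},\qquad \Nrm{\Dhv\op_\lambda}{\L^p}=\lambda^{-1}\Nrm{\Dhv\op}{\L^p}.$$

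Next I would apply Theorem~\ref{thm:Gag_Sob} at $s=1$ to $\op_\lambda$ and use Proposition~\ref{prop:schatten_norm_vec_valued_ineq} to bound $\Nrm{\Dh\op_\lambda}{\L^p}$ by $\bigl(\Nrm{\Dhx\op_\lambda}{\L^p}^r+\Nrm{\Dhv\op_\lambda}{\L^p}^r\bigr)^{1/r}$ with $r=\min(p,2)$. Substituting the scaling identities above gives, with $A:=\Nrm{\Dhx\op}{\L^p}$ and $B:=\Nrm{\Dhv\op}{\L^p}$,
$$\Nrm{\op}{\L^q}\leq\CSh_{1,p}\bigl(\lambda^r A^r+\lambda^{-r}B^r\bigr)^{1/r}.$$

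Finally, I would minimize the right-hand side in $\lambda>0$: elementary calculus gives the optimal choice $\lambda=(B/A)^{1/2}$ at which the bracket equals $2(AB)^{r/2}$, producing the first inequality with the constant $2^{1/r}\CSh_{1,p}$; the second inequality then follows directly from the AM--GM estimate $2(AB)^{r/2}\leq A^r+B^r$. Degenerate configurations in which $A$ or $B$ vanishes are handled by letting $\lambda\to0$ or $\lambda\to\infty$, which forces $\op=0$ and makes both inequalities trivial. I do not expect any genuine analytic obstacle in this argument: the only conceptual point is to identify the correct exponent $r$ coming from the vector-valued Schatten estimate, while the rest reduces to a one-parameter optimization.
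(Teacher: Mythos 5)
Your proposal is correct and follows exactly the paper's route: the paper proves this corollary by applying Theorem~\ref{thm:Gag_Sob} at $s=1$ (together with Proposition~\ref{prop:schatten_norm_vec_valued_ineq}) to the rescaled operator $\op_\lambda(x,y)=\lambda^d\op(\lambda x,\lambda y)$ and optimizing in $\lambda$, which is precisely your argument. Your write-up merely fills in the details (unitary dilation conjugation, the scaling of $\Dhx$ and $\Dhv$, the choice $\lambda=(B/A)^{1/2}$, and AM--GM), all of which are accurate.
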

	
	As claimed in the introduction, these inequalities provide an uncertainty principle for the skew information defined in Equation~\eqref{eq:skew}. Taking $p=2$ in the above inequalities applied to $\sqrt{\op}$ indeed yields the following.
	\begin{cor}[\textbf{Uncertainty for the Wigner--Yanase information}]
		For any compact operator $\op$ acting on $L^2(\Rd)$ with $d\geq 2$, it holds
		\begin{equation*}
			\sqrt{I_x(\op) \, I_{\opp}(\op)} \geq \frac{\hbar}{8\pi \,(\CSh_{1,2})^2} \Nrm{\op}{\frac{d}{d-1}}
		\end{equation*}
		where $\CS_{1,2} \leq\CSh_{1,2} \leq \CS_{1,2} + \tfrac{1}{\sqrt{8\pi}}$ with $(\CS_{1,2})^2 = \(4\pi\)^{-1/(2d)}\frac{\Gamma(d+1/2)^{1/d}}{d \(d-1\)\pi}$.
	\end{cor}
	
	It is interesting to notice that for $d$ large, $\hbar/(8\pi \,(\CS_{1,2})^2) \Nrm{\op}{\frac{d}{d-1}} \sim \frac{e}{4} \,\frac{d\,\hbar}{2} \Nrm{\op}{1}$, which has the same order in $d$ as lower bound in the classical Heisenberg inequality
	\begin{equation*}
		\sigma_x(\op)\,\sigma_{\opp}(\op) \geq \frac{d\,\hbar}{2} \Nrm{\op}{1}.
	\end{equation*}
	We conjecture that the same asymptotic behavior holds true for $\CSh_{1,2}$. 
	
	\begin{remark}[Operator norm of the Weyl quantization]
		In the case when $q=\infty$, the classical Sobolev embedding breaks, but a proof similar to the proof of Theorem~\ref{thm:Gag_Sob} yields the following interesting bound for the operator norm of the Weyl quantization~\eqref{eq:Weyl_def_0} in terms of its symbol
		\begin{equation*}
			\sNrm{\op_f}{\L^\infty} \leq \Nrm{f}{L^\infty(\Rdd)} + C \Nrm{f}{\dot{B}^d_{2,\infty}(\Rdd)},
		\end{equation*}
		where the classical homogeneous Besov norms can be defined by the formula
		\begin{equation*}
			\Nrm{f}{\dot B^s_{p,r}(\Rdd)} = \Nrm{\frac{\Nrm{f(\cdot-2z) - 2\, f(\cdot-z) + f}{L^p(\Rdd)}}{\n{z}^{s+2d/r}}}{L^r_z(\Rdd)}
		\end{equation*}
		when $s\in(0,1]$ and by $\Nrm{f}{\dot B^s_{p,r}} = \Nrm{\nabla^nf}{\dot B^\alpha_{p,r}}$ with $n = \floor{s}$ and $\alpha = s-n$ if $s> 1$ (see e.g. \cite{bergh_interpolation_1976}). This is a slight improvement, in terms of Besov spaces, with respect to the classical bound that follows from Formula~\eqref{eq:Weyl_def_0}
		\begin{equation*}
			\sNrm{\op_f}{\L^\infty} \leq \Nrm{\F{f}}{L^1(\Rdd)} \leq C \Nrm{f}{\dot{B}^d_{2,1}(\Rdd)}
		\end{equation*}
		since $\dot{B}^d_{2,1}(\Rdd) \subset L^\infty(\Rdd)\cap \dot{B}^d_{2,\infty}(\Rdd)$. This complements works such as~\cite{karlovich_algebras_2007} about the boundedness of operators with discontinuous symbol. From the above theorem with $p=2$ and $q<\infty$, one gets the following bound for the Schatten norm of the Weyl quantization
		\begin{equation*}
			\sNrm{\op_f}{\L^q} \leq \CSh_{s,2} \Nrm{f}{\dot{H}^s(\Rdd)} \qquad \text{ with } \frac{s}{2d} = \frac{1}{2} - \frac{1}{q}.
		\end{equation*}
		One can also compare the above results with what is obtained using the Calderòn--Vaillancourt theorem together with the fact that $\sNrm{\op_f}{\L^2} = \Nrm{f}{L^2(\Rdd)}$. If $n = \floor{d/2}+1$, by an improved version of the Calderòn--Vaillancourt theorem for the Weyl quantization proved by Boulkhemair~\cite[Theorem~1.2]{boulkhemair_l2_1999}, the following inequality holds \begin{equation*}
			\sNrm{\op_f}{\L^\infty} \leq C \Nrm{f}{W^{2n,\infty}(\R^{2d})}.
		\end{equation*}
		This latter inequality requires at least $d$ derivatives in $L^\infty(\Rdd)$ instead of $L^2(\Rdd)$, which is a stronger requirement locally but does not require decay at infinity.
	\end{remark}
	
	As in the classical case, there is an analogue of the quantum Sobolev inequality in the case of the Bessel--Sobolev spaces whenever $p > 1$. One can also estimate the constant in this case, however we do not give such estimates in the general case, as the proof given here uses interpolation theory and equivalent norms, which makes much more involved the computations of the constants. It is an interesting question to know if it is possible to get a more elementary proof.

	\begin{thm}[\textbf{Bessel--Sobolev inequalities}]\label{thm:Bes_Sob}
		Let $s\in [0,1]$ and $1< p\leq q < \infty$ satisfying~\eqref{eq:exponent_sobolev}. Then there exist a constant $\CBh_{s,p}$ such that for any compact operator~$\op$,
		\begin{equation}\label{eq:Sobolev-Bessel}
			\Nrm{\op}{\L^q} \leq \CBh_{s,p} \Nrm{\op}{\dot{\cH}^{s,p}}.
		\end{equation}
		When $p=2$ then $\CBh_{s,p} = \CSh_{s,p}$. When $q=2$ then $\CBh_{s,p} \leq \CS_{s,p} + \frac{\theta_s}{(8\pi)^{s/2}}$. More generally, when $p \leq 2 \leq q$, a bound on the constant is given in Proposition~\ref{prop:Bessel-sobolev_cst}.
	\end{thm}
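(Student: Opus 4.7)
The plan is to derive Theorem~\ref{thm:Bes_Sob} from Theorem~\ref{thm:Gag_Sob} through three successive reductions: (i) matching the Slobodeckij and Bessel norms when $p=2$; (ii) Schatten duality for the $q=2$ endpoint; (iii) complex interpolation along the Schatten scale to fill the band $p\leq 2\leq q$; and finally a quantum Hardy--Littlewood--Sobolev argument for the remaining range. This is why the statement splits between explicit constants in the three sub-cases and mere existence otherwise.

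Case $p=2$: since Weyl quantization is an isometry $L^{2}(\Rdd)\to\L^{2}$ and $(-\DDh)^{s/2}$ is defined through it by \eqref{eq:lap_frac_def}, the identity of Remark~\ref{rmk:bessel_vs_sobolev} lifts to $\Nrm{\op}{\dot{\cW}^{s,2}}=\Nrm{\op}{\dot{\cH}^{s}}$, so Theorem~\ref{thm:Gag_Sob} gives $\CBh_{s,2}=\CSh_{s,2}$ directly. Case $q=2$: the above reads as boundedness of the map $(-\DDh)^{-s/2}:\L^{2}\to\L^{q_{*}}$, with $q_{*}=\tfrac{2d}{d-s}$, with operator norm $\leq\CSh_{s,2}$. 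Being the conjugate of the self-adjoint operator $(-\Delta_z)^{-s/2}$ on $L^{2}(\Rdd)$ by the unitary Wigner transform, $(-\DDh)^{-s/2}$ is self-adjoint for the trace pairing $(\op,\opmu)\mapsto h^{d}\Tr{\op^{*}\opmu}$. The Schatten duality $(\L^{p})^{*}=\L^{p'}$ therefore yields boundedness $\L^{(q_{*})'}\to\L^{2}$ with the same norm, i.e.\ the inequality for $p=\tfrac{2d}{d+s}$ with constant bounded by $\CSh_{s,2}\leq\CS_{s,2}+\theta_{s}/(8\pi)^{s/2}$.

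Case $p\leq 2\leq q$: I apply Calder\'on's complex interpolation to the single operator $(-\DDh)^{-s/2}$ between the two previous endpoints. Schatten classes satisfy $[\L^{p_{0}},\L^{p_{1}}]_{\theta}=\L^{p_{\theta}}$, and the relation $1/p_{\theta}-1/q_{\theta}=s/(2d)$ is preserved along the segment, so one obtains $\Nrm{\op}{\L^{q_{\theta}}}\leq C_{\theta}\Nrm{\op}{\dot{\cH}^{s,p_{\theta}}}$ with $C_{\theta}$ controlled by the two endpoint constants, as recorded in Proposition~\ref{prop:Bessel-sobolev_cst}. For the remaining range, where $p$ and $q$ lie on the same side of $2$, I extend \eqref{eq:lap_frac_integral} to negative orders, identifying $(-\DDh)^{-s/2}$ with the semiclassical convolution against the Riesz kernel $c_{2d,-s}\,|z|^{-(2d-s)}$; the quantum Hardy--Littlewood--Sobolev inequality developed as an intermediate tool in the paper then supplies the $\L^{p}\to\L^{q}$ bound for every $1<p\leq q<\infty$ satisfying \eqref{eq:exponent_sobolev}.

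The main obstacle lies in this last step together with the norm equivalence $\dot{\cW}^{s,p}\sim\dot{\cH}^{s,p}$ outside the Hilbertian case $p=2$: such equivalences rest on quantum Mihlin-type multiplier bounds obtained by vector-valued interpolation, and these consistently lose track of sharp constants. This is why only the three special sub-cases come with explicit numerical estimates, and why, as noted in the text just before the theorem, obtaining a more elementary proof yielding explicit constants in full generality is left as an open question.
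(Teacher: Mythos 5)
Your proposal is correct in substance, but for two of the three sub-cases it takes a genuinely different route from the paper. The existence part for all $1<p\leq q<\infty$ is the same as the paper's: invert $(-\DDh)^{s/2}$ by semiclassically convolving with the Riesz kernel $c_{2d,-s}\n{z}^{-(2d-s)}\in L^{\frac{2d}{2d-s},\infty}$ and apply the quantum Hardy--Littlewood--Sobolev inequality, and the $p=2$ constant is likewise read off from Theorem~\ref{thm:Gag_Sob} via $\Nrm{\op}{\dot{\cW}^{s,2}}=\Nrm{\op}{\dot{\cH}^s}$. Where you diverge: for $q=2$ the paper argues directly (Lemma~\ref{lem:Sobolev_02}), decomposing $f_{\op}=\tilde f_{\op}+(f_{\op}-\tilde f_{\op})$, using the Husimi bound~\eqref{eq:Husimi_vs_op} and the Gaussian multiplier estimate, which gives $\CB_{s,p}+\theta_s(8\pi)^{-s/2}$ with no duality; you instead dualize the $p=2$ case through the trace pairing, which is legitimate (the multiplier is real, the Wigner transform is unitary, and $(\L^{q})^*=\L^{q'}$ with the $\hbar$-scaling intact), but note that to land on the stated constant you still need the classical fact that the sharp Bessel--Sobolev constant is self-dual, $\CB_{s,\frac{2d}{d+s}}=\CB_{s,2}=\CS_{s,2}$, which you should make explicit. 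For the band $p\leq2\leq q$ the paper composes the two basis cases through $\L^2$ with \emph{split} regularity, $\Nrm{\op}{\L^q}\leq\CBh_{\tilde s,2}\Nrm{(-\DDh)^{\tilde s/2}\op}{\L^2}\leq\CBh_{\tilde s,2}\,\CBh_{s-\tilde s,p}\Nrm{(-\DDh)^{s/2}\op}{\L^p}$, which is exactly the product bound of Proposition~\ref{prop:Bessel-sobolev_cst}; you interpolate the single operator $(-\DDh)^{-s/2}$ at \emph{fixed} order $s$ between the endpoints $\L^2\to\L^{q_*}$ and $\L^{q_*'}\to\L^2$, which is valid (the exponent relation is affine along the segment and every admissible pair in the band lies on it) and yields the geometric mean of the two endpoint constants --- arguably a cleaner and no worse bound --- but it is \emph{not} the bound recorded in Proposition~\ref{prop:Bessel-sobolev_cst}, so your sentence attributing your constant to that proposition is inaccurate, and proving the theorem's last sentence verbatim would still require the paper's composition argument. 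Finally, your closing remarks about needing the equivalence $\dot{\cW}^{s,p}\sim\dot{\cH}^{s,p}$ and Mihlin-type multiplier bounds are a red herring: nothing in Theorem~\ref{thm:Bes_Sob} or in your own argument uses that equivalence, and the paper leaves it as a conjecture precisely because it is not needed here.
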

	
	\begin{remark}
		For the Bessel--Sobolev spaces or the Sobolev spaces of integer regularity, the analogue of Inequality~\eqref{eq:regu_shift} holds. More generally, replacing $\op$ by $\Dh^n\op$ or $(-\DDh)^\alpha\op$, it is not difficult to obtain inequalities for higher order Sobolev spaces, and from one Sobolev space to another.
	\end{remark}
	
	Let us also indicate that in the above Sobolev inequalities, the relation~\eqref{eq:exponent_sobolev} always implies that $p < \frac{2d}{s}$. In the case when $p> \frac{2d}{s}$, then a similar proof as the proof of Theorem~\ref{thm:Gag_Sob} gives the analogue of the Morrey--Sobolev inequalities, which are the embedding of Sobolev spaces in H\"older spaces. In the quantum case, this gives the following inequalities in terms of the norms~\eqref{eq:Holder_norms}.
	\begin{thm}[\textbf{Morrey--Sobolev inequalities}]\label{thm:Mor_Sob}
		Let $0<\theta\leq s\leq 1$ and $1\leq p < \infty$ satisfy
		\begin{equation*}
			\frac{1}{p} = \frac{s-\theta}{2d}.
		\end{equation*}
		Then there exists a constant $\CSh_{s,p}$ independent of $\hbar$ such that for any compact operator~$\op$,
		\begin{equation}\label{eq:Sobolev-Morrey}
			\Nrm{\op}{\dot{\cW}^{\theta,\infty}} \leq \CSh_{s,p} \Nrm{\op}{\dot{\cW}^{s,p}}.
		\end{equation}
	\end{thm}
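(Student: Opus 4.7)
The plan is to mimic the strategy used in the proof of Theorem~\ref{thm:Gag_Sob}, splitting the operator $\op$ into a low-frequency and a high-frequency part via the semiclassical convolution introduced earlier in the paper, and estimating $\Nrm{\sfT_z\op-\op}{\L^\infty}$ separately on each piece. Fix $z\in\Rdd$ and set $r = \n{z}$; pick a smooth phase-space approximation of identity concentrated on a ball of radius $r$ and use the semiclassical convolution to write $\op = \op_1 + \op_2$, where $\op_1$ is the low-frequency piece (smooth on scale $r$) and $\op_2 := \op - \op_1$ is the high-frequency remainder.

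For the high-frequency piece, the elementary bound $\Nrm{\sfT_z\op_2 - \op_2}{\L^\infty} \leq 2\Nrm{\op_2}{\L^\infty}$ combined with the quantum Young-type inequality proved earlier in the paper should yield
\begin{equation*}
	\Nrm{\op_2}{\L^\infty} \leq C\, r^\theta\,\Nrm{\op}{\dot{\cW}^{s,p}},
\end{equation*}
the exponent $\theta$ being determined precisely by the scaling relation $\frac{1}{p}=\frac{s-\theta}{2d}$ (the Sobolev loss of regularity from $\dot{\cW}^{s,p}$ to $\L^\infty$ equals the gain provided by cutting off frequencies above scale $r^{-1}$).

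For the low-frequency piece, I would differentiate $t\mapsto \sfT_{tz}\op_1$ using the definition~\eqref{eq:translation_def}, which expresses its derivative as $\sfT_{tz}$ applied to a linear combination of $\Dhx\op_1$ and $\Dhv\op_1$. Integrating from $0$ to $1$ then yields the translation estimate
\begin{equation*}
	\Nrm{\sfT_z\op_1 - \op_1}{\L^\infty} \leq \n{z}\,\Nrm{\Dh\op_1}{\L^\infty},
\end{equation*}
and the smoothness of the convolution kernel at scale $r$ then gives the Bernstein-type bound $\Nrm{\Dh\op_1}{\L^\infty} \leq C\, r^{\theta-1}\Nrm{\op}{\dot{\cW}^{s,p}}$ via a second application of the semiclassical Young inequality. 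Summing the two estimates at $r = \n{z}$ produces the desired $\n{z}^\theta$ scaling, and dividing by $\n{z}^\theta$ and taking the supremum over $z$ yields~\eqref{eq:Sobolev-Morrey}.

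The main obstacle will be the fractional case $s\in(0,1)$, where the $\dot{\cW}^{s,p}$ norm is the Gagliardo-type integral~\eqref{eq:Sobolev_frac_def} rather than a gradient norm, so the Bernstein bound on $\Nrm{\Dh\op_1}{\L^\infty}$ cannot be obtained directly. There one must use the quantum Besov/Littlewood--Paley decomposition advertised in the introduction to decompose both pieces dyadically and convert the scale-$r$ estimates into a controlled sum over dyadic scales, exactly as in the proof of Theorem~\ref{thm:Gag_Sob}. The endpoint case $s=1$, by contrast, should reduce directly to the quantum Young inequality together with the translation identity above.
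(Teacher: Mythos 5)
Your plan is sound, but it takes a genuinely different route from the one the paper intends: right before Theorem~\ref{thm:Mor_Sob} the text states that the result follows by ``a similar proof as the proof of Theorem~\ref{thm:Gag_Sob}'', i.e.\ by coherent-state smoothing at the fixed scale $\sqrt{\hbar}$ --- split $\op=\ttildop+(\op-\ttildop)$ with $\ttildop=g_{2h}\star\op$, apply the \emph{classical} Morrey--Sobolev inequality to the Husimi transform $\tilde f_{\op}$ via the T\"oplitz/Husimi bounds \eqref{eq:Toplitz_bound} and \eqref{eq:Husimi_vs_op}, and absorb the remainder through a Gaussian-moment estimate of the type \eqref{eq:convolution_error_quantum} combined with $\Nrm{\cdot}{\L^\infty}\le h^{-d/p}\Nrm{\cdot}{\L^p}$ and $2d/p=s-\theta$. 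You never invoke the classical inequality: you redo the textbook proof of Morrey's embedding directly on the operator side, cutting at the $z$-dependent scale $r=\n{z}$, using the translation identity \eqref{eq:fundamental_theorem} for the low-frequency piece and the quantum Bernstein/Littlewood--Paley toolbox of Section~\ref{sec:Littlewood-Paley} (Lemma~\ref{lem:bernstein}, Proposition~\ref{prop:besov_littlewood-paley}, Corollary~\ref{cor:besov_inclusions}, together with \eqref{eq:comparison_besov_sobolev_frac}--\eqref{eq:comparison_besov_sobolev}) elsewhere; the exponents close because $0<\theta<1$, which is exactly what $1\le p<\infty$ and $\theta\le s\le 1$ guarantee, so the scheme is implementable with the tools the paper provides, uniformly in $\hbar$. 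Each route buys something: yours is self-contained in the quantum phase space and treats all scales of $\n{z}$ uniformly (the fixed-scale smoothing needs a separate argument when $\n{z}\ll\sqrt{\hbar}$, since there the crude bound $2\Nrm{\op-\ttildop}{\L^\infty}=O(h^{\theta/2})$ is not $O(\n{z}^{\theta})$), whereas the paper's route transfers the sharp classical constant $\CS_{s,p}$ up to an explicit correction, which Bernstein-type arguments lose. One caveat on your sketch: the single-scale claim $\Nrm{\op_2}{\L^\infty}\le C\,r^{\theta}\Nrm{\op}{\dot{\cW}^{s,p}}$ does not follow from one crude application of Young (passing from $\L^p$ to $\L^\infty$ by the Schatten embedding only yields $h^{-d/p}r^{s}$); even for $s=1$ you should first rewrite $\op-\op_1$ through \eqref{eq:fundamental_theorem} and apply Young inside the parameter integral, which converges precisely because $2d/p=1-\theta<1$, while for $s\in(0,1)$ the dyadic summation you mention is indeed needed for both pieces.
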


\subsection{Quantum Besov spaces}\label{sec:besov}

	More generally, one can also consider the quantum analogue of Besov spaces by defining for example for $s\in(0,2)$,
	\begin{equation}\label{eq:quantum_Besov}
		\Nrm{\op}{\dot{\cB}^s_{p,r}} :=  \BNrm{\frac{\sfT_{2z}\op - 2\,\sfT_z\op + \op}{\n{z}^{s+2d/r}}}{L^r_z(\L^p)}.
	\end{equation}
	They are in particular natural when considering the maximal regularity of projection operators~\cite{lafleche_optimal_2023}. Since $\sfT_z$ preserves the Schatten norm, one can replace $\sfT_{2z}\op+\op-2\sfT_z\op$ by $\sfT_z\op+\sfT_{-z}\op-2\op$. As in the classical case, one can find a Littlewood--Paley decomposition of these spaces (see Section~\ref{sec:Littlewood-Paley}), using the semiclassical convolution product presented in Section~\ref{sec:semiclassical_convolution}. This gives an easy generalization of these spaces to other orders, up to the equivalence of norms, and proves the ordering of Besov spaces with respect to the third index (Corollary~\ref{cor:besov_inclusions}), the comparison of Besov spaces of order~$0$ with Schatten norms (Corollary~\ref{cor:besov_vs_schatten}) and Sobolev's inequalities for Besov spaces (Corollary~\ref{cor:besov_Sobolev_ineq}). When $s\in(0,1)$, the norms $\dot{\cW}^{s,p}$ and $\dot{\cB}^s_{p,p}$ are equivalent independently of $\hbar$.
	
	In the particular case when $p=2$, due to the fact that the Wigner transform is an isomorphism from $\L^2$ to $L^2(\Rdd)$, these spaces are just the classical Besov spaces for the Wigner transform, i.e.
	\begin{equation*}
		\Nrm{\op}{\dot\cB^s_{2,r}} = \Nrm{f_{\op}}{\dot B^s_{2,r}(\Rdd)},
	\end{equation*}

	As in the classical case, Besov spaces are intermediate spaces between Sobolev spaces. In particular, the inequality in the next proposition is reminiscent of the possible definition of classical Besov spaces as real interpolation of Sobolev spaces.
	
	\begin{prop}[Interpolation inequality for Besov spaces]\label{prop:comparison_besov_sobolev}
		For any $(s,p,r)\in(0,1)\times[1,\infty]^2$, there exists a constant $C>0$ such that for any $\op\in\L^p$.
		\begin{equation*}
			\tfrac{1}{2}\Nrm{\op}{\dot{\cB}^s_{p,r}} \leq C_s \Nrm{\op}{\L^p}^{1-s} \Nrm{\op}{\dot{\cW}^{1,p}}^s.
		\end{equation*}
		The constant satisfies $C_s \leq 2^{1-s}\(\tfrac{\omega_{2d}}{s\(1-s\)r}\)^{1/r}$.
	\end{prop}
	
	By the same proof, taking $r=p$ and using the definition~\eqref{eq:Sobolev_frac_def} of the $\dot{\cW}^{s,p}$ norm, it yields
	\begin{equation*}
		\Nrm{\op}{\dot{\cW}^{s,p}} \leq C_{s,p} \Nrm{\op}{\L^p}^{1-s} \Nrm{\op}{\dot{\cW}^{1,p}}^s
	\end{equation*}
	where using the fact that $\n{\omega_{-2s}} = \tfrac{2s\(1-s\)}{\pi^s\Gamma(2-s)}$, one obtains that $C_{s,p}^p = \tfrac{2^{\(1-s\)p}\omega_{2d}}{s\(1-s\)p}\,\gamma_{s,p} \leq \frac{\omega_{2d}\,\omega_{p+1}^s}{s^{sp/2}\,\omega_{2d+sp}} $ is uniformly bounded with respect to $s\in[0,1]$.
	
\subsection{Equivalent norms for quantum Sobolev spaces}

\subsubsection{Besov norms and Sobolev norms}

	From the Definition~\eqref{eq:quantum_Besov}, it follows directly from the triangle inequality that $\gamma_{s,p}^{1/p} \Nrm{\op}{\dot{\cB}^s_{p,p}} \leq 2 \Nrm{\op}{\dot{\cW}^{s,p}}$. It is not difficult to get an inequality in the other direction using the well-known formula $\sfT_z-1 = \frac{1}{2}\(\sfT_{2z}-1\) - \frac{1}{2}\(\sfT_{2z}+1-2\sfT_z\)$. This yields
	\begin{equation}\label{eq:comparison_besov_sobolev_frac}
		\(2-2^s\)\Nrm{\op}{\dot{\cW}^{s,p}} \leq \gamma_{s,p}^{1/p} \Nrm{\op}{\dot{\cB}^s_{p,p}} \leq 2 \Nrm{\op}{\dot{\cW}^{s,p}}.
	\end{equation}
	When $s=1$, we can also apply the fundamental theorem of calculus to the Wigner transform $f_{\op}$ of $\op$ defined by Equation~\eqref{eq:Wigner_def}, and then take the Weyl quantization to get 
	\begin{equation}\label{eq:fundamental_theorem}
		\sfT_z\op - \op = z\cdot\int_0^1 \sfT_{\theta z} \Dh\op \d\theta
	\end{equation}
	% if |Dh\op|_\Lp < h^(-c)
	% then |Dh\op|_p < h^(-c-d/p)
	% then |Top - op|_p < |z| h^(-c-d/p)
	% so for any q > p,
	% h^(c+d/p) |Top - op|_q < C |z|
	% so if d/q = c + d/p,
	% 
	from which we deduce that
	\begin{equation}\label{eq:comparison_besov_sobolev}
		\Nrm{\op}{\dot{\cB}^1_{p,\infty}} \leq 2 \Nrm{\op}{\dot{\cW}^{1,p}}.
	\end{equation}
To get better constants in Sobolev inequalities when $s\to 1$, it is useful to obtain an inequality sharper than Inequality~\eqref{eq:comparison_besov_sobolev_frac}, in which the constant $\gamma_{s,p}$ converges to~$0$.
	\begin{prop}\label{prop:comparison_besov_sobolev_frac}
		Let $(s,p,q)\in (0,1)\times[2,\infty]^2$ and $\op\in\dot{\cW}^{s,p}$. Then $\op\in\dot{\cB}^s_{p,p}$ and
		\begin{equation}\label{eq:comparison_besov_sobolev_frac_1}
			\frac{(2^{p'} - 2^{sp'})^{1/p'}}{\gamma_{s,p}^{1/p}} \Nrm{\op}{\dot{\cW}^{s,p}} \leq \Nrm{\op}{\dot{\cB}^s_{p,p}} \leq \(\frac{2^p - 2^{sp}}{\gamma_{s,p}}\)^\frac{1}{p} \Nrm{\op}{\dot{\cW}^{s,p}}.
		\end{equation}
		When $p\in[1,2]$, the inequalities are reversed. In particular, when $p=2$, then
		\begin{equation}\label{eq:comparison_besov_sobolev_frac_2}
			\Nrm{\op}{\dot{\cH}^s}^2 = \Nrm{\op}{\dot{\cW}^{s,2}}^2 = \frac{\gamma_{s,2}}{4 - 4^s} \Nrm{\op}{\dot{\cB}^s_{2,2}}^2.
		\end{equation}
	\end{prop}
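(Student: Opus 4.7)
The strategy I would follow is to lift both sides of the inequality to a single vector-valued $L^p$-space on $\Rdd$, and then read off the two bounds from Clarkson's inequalities in that space. Define
\[ g(z) := \sfT_z\op - \op \qquad\text{and}\qquad h(z) := \sfT_z g(z) = \sfT_{2z}\op - \sfT_z\op, \]
so that $g + h = \sfT_{2z}\op - \op$ and $g - h = -(\sfT_{2z}\op - 2\sfT_z\op + \op)$. Introduce the weighted measure $\d\mu(z) := \gamma_{s,p}\,|z|^{-(2d+sp)}\d z$ and the Bochner-type space $X := L^p(\d\mu;\L^p)$ with norm $\Nrm{F}{X}^p := \int \Nrm{F(z)}{\L^p}^p\,\d\mu(z)$. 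Since $\sfT_z$ is an isometry of $\L^p$, one checks that $\Nrm{g}{X} = \Nrm{h}{X} = \Nrm{\op}{\dot{\cW}^{s,p}}$; the change of variable $w = 2z$ multiplies $\d\mu$ by $2^{sp}$, so $\Nrm{g+h}{X} = 2^s\,\Nrm{\op}{\dot{\cW}^{s,p}}$; and by definition $\Nrm{g-h}{X} = \gamma_{s,p}^{1/p}\,\Nrm{\op}{\dot{\cB}^s_{p,p}}$.

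With this dictionary, both bounds reduce to Clarkson's inequalities in $X$. For $p \geq 2$, the first Clarkson inequality $\Nrm{g+h}{X}^p + \Nrm{g-h}{X}^p \leq 2^{p-1}(\Nrm{g}{X}^p + \Nrm{h}{X}^p)$ becomes $\gamma_{s,p}\,\Nrm{\op}{\dot{\cB}^s_{p,p}}^p \leq (2^p - 2^{sp})\,\Nrm{\op}{\dot{\cW}^{s,p}}^p$, the upper bound of \eqref{eq:comparison_besov_sobolev_frac_1}. The dual Clarkson inequality $\Nrm{g+h}{X}^{p'} + \Nrm{g-h}{X}^{p'} \geq 2(\Nrm{g}{X}^p + \Nrm{h}{X}^p)^{p'-1}$, combined with the identity $p(p'-1) = p'$ that collapses the right-hand side to $2^{p'}\Nrm{\op}{\dot{\cW}^{s,p}}^{p'}$, rearranges into the lower bound $(2^{p'} - 2^{sp'})\Nrm{\op}{\dot{\cW}^{s,p}}^{p'} \leq \gamma_{s,p}^{p'/p}\Nrm{\op}{\dot{\cB}^s_{p,p}}^{p'}$.

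For $p \in [1,2]$ the two Clarkson inequalities reverse in $X$ (equivalently, one invokes Hanner's inequalities), so the two bounds swap roles and produce the reversed inequalities stated in the proposition. At $p = 2$ both Clarkson inequalities collapse to the parallelogram identity, the two bounds coincide, and one recovers the equality \eqref{eq:comparison_besov_sobolev_frac_2}.

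The one nontrivial step, which I would treat carefully, is that Clarkson must be applied in the vector-valued space $X$ rather than pointwise in $\L^p$: applied pointwise in $z$, the dual Clarkson inequality would produce an estimate on $p'$-powers of $\Nrm{\sfT_z\op - \op}{\L^p}$ that cannot be integrated against $|z|^{-(2d+sp)}$ to recover the Sobolev norm. The rescue is that $X$ is itself a non-commutative $L^p$-space, associated to the tracial von Neumann algebra $\B(L^2(\Rd))\otimes L^\infty(\d\mu)$, so both Clarkson inequalities hold in $X$ with the scalar constants, via McCarthy's Schatten version combined with the standard Fubini-type identification of $L^p(\d\mu;\L^p)$.
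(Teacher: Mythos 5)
Your proof is correct and takes essentially the same route as the paper: the same splitting of $\sfT_{2z}\op-\op$ and $\sfT_{2z}\op-2\,\sfT_z\op+\op$ into the sum and difference of $\sfT_z\op-\op$ and $\sfT_{2z}\op-\sfT_z\op$, the same change of variables producing the $2^s$ factor, and the Clarkson--McCarthy inequalities as the engine (the paper proves the more general Proposition~\ref{prop:Besov_differences_2} by applying its two-operator Clarkson--McCarthy inequality, Proposition~\ref{prop:Clarkson-McCarthy}, pointwise in $z$ and then swapping the $\ell^r$ and weighted $L^q$ norms by Jensen, which collapses to your computation when $p=q$). The only point to sharpen is your last step: McCarthy's Schatten-class statement does not literally cover $X=L^p(\mu;\L^p)$, but the dual Clarkson inequality does hold there with the scalar constants, either because $X$ is the noncommutative $L^p$ space of the semifinite algebra $L^\infty(\mu)\,\bar\otimes\,\B(L^2(\Rd))$ (for which Clarkson-type inequalities are known), or by rerunning the paper's complex-interpolation proof of Proposition~\ref{prop:Clarkson-McCarthy} with $\L^p$ replaced by $X$ --- a citation-level repair rather than a gap.
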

	
	\begin{remark}
		These constants are sharper when $s\to 1$. The constant on the right-hand side of Inequality~\eqref{eq:comparison_besov_sobolev_frac_1} converges to $\frac{2^{p+1}\ln(2)\,\omega_{2d+p}}{\omega_{p+1}}$ when $s\to 1$. In particular, when $p=2$, this yields
		\begin{align*} %\label{eq:comparison_besov_sobolev_frac_3}
%			\Nrm{\op}{\dot{\cB}^1_{p,p}}^p \leq \frac{2^{p+1}\ln(2)\,\omega_{2d+p}}{\omega_{p+1}} \Nrm{\op}{\dot{\cW}^{1,p}}^p \text{ when } p \geq 2
%			\\
			\bNrm{(-\DDh)^\frac{1}{2}\op}{\L^2}^2 = \Nrm{\Dh\op}{\L^2}^2 = \frac{d}{\ln(4)\,\omega_{2d}} \Nrm{\op}{\dot{\cB}^1_{2,2}}^2.
		\end{align*}
	\end{remark}
	% 2^(p+1)/p  o_(2d+p)/o_(1+p) 1/(1-s) (1 - e^(-(1-s)p ln 2))
	% ~ 2^(p+1)/p  o_(2d+p)/o_(1+p) ln 2
	
\subsubsection{Commutators with complex exponentials}

	In \cite{benedikter_mean-field_2014, benedikter_mean-field_2016-1, cardenas_effective_2023, marcantoni_dynamics_2023}, instead of assumptions on the commutator $\Dhv\op = \com{\frac{x}{i\hbar},\op}$, one can find an assumption of the form
	\begin{equation*}
		\Nrm{\com{e^{ix\cdot \xi},\op}}{\L^p} \leq C \n{\xi} \hbar
	\end{equation*}
	for some constant $C$ independent of $\hbar$. As the next proposition shows, this is actually equivalent to the assumption of boundedness of $\Nrm{\Dhv\op}{\L^p}$ uniformly in $\hbar$.
	\begin{prop}\label{prop:commut_exponential}
		Let $p\in[1,\infty]$. Then for any $\op\in\cW^{1,p}$, it holds
		\begin{equation}\label{eq:commut_exponential}
			\frac{1}{\sqrt{d}} \Nrm{\Dhv\op}{\L^p} \leq \sup_{\xi\in\R^d\setminus\{0\}} \frac{\Nrm{\com{e^{i\,\xi\cdot x}, \op}}{\L^p}}{\hbar\n{\xi}} \leq \Nrm{\Dhv\op}{\L^p}.
		\end{equation}
		More precisely, for any $\jj\in\set{1,\dots,d}$, it holds
		\begin{equation}\label{eq:commut_exponential_j}
			\Nrm{\Dhvj{\op}}{\L^p} = \sup_{\xi_\jj\neq 0}\frac{\Nrm{\sfT_{\xi_\jj}\op - \op}{\L^p}}{\n{\xi_\jj}} = \sup_{\xi_\jj\neq 0}\frac{\Nrm{\com{e^{i\,\xi_\jj\,x_\jj}, \op}}{\L^p}}{\hbar\n{\xi_\jj}}.
		\end{equation}
	\end{prop}
	
\subsubsection{Boundedness of Riesz transforms in the quantum phase space}

	As indicated in Remark~\ref{rmk:bessel_vs_sobolev}, the Bessel potential spaces and the Sobolev spaces coincide in the case of an integer order of regularity. This can equivalently be stated in terms of the boundedness of the Riesz transform in Lebesgue spaces, and more generally the boundedness of integral operators or Fourier multipliers. In the non-commutative case, similar inequalities exist, see for example~\cite{caspers_schur_2015, junge_noncommutative_2018, conde-alonso_schur_2022} and the references therein. In particular, it follows from \cite[Theorem~A]{conde-alonso_schur_2022} that the Schur multiplier $(-\DDh_{\xi})^{-1/2}\,\Dh_\xi$ which acts on an operator $\op$ through the formula
	\begin{equation*}
		\((-\DDh_{\xi})^{-1/2}\,\Dh_{\xi}\op\)(x,y) = -i\,\frac{x-y}{\n{x-y}}\, \op(x,y)
	\end{equation*}
	is bounded on $\L^p$ for $1<p<\infty$. From this result, we deduce in particular that the quantities $\Nrm{\Dhv{\op}}{\L^p}$ and $\Nrm{(-\DDh_{\xi})^{1/2}\op}{\L^p}$ are equivalent semi-norms. By conjugation by the Fourier transform, the same result holds when exchanging $\xi$ by $x$ as in~\cite[Section~3.2]{chong_l2_2023}, that is $\Nrm{\Dhx{\op}}{\L^p}$ and $\Nrm{(-\DDh_x)^{1/2}\op}{\L^p}$ are also equivalent. In particular, there exists constants $c$ and $C$ independent of $\hbar$ such that
	\begin{equation*}
		c\Nrm{\op}{\dot{\cW}^{1,p}} \leq \Nrm{(-\DDh_x)^{1/2}\op}{\L^p} + \Nrm{(-\DDh_\xi)^{1/2}\op}{\L^p} \leq C \Nrm{\op}{\dot{\cW}^{1,p}}.
	\end{equation*}
	When $p\neq 2$, it is however not clear that these expressions are equivalent to $\Nrm{\op}{\dot{\cH}^{1,p}} = \Nrm{(-\DDh)^{1/2}\op}{\L^p}$ where $(-\DDh)^s$ was defined in Equation~\eqref{eq:lap_frac_def}. Hence, it is not clear that the quantum Bessel and Sobolev norms of order $1$ are equivalent. We conjecture however that this should be true, and more precisely that $(-\DDh)^{-1/2}\,\Dh : \L^p \to (\L^p)^{2d}$ is a bounded operator for any $p\in(1,\infty)$ with norm independent of $\hbar$.

\section{Semiclassical convolution}\label{sec:semiclassical_convolution}

	One of the important tools in our analysis will be the use of a quantum analogue of the convolution, actually already introduced in~\cite{werner_quantum_1984}, that we present in this section.

\subsection{Phase space translations}

	Let us start by recalling some well-known properties of translation operators in quantum mechanics. For any $z_0=(x_0,\xi_0)\in\Rdd$, we define the associated \textit{phase space translation operator} by $\tau_{z_0} = e^{i\(\xi_0\cdot x -x_0\cdot\opp\)/\hbar}$, i.e. $\tau_{z_0}\varphi = e^{i\,\xi_0\cdot\(x -\frac{x_0}{2}\)/\hbar}\, \varphi(x-x_0)$. The translation operators are unitary with $\tau_z^{-1} = \tau_{-z}$. They satisfy a semigroup formula up to a phase factor
	\begin{equation}\label{eq:translations_product}
		\tau_{z+z'} = e^{-i\pi\(x\cdot\xi'-x'\cdot\xi\)/h}\,\tau_{z}\tau_{z'}.
	\end{equation}
	On the operator level, the phase space translation of a density operator $\op$ is defined by Equation~\eqref{eq:translation_def}, which can be written $\sfT_z\op = \tau_{z} \, \op \, \tau_{-z}$. The semigroup relation~\eqref{eq:translations_product} now becomes a true semigroup formula
	\begin{equation*}
		\sfT_{z+z'}\op = \sfT_{z}\sfT_{z'}\op.
	\end{equation*}
	Notice that these operators translate the position and momentum operators through the formulas $\sfT_{z_0} \, x = x - x_0$ and  $\sfT_{z_0} \, \opp = \opp - \xi_0$, or equivalently, if $\opz = (x,\op)$, $\sfT_{z_0} \, \opz = \opz - z_0$. More generally, they translate any Weyl quantization, i.e.
	\begin{equation}\label{eq:Weyl_translation}
		\sfT_{z_0} \, \op_f = \op_{f(\cdot-z_0)}.
	\end{equation}
	
\subsection{Definition of the semiclassical convolution and inequalities}

	As was already introduced by Werner in~\cite{werner_quantum_1984}, we define the semiclassical convolution of a phase space function $f$ and a bounded operator $\op$ as the following operator-valued integral\footnote{See Remark~\ref{rmk:semiclassical_convolution_def} for the more rigorous meaning of this integral.}
	\begin{equation}\label{eq:semiclassical_convolution_def}
		f \star \op = \op \star f := \intdd f(z)\,\sfT_z \op\d z.
	\end{equation}
	One of the important property of this definition for the semiclassical convolution that was proved in~\cite{werner_quantum_1984} is the fact that the analogue of Young's convolution inequality holds. It also allows to better understand when the semiclassical convolution is well defined.
	
	\begin{thm}[\textbf{Semiclassical Young's convolution inequality} \cite{werner_quantum_1984}]\label{thm:semiclassical_convolution}
		Let $(p,q,r)\in[1,\infty]^3$ be such that $1 + \frac{1}{p} = \frac{1}{q} + \frac{1}{r}$. Then there exists $C>0$ independent of $\hbar$ such that for any $f\in L^q$
		\begin{equation}\label{eq:semiclassical_convolution}
			\Nrm{f\star\op}{\L^p} \leq \Nrm{f}{L^q} \Nrm{\op}{\L^r}.
		\end{equation}
		When $r=p$, one can also take $f$ to be a bounded measure, and replace the $L^q$ norm by the total variation norm.
	\end{thm}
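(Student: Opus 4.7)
The plan is to establish two endpoint inequalities and then cover all admissible exponents by bilinear complex interpolation on the Schatten scale. The first endpoint, $\Nrm{f\star\op}{\L^p} \leq \Nrm{f}{L^1}\Nrm{\op}{\L^p}$ for every $p\in[1,\infty]$, I would prove directly from the definition~\eqref{eq:semiclassical_convolution_def} by Minkowski's integral inequality, using that $\sfT_z$ is a unitary conjugation and hence isometric on every $\L^p$. The same argument handles the case where $f$ is replaced by a bounded measure, with the total variation playing the role of the $L^1$ norm.

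The hard part will be the second endpoint, $\Nrm{f\star\op}{\L^\infty} \leq \Nrm{f}{L^\infty}\Nrm{\op}{\L^1}$. I would first reduce to a rank-one operator $\op = |\phi\rangle\langle\psi|$ via the Schmidt decomposition and the triangle inequality for $\Nrm{\cdot}{\L^\infty}$, so that $\sfT_z\op = |\tau_z\phi\rangle\langle\tau_z\psi|$. Testing the operator norm against arbitrary $\chi,\chi'\in L^2$ then yields
$$|\langle\chi,(f\star\op)\chi'\rangle| \leq \Nrm{f}{L^\infty}\int_{\Rdd}|\langle\chi,\tau_z\phi\rangle|\,|\langle\tau_z\psi,\chi'\rangle|\,dz,$$
and Cauchy--Schwarz combined with the Plancherel orthogonality relation for the Schrödinger representation,
$$\int_{\Rdd}|\langle\chi,\tau_z\phi\rangle|^2\,dz = h^d\Nrm{\chi}{L^2}^2\Nrm{\phi}{L^2}^2,$$
would then deliver the bound with sharp constant $1$; the factor $h^d$ is precisely what is required to match the normalization of $\Nrm{\cdot}{\L^1}$ in~\eqref{eq:def_norm}. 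I would justify the orthogonality relation itself by viewing $\xi\mapsto\langle\chi,\tau_{(x,\xi)}\phi\rangle$ as a semiclassical Fourier transform of the product $\overline{\chi(y)}\phi(y-x)$ at fixed $x$, and then applying Plancherel successively in $\xi$ and in $x$.

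Finally, for the interpolation step, endpoint $1$ parameterized over $p_0\in[1,\infty]$ gives the whole family $L^1\times\L^{p_0}\to\L^{p_0}$, while endpoint $2$ gives $L^\infty\times\L^1\to\L^\infty$. Bilinear complex interpolation (Riesz--Thorin--Stein for bilinear maps on Schatten classes) between these two, with parameter $\theta = 1 - 1/q$ and $p_0 = p/q$, produces $L^q\times\L^r\to\L^p$ for every $(p,q,r)$ satisfying $1 + 1/p = 1/q + 1/r$, again with constant $1$. The admissibility condition $p_0\geq 1$ amounts to $p\geq q$, which is automatic for Young exponents, so all cases of the theorem are covered and the constant is $\hbar$-independent (in fact equal to $1$).
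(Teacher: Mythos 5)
Your proof is correct and follows essentially the same route as the paper: the endpoint $L^1\times\L^p\to\L^p$ by Minkowski's inequality and unitarity of $\sfT_z$, the endpoint $L^\infty\times\L^1\to\L^\infty$ via the singular value decomposition, Cauchy--Schwarz and the Plancherel orthogonality relation for $z\mapsto\langle\chi,\tau_z\phi\rangle$ (which is exactly how the paper computes it, as a Fourier transform of $\overline{\chi(y)}\phi(y-x)$), and then bilinear complex interpolation between these two families. The exponent bookkeeping ($\theta=1-1/q$, $p_0=p/q\geq 1$) and the sharp constant $1$ also match the paper's argument.
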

	
	In our case, it will be useful to also have an analogue of the Hardy--Littlewood--Sobolev's inequality in the phase space, as given by the following theorem.
	
	\begin{thm}[\textbf{Semiclassical Hardy--Littlewood--Sobolev's inequality}]\label{thm:quantum_HLS}
		Let $(p,q,r)\in(1,\infty)^3$ be such that $1 + \frac{1}{p} = \frac{1}{q} + \frac{1}{r}$ and $r\leq p$. Then there exists $C>0$ independent of $\hbar$ such that for any $f\in L^{q,\infty}$
		\begin{equation}\label{eq:quantum_HLS}
			\Nrm{f\star\op}{\L^p} \leq C \Nrm{f}{L^{q,\infty}} \Nrm{\op}{\L^r}.
		\end{equation}
	\end{thm}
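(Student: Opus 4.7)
The plan is to derive the strong $\L^p$ bound by applying bilinear real interpolation to the semiclassical Young's inequality (Theorem~\ref{thm:semiclassical_convolution}) at two endpoints, and then to invoke the continuous embedding $\L^{p,r}\hookrightarrow \L^p$, which holds precisely because $r\leq p$. First, I would fix $\op\in\L^r$ and choose two exponents $q_0<q<q_1$ in $(1,\infty)$ close enough to $q$ that the corresponding $p_i$ defined by $1+\tfrac{1}{p_i}=\tfrac{1}{q_i}+\tfrac{1}{r}$ also lie in $(1,\infty)$ with $p_0<p<p_1$. By Theorem~\ref{thm:semiclassical_convolution}, the bilinear operator $T(f,\op):=f\star\op$ then satisfies the two strong-type endpoint estimates $\Nrm{T(f,\op)}{\L^{p_i}}\leq \Nrm{f}{L^{q_i}}\Nrm{\op}{\L^r}$ for $i=0,1$.

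Next, I would apply bilinear Lions--Peetre real interpolation to $T$. Using the identifications $(L^{q_0},L^{q_1})_{\theta,s_1}=L^{q,s_1}$ and $(\L^{p_0},\L^{p_1})_{\theta,s}=\L^{p,s}$ (the semiclassical Schatten--Lorentz space, whose characterization follows from non-commutative $L^p$ interpolation theory and extends uniformly in $\hbar$ because the rescaling $\Nrm{\cdot}{\L^p}=h^{d/p}\Nrm{\cdot}{p}$ commutes with the real interpolation functor), together with the trivial identification $(\L^r,\L^r)_{\theta,s_2}=\L^r$, and choosing Lorentz parameters $s_1=\infty$, $s_2=r$, $s=r$ (compatible with the bilinear relation $\tfrac{1}{s}\leq \tfrac{1}{s_1}+\tfrac{1}{s_2}$), I would obtain
\begin{equation*}
\Nrm{f\star\op}{\L^{p,r}} \leq C \Nrm{f}{L^{q,\infty}} \Nrm{\op}{\L^r}
\end{equation*}
with $C$ independent of $\hbar$. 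Since $r\leq p$, the standard embedding $\L^{p,r}\hookrightarrow\L^{p,p}=\L^p$ with a constant depending only on $p$ and $r$ then closes the estimate.

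The main obstacle is to verify that the bilinear real interpolation step holds with $\hbar$-uniform constants. This rests on the observation that the semiclassical Schatten scale inherits its real interpolation structure from the classical (non-commutative) Schatten scale: the weighted rescaling $\Nrm{\cdot}{\L^p}=h^{d/p}\Nrm{\cdot}{p}$ commutes with the $K$-functional, so that the Lions--Peetre interpolation constants depend only on the exponents $(\theta,s_1,s_2,s,p_0,p_1)$ and not on $\hbar$. An alternative would be to carry out the bilinear interpolation from scratch via a Calderón--Zygmund style decomposition $f = f\,\indic_{|f|\leq \lambda} + f\,\indic_{|f|>\lambda}$ together with the subadditivity of singular values $\nu_{A+B}(\mu)\leq \nu_A(\mu/2)+\nu_B(\mu/2)$, optimizing $\lambda=\lambda(\mu)$; this avoids the abstract interpolation machinery but still requires the non-trivial step of upgrading the resulting weak-type bound $\L^{p,\infty}$ to $\L^{p,r}$ in order to exploit $r\leq p$.
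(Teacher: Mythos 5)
Your overall architecture (interpolate to a Lorentz-type bound, then use $\L^{p,r}\subset\L^p$ for $r\leq p$) is the same as the paper's, and your point about $\hbar$-uniformity is fine: since $\Nrm{\cdot}{\L^p}=h^{d/p}\Nrm{\cdot}{p}$, the rescaling is compatible with the $K$-functional and all interpolation constants depend only on the exponents. The gap is in the interpolation step itself. By freezing the second variable in the single space $\L^r$, your two endpoint estimates $L^{q_i}\times\L^r\to\L^{p_i}$ are really bounds for the \emph{linear} operator $f\mapsto f\star\op$, uniform over $\Nrm{\op}{\L^r}\leq 1$. Real interpolation of a linear map between two strong-type endpoints yields $L^{q,s}\to\L^{p,s}$ with the \emph{same} second index on source and target; taking $s=\infty$ you only reach the weak-type bound $\Nrm{f\star\op}{\L^{p,\infty}}\leq C\Nrm{f}{L^{q,\infty}}\Nrm{\op}{\L^r}$, which does not imply \eqref{eq:quantum_HLS}. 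The relation $\tfrac1s\leq\tfrac1{s_1}+\tfrac1{s_2}$ you invoke is the one from the three-endpoint bilinear theorem (Lemma~\ref{lem:bilinear_interpolation}, hypotheses $E_0\times F_0\to G_0$, $E_0\times F_1\to G_1$, $E_1\times F_0\to G_1$), and your configuration with the degenerate couple $(\L^r,\L^r)$ does not satisfy those hypotheses: they would require, e.g., boundedness $L^{q_0}\times\L^r\to\L^{p_1}$, which is false. For two ``diagonal'' endpoints with one couple degenerate, the valid Lions--Peetre bilinear statement carries a loss (a condition of the type $\tfrac1s\leq\tfrac1{s_1}+\tfrac1{s_2}-1$), which with $s_1=\infty$ forces $s=\infty$. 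The obstruction is not a technicality: already in the commutative setting, $f\mapsto f*g$ is bounded on every $L^{q_i}$ for a fixed mollifier $g\in L^1$ with $\int g=1$, yet $f(z)=\n{z}^{-2d/q}\in L^{q,\infty}(\Rdd)$ has $f*g\notin L^{q,s}$ for any $s<\infty$; so no interpolation argument that keeps the second entry fixed can produce the second-index improvement you need.

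To close the gap you must let both variables move, which is exactly what the paper does: it establishes the three corner estimates $L^1\times\L^1\to\L^1$ and $L^1\times\L^\infty\to\L^\infty$ (both from $\Nrm{f\star\op}{\L^p}\leq\Nrm{f}{L^1}\Nrm{\op}{\L^p}$) together with $L^\infty\times\L^1\to\L^\infty$ (proved via the singular value decomposition and the Plancherel theorem), and then applies Lemma~\ref{lem:bilinear_interpolation} with $1-\theta=1/q$, $1-\theta_2=1/r$, choosing $a=\infty$, $b=c=r$, before concluding with $\L^{p,r}\subset\L^p$. Your intermediate claim $\Nrm{f\star\op}{\L^{p,r}}\leq C\Nrm{f}{L^{q,\infty}}\Nrm{\op}{\L^r}$ is true, but only this genuinely bilinear, O'Neil-type interpolation from the three endpoints delivers it; your fallback Calder\'on--Zygmund decomposition of $f$ runs into the same wall, since, as you acknowledge, it also stops at a weak-type conclusion.
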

	
	\begin{remark}
		When moreover $(p,q,r)\in (1,\infty)^3$, then one more precisely has for any $(a,b,c)\in [1,\infty]^3$ satisfying $\frac{1}{c} = \frac{1}{a} + \frac{1}{b}$
		\begin{equation*}
			\Nrm{f\star\op}{\L^{p,c}} \leq C \Nrm{f}{L^{q,a}} \Nrm{\op}{\L^{r,b}}
		\end{equation*}
		where the spaces $L^{p,q} = L^{p,q}(\Rdd)$ are the Lorentz spaces, which are the spaces obtained by real interpolation of Lebesgue spaces, and the $\L^{p,q}$ spaces are their analogue for operators introduced in~\cite{birman_estimates_1975}.
		% a = nq, b = nr ==> 1/c = 1 = 1/n(1/q+1/r) = 1/n (p+1)/p
		In particular, taking $a = \frac{p+1}{p}\,q$, $b = \frac{p+1}{p}\,r$ and using the inclusions between $\L^{p,a}$ spaces, one gets $\Nrm{f\star\op}{\L^{p,1}} \leq C \Nrm{f}{L^q} \Nrm{\op}{\L^{r}}$, which is slightly stronger than Equation~\eqref{eq:semiclassical_convolution}.
	\end{remark}
	
	\begin{remark}
		As in the classical case, by using the dual definition of the $\L^p$ norm, Inequality~\eqref{eq:quantum_HLS} can be written $h^d \Tr{\opmu\(f\star\op\)} \leq C \Nrm{f}{L^{r,\infty}} \Nrm{\opmu}{\L^p} \Nrm{\op}{\L^q} $ for any $(p,q,r)\in(1,\infty)^3$ such that $\frac{1}{p} + \frac{1}{q} + \frac{1}{r} = 2$. Taking $f(z) = \n{z}^{-a}$ leads by definition of the semiclassical convolution to
		\begin{equation*}
			h^d \tr\intdd \frac{\opmu\,\sfT_z\op}{\n{z}^a}\d z \leq C \Nrm{\opmu}{\L^p} \Nrm{\op}{\L^q},
		\end{equation*}
		whenever $\frac{1}{p} + \frac{1}{q} + \frac{a}{2d} = 2$.
		Because of the use of interpolation techniques, it is however not clear what the value of the constant $C$ is.
	\end{remark}
	
	From Formula~\eqref{eq:Weyl_translation} and the linearity of the Weyl quantization, we deduce that $f\star \op_g = \intdd f(z)\,\op_{g(\cdot-z)}\d z = \op_{f*g}$. In particular, the Wigner transform of the semiclassical convolution is given by the classical convolution through the formula
	\begin{equation*}
		f_{g\star\op} = g * f_{\op}
	\end{equation*}
	and the property of commutation of the classical convolution product yields
	\begin{equation}\label{eq:commutation_convolution}
		f_1\star \op_{f_2} = \op_{f_1*f_2} = \op_{f_1} \star f_2.
	\end{equation}
	The semiclassical convolution has properties close to the classical convolution. It preserves the positivity: if $\op\geq 0$ and $f\geq 0$, then $f\star \op\geq 0$. Taking $f=1$ yields $1\star \op = \op_{1*f_{\op}} = h^d\Tr{\op} \mathbf{1}$ where $\mathbf{1}$ is the identity on $L^2$. Taking $f=\delta_0$ yields also as expected
	\begin{equation}\label{eq:convolution_Dirac}
		\delta_0 \star \op = \sfT_0\op = \op.
	\end{equation}
	From the first identity in Equation~\eqref{eq:commutation_convolution} and the compatibility of the quantum gradients with the Weyl quantization~\eqref{eq:weyl_quantization_gradients}, we also deduce that whenever $\nabla f \in L^1 + L^\infty$ and $\nabla\op$ is trace class
	\begin{equation*}
		\Dh\!\(f\star\op\) = \nabla f\star\op = f\star\Dh\op.
	\end{equation*}

\subsection{Convolution by a Gaussian: Husimi transform and coherent states}

	We define the semiclassical Gaussian by
	\begin{align*}
		g_h(z) = \(2/h\)^d e^{-\n{z}^2/\hbar}.
	\end{align*}
	When $\hbar\to 0$, the convolution by this function is an approximation of the identity. For a function $f=f(z)$, we will denote by
	\begin{equation*}
		\tilde{f}(z) := g_h * f(z).
	\end{equation*}
	This converges to $f$ when $\hbar\to 0$. In the particular case when $f$ is a Wigner transform of an operator $\op$, we obtain the so called Husimi transform (see e.g. \cite[Eq.~(25)]{lions_sur_1993})
	\begin{align*}
		\tilde{f}_{\op} = g_h * f_{\op}
	\end{align*}
	which is a nonnegative function whenever $\op$ is a positive operator.
	
	To get the quantum analogue of the convolution by a Gaussian, we use the above defined semiclassical convolution. We define
	\begin{equation*}
		\tildop := g_h \star \op.
	\end{equation*}
	Similarly as above in the particular case when $\op$ is a Weyl quantization, we get
	\begin{equation*}
		\tildop_f := g_h \star \op_f = \op_{\widetilde{f}} = f \star \op_{g_h}.
	\end{equation*}
	
	We also deduce that the Husimi transform of $\op$ is nothing but the Wigner transform of $\tildop$, i.e. $\tilde{f}_{\op} = f_{\tildop}$. It is well-known (see e.g. \cite{lions_sur_1993}) and not difficult to prove that the Weyl quantization of $g_h$ is a projection of the form $\op_{g_h} = h^{-d} \ket{\psi_h}\bra{\psi_h}$ on the Gaussian coherent state
	\begin{equation*}
		\psi_h(x) = \(\tfrac{2}{h}\)^{d/4} e^{-\n{x}^2/(2\hbar)}.
	\end{equation*}
	Hence,
	\begin{equation*}
		\tildop_f = h^{-d} \intdd f(z) \ket{\tau_{z}\psi_h}\bra{\tau_{z}\psi_h} \d z
	\end{equation*}
	is nothing but what is sometimes called a superposition of coherent sates, a Töplitz operator or the (Anti-)Wick quantization of $f$. Let us mention that links between T\"oplitz operators and convolutions were studied in~\cite{fulsche_correspondence_2020}. As it is a special case of a semiclassical convolution and since $\op_{g_h} \geq 0$, we recover the fact that it maps positive functions to positive operators. Moreover, as a corollary of the semiclassical Young's inequality and the fact that $\Nrm{\op_{g_h}}{\L^1} = 1$, we get the well-known bound
	\begin{equation}\label{eq:Toplitz_bound}
		\Nrm{\tildop_f}{\L^p} \leq \Nrm{f}{L^p(\Rdd)}.
	\end{equation}
	\begin{remark}
		One can more generally prove (see e.g.~\cite[Eq.~(61),~(59)]{lions_sur_1993}) that for any convex function $\Phi$ such that $\Phi(0) = 0$, $h^d \Tr{\Phi\!\(\tildop_f\)} \leq \intdd \Phi(f)$. An analogous identity holds for the Husimi transform (see e.g.~\cite[Eq.~(64)]{lions_sur_1993}), that is $\intdd \Phi(\tilde{f}_{\op}) \leq h^d \Tr{\Phi\!\(\op\)}$. In particular,
		\begin{equation}\label{eq:Husimi_vs_op}
			\Nrm{\tilde{f}_{\op}}{L^p(\Rdd)} \leq \Nrm{\op}{\L^p}.
		\end{equation}
	\end{remark}

\section{Proofs}

	The general method for the proof consists in continuing our work of analogy with the classical case and trying to translate the classical proofs concerning Sobolev spaces to their noncommutative counterpart for Schatten norms of operators. This is the reason why the analogous objects have first to be chosen carefully. However, all proofs do not seem to translate easily to the operator setting.
	
	More precisely, the proof of the classical Bessel--Sobolev inequalities can be deduced from Hardy--Littlewood--Sobolev's inequalities. However, there does not seem to be a simple definition of operator-valued convolution of operators which satisfies an analogue of the Hardy--Littlewood--Sobolev's inequalities for Schatten norms. This is why we will be dealing with semiclassical convolution instead.
	
	For the proof of the Gagliardo--Sobolev inequalities~\eqref{thm:Gag_Sob}, and in particular for the proofs of the Sobolev inequalities with $s=p=1$, we were not able to obtain an analogue of the original proof of Gagliardo and Nirenberg \cite{gagliardo_ulteriori_1959, nirenberg_elliptic_1959} which uses the fundamental theorem of calculus on bounded domains, and we instead use the fact that Lebesgue norms of Husimi transforms are close to Schatten norms with errors controlled by quantum Besov norms.

\subsection{Semiclassical convolution}\label{sec:semiclassical_convolution_proofs}
	
	To prove the semiclassical analogue of the Hardy--Littlewood--Sobolev inequality, we will use the following bilinear interpolation result (see e.g.~\cite[Lemma~28.2]{tartar_introduction_2007} and \cite[10.1]{calderon_intermediate_1964}).
	\begin{lem}\label{lem:bilinear_interpolation}
		Let $B$ be a bilinear operator mapping continuously $E_0\times F_0 \to G_0$, $E_0\times F_1 \to G_1$, and $E_1\times F_0 \to G_1$. Then for every $\(\theta,\theta_2\)\in (0,1)^2$ such that $\theta + \theta_2 < 1$ and any $(a,b,c)\in [1,\infty]^3$ satisfying $\frac{1}{c} = \frac{1}{a} + \frac{1}{b}$, it holds
		\begin{align*}
			\Nrm{B(u,v)}{G_{\theta+\theta_2,c}} &\leq C \Nrm{u}{E_{\theta,a}} \Nrm{v}{F_{\theta_2,b}}
			\\
			\Nrm{B(u,v)}{G_\theta} &\leq M_0^{1-\theta} M_1^{\theta} \,\Nrm{u}{E_{\theta}} \Nrm{v}{F_\theta}
		\end{align*}
		where $X_{\theta,a} = (X_0,X_1)_{\theta,a}$, $X_{\theta} = [X_0,X_1]_{\theta}$ are respectively the real and complex interpolation spaces between $X_0$ and $X_1$, and $M_k$ denotes the norm of $B$ as a bilinear operator from $E_k\times F_k$ to $G_k$.
	\end{lem}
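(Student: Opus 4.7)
The lemma splits into two essentially independent statements, and I would prove them by the two standard interpolation methods.

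For the complex interpolation bound, the plan is to mimic the Riesz--Thorin / Stein argument in a bilinear setting. Given unit vectors $u\in E_\theta=[E_0,E_1]_\theta$ and $v\in F_\theta$, choose analytic families $u_z,v_z$ in the strip $\{0\le\mathrm{Re}\,z\le 1\}$ with $u_0\in E_0$, $u_1\in E_1$, $v_0\in F_0$, $v_1\in F_1$, with boundary norms controlled by $\Nrm{u}{E_\theta}$ and $\Nrm{v}{F_\theta}$, and satisfying $u_\theta=u$, $v_\theta=v$. Set $\Phi(z):=B(u_z,v_z)\in G_0+G_1$ and, for a fixed functional $\zeta\in (G_0+G_1)^*$ with $\zeta$ normalized in the dual of $G_\theta$, apply the three lines lemma to $\langle\zeta,\Phi(z)\rangle$. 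The boundary estimates come from the boundedness of $B$ on $E_0\times F_0\to G_0$ and on $E_1\times F_1\to G_1$ (corresponding to $\mathrm{Re}\,z=0$ and $\mathrm{Re}\,z=1$), yielding constants $M_0$ and $M_1$ respectively. Taking $z=\theta$ and supremizing in $\zeta$ gives the stated inequality with logarithmic convexity constant $M_0^{1-\theta}M_1^\theta$.

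For the real interpolation bound, the plan is to use the $K$-method. Given $u\in E_{\theta,a}$ and $v\in F_{\theta_2,b}$, for each $t>0$ fix near-optimal decompositions $u=u_0(t)+u_1(t)$ and $v=v_0(s)+v_1(s)$ with $s=s(t)$ to be chosen, so that $\Nrm{u_0(t)}{E_0}+t\Nrm{u_1(t)}{E_1}\le 2K(t,u;E_0,E_1)$ and analogously for $v$. Expand
\begin{equation*}
    B(u,v)=B(u_0,v_0)+B(u_0,v_1)+B(u_1,v_0)+B(u_1,v_1);
\end{equation*}
the first term is controlled in $G_0$ and the middle two in $G_1$ by the three hypothesised boundedness properties of $B$. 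Feeding this decomposition into the definition of $K(\tau,B(u,v);G_0,G_1)$ with a judicious choice such as $s=t^{\theta/\theta_2}$ and $\tau$ a matching power of $t$, then integrating against $\tau^{-(\theta+\theta_2)}d\tau/\tau$ in the $L^c$ norm, produces the desired $G_{\theta+\theta_2,c}$ estimate via a Hardy-type inequality whose convergence is guaranteed by the scaling relation $\tfrac{1}{c}=\tfrac{1}{a}+\tfrac{1}{b}$.

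The main obstacle is the fourth term $B(u_1,v_1)$, for which no boundedness hypothesis is available. This is precisely where the condition $\theta+\theta_2<1$ enters: either one chooses a truncated $K$-decomposition so that $u_1(t)=0$ once $t$ exceeds the scale at which $v_1(s(t))$ first appears (exploiting the sub-diagonal constraint to synchronize the supports), or one estimates $\Nrm{B(u_1,v_1)}{G_1}$ by interpolating it against one of the bounded bilinear pieces and absorbing, noting that the resulting weight $t^{-(1-\theta-\theta_2)}$ is integrable near infinity exactly when $\theta+\theta_2<1$. Once this term is eliminated, the remainder of the argument is a routine Fubini/Hardy computation that yields the constant~$C$ depending on $\theta,\theta_2,a,b,c$ and the three operator norms of $B$.
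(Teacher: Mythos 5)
Your complex-interpolation half is fine: the three-lines/Stein argument with an analytic family $z\mapsto B(u_z,v_z)$ and boundary control on $\mathrm{Re}\,z=0,1$ is exactly Calder\'on's bilinear theorem (the reference \cite[10.1]{calderon_intermediate_1964} the lemma points to), and it gives the constant $M_0^{1-\theta}M_1^{\theta}$; note only that it uses boundedness of $B$ on $E_1\times F_1\to G_1$, which is implicit in the definition of $M_1$ but is not one of the three hypotheses used for the real-method statement.

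The real-interpolation half has a genuine gap, and it is precisely the point you flag yourself: with the $K$-method decomposition $u=u_0(t)+u_1(t)$, $v=v_0(s)+v_1(s)$, the term $B(u_1,v_1)$ is simply not estimable -- $B$ is given no bound (indeed need not even be defined) on $E_1\times F_1$, and neither of your two remedies works. You cannot ``truncate'' the decomposition so that $u_1(t)=0$ beyond some scale: a general element of $(E_0,E_1)_{\theta,a}$ does not lie in $E_0$, so its $E_1$-component cannot be made to vanish; and ``interpolating $B(u_1,v_1)$ against one of the bounded pieces and absorbing'' is not an estimate, since no norm of $B$ on $E_1\times F_1$ exists to interpolate with -- the weight $t^{-(1-\theta-\theta_2)}$ cannot rescue a term that has no bound at all. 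The standard proof (Lions--Peetre, reproduced as Lemma~28.2 in Tartar, which is what the paper cites instead of giving a proof) avoids the four-term expansion entirely by using the $J$-method: write $u=\sum_n u_n$ and $v=\sum_m v_m$ with $u_n\in E_0\cap E_1$, $v_m\in F_0\cap F_1$ and $J$-functionals controlled by $2^{-n\theta}\alpha_n$, $2^{-m\theta_2}\beta_m$ with $(\alpha_n)\in\ell^a$, $(\beta_m)\in\ell^b$. Every product $B(u_n,v_m)$ then admits a $G_0$-bound (from $E_0\times F_0$) and two $G_1$-bounds (from $E_0\times F_1$ and $E_1\times F_0$, take the minimum), which yields control of $J\bigl(2^{n+m},B(u_n,v_m);G_0,G_1\bigr)$; summing over anti-diagonals $n+m=k$, H\"older on the resulting sequences gives the exponent relation $\tfrac1c=\tfrac1a+\tfrac1b$, and the condition $\theta+\theta_2<1$ is what makes the geometric series coming from the minimum converge and places the target in $(G_0,G_1)_{\theta+\theta_2,c}$. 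Recasting your argument in this $J$-method form (where the problematic term never appears because all pieces lie in the intersections) is the way to make it correct.
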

	
	The proof is then similar to the proof of the semiclassical Young's inequality in~\cite{werner_quantum_1984} and consists mainly in obtaining the estimate at the three endpoints.
	
	\begin{proof}[Proof of theorems~\ref{thm:semiclassical_convolution} and~\ref{thm:quantum_HLS}]
		Notice first that if $f\in L^1$ and $\op\in \L^p$ for any $p\in[1,\infty]$, then
		\begin{equation}\label{eq:young_1}
			\Nrm{f\star\op}{\L^p} \leq \intdd \n{f(z)} \Nrm{\sfT_z \op}{\L^p}\d z = \Nrm{f}{L^1} \Nrm{\op}{\L^p}.
		\end{equation}
		On another side, if $f$ is bounded and $\op\in\L^1$, then for any $(\varphi,\phi)\in (L^2(\Rd))^2$, by definition it holds
		\begin{equation}\label{eq:semiclassical_convolution_action}
			\Inprod{\varphi}{\(f \star \op\)\phi} = \intdd f(z) \Inprod{\varphi}{\(\sfT_z \op\)\phi}\d z.
		\end{equation}
		Since $\op$ is a compact operator, we may write it using its singular values decomposition (see e.g. \cite{reed_functional_1980, simon_trace_2005}) under the form
		\begin{equation*}
			\op = \sumj \mu_j \ket{\psi_j}\bra{\tilde{\psi}_j}
		\end{equation*}
		for some orthonormal families $(\psi_j)_{j\in J}$ and $(\tilde{\psi}_j)_{j\in J}$ of functions in $L^2(\Rd)$, and where $(\mu_j)_{j\in J}$ are positive numbers called the singular values of $\op$. Since $\op$ is trace class, the singular values form an integrable sequence. Thus, applying the Fubini theorem and then the Cauchy--Schwarz inequality to Equation~\eqref{eq:semiclassical_convolution_action} leads to
		\begin{align*}
			\Inprod{\varphi}{\(f \star \op\)\phi} &\leq \sumj \mu_j \Nrm{f}{L^\infty} \intdd \n{\Inprod{\varphi}{\tau_z\psi_j} \Inprod{\tau_z\tilde{\psi}_j}{\phi}}\d z
			\\
			&\leq \Tr{\n{\op}} \Nrm{f}{L^\infty} \sup_j \Nrm{\Inprod{\varphi}{\tau_z\psi_j}}{L^2_z} \Nrm{\Inprod{\tau_z\tilde{\psi}_j}{\phi}}{L^2_z}.
		\end{align*}
		Observe that by the definition of the shift operators $\tau_z$, the inner products appearing on the right-hand side of the above inequality can be written as a Fourier transform under the form
		\begin{equation*}
			\Inprod{\varphi}{\tau_z\psi_j}
			%= \intd \varphi(y)\,\psi_j(y-x)\,e^{i\xi\cdot\(y-x/2\)/\hbar}\d y
			= e^{-i\xi\cdot x/(2\hbar)} \F{\varphi(\cdot)\,\psi_j(\cdot-x)}(\tfrac{\xi}{h}).
		\end{equation*}
		Therefore, it follows from the Plancherel theorem and the normalization of $\psi_j$ that
		\begin{equation*}
			\Nrm{\Inprod{\varphi}{\tau_z\psi_j}}{L^2_z} = h^{d/2}\Nrm{\varphi(\xi)\,\psi_j(\xi-x)}{L^2_{x,\xi}} = h^{d/2}\Nrm{\varphi}{L^2}.
		\end{equation*}
		The same inequality holds when replacing $\psi_j$ by $\tilde{\psi}_j$, and $\varphi$ by $\phi$. This leads to $\Inprod{\varphi}{\(f \star \op\)\phi} \leq h^d\Tr{\n{\op}} \Nrm{f}{L^\infty} \Nrm{\varphi}{L^2} \Nrm{\phi}{L^2}$. Hence, $f \star \op$ is a bounded operator with norm satisfying
		\begin{align*}
			\Nrm{f \star \op}{\L^\infty} \leq \Nrm{f}{L^\infty} \Nrm{\op}{\L^1}.
		\end{align*}
		Theorem~\ref{thm:semiclassical_convolution} now follows by combining the above equation with Inequality~\eqref{eq:young_1} and bilinear complex interpolation as in Lemma~\ref{lem:bilinear_interpolation}. Theorem~\ref{thm:quantum_HLS} follows from the bilinear real interpolation result of the same lemma with $1-\theta=1/q$ and $1-\theta_2=1/r$.
	\end{proof}
	
	\begin{remark}\label{rmk:semiclassical_convolution_def}
		Let us give a comment about the signification of the definition of the semiclassical convolution. In general, if $f\in L^1(\Rdd)+L^\infty(\Rdd)$ and $\op$ is a bounded operator, then the integral appearing of Formula~\eqref{eq:semiclassical_convolution_def} as to be considered as a weak (Pettis) integral. Interpreting a bounded operator $\op$ acting on $L^2$ as a linear form on $L^2(\Rd)\otimes L^2(\Rd)$ through the formula $\scalar{\op}{\varphi\otimes\phi} := \Inprod{\conj{\varphi}}{\op\phi}$, then a correct interpretation of Definition~\eqref{eq:semiclassical_convolution_def} is
		\begin{equation*}
			\scalar{f \star \op}{\varphi\otimes\phi} = \intdd \scalar{f(z)\,\sfT_z \op}{\varphi\otimes\phi}\d z.
		\end{equation*}
		This is indeed a well defined Lebesgue integral as follows from the above proof whenever $f\in L^\infty$. In the case when $f\in L^1$, then by Inequality~\eqref{eq:young_1}, the integral in Definition~\eqref{eq:semiclassical_convolution_def} can even better be defined as a strong (Bochner) integral, and so as a weak integral.
	\end{remark}

\subsection{Proofs of Sobolev inequalities}

	We can now prove our main theorems. We start with the proof of the inequality for Bessel type spaces without giving an explicit constant. It is the simplest proof as it is a consequence of the Hardy--Littlewood--Sobolev inequality~\eqref{eq:quantum_HLS}. The method of proof is actually reminiscent of the original strategy of Sobolev~\cite{sobolev_theorem_1938}. It however does not include the case $p=1$.
	
	\begin{prop}
		Let $s > 0$ and $1< p\leq q < \infty$ satisfy \eqref{eq:exponent_sobolev}. Then there exist a constant $C$ independent of $\hbar$ such that for any compact operator $\op$,
		\begin{align}\label{eq:Sobolev-Bessel_1}
			\Nrm{\op}{\L^q} &\leq C \Nrm{(-\DDh)^{s/2}\op}{\L^p} && \text{ if } s\in (0,2d)
			\\\label{eq:Sobolev-Bessel_2}
			\Nrm{\op}{\L^q} &\leq C \Nrm{(-\DDh)^\frac{s-1}{2}\Dh\op}{\L^p} && \text{ if } s\in (0,2d-1).
		\end{align}
	\end{prop}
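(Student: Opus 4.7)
The plan is to mimic the classical Sobolev strategy via Riesz potentials, replacing the classical convolution by the semiclassical one of Section~\ref{sec:semiclassical_convolution}. The central observation is that, for $\sigma\in(0,2d)$, the classical Riesz potential $(-\Delta_z)^{-\sigma/2}$ acts on Schwartz functions by convolution with $c_\sigma\,\n{z}^{-(2d-\sigma)}$. Combined with the definition~\eqref{eq:lap_frac_def} and the compatibility~\eqref{eq:commutation_convolution} between the semiclassical convolution and the Weyl quantization, this lifts to an operator identity: for every compact $\op$ (first on Weyl quantizations of Schwartz symbols, then by density),
\begin{equation*}
\op \;=\; c_s\,\n{z}^{-(2d-s)} \star (-\DDh)^{s/2}\op \qquad\text{for every } s\in(0,2d).
\end{equation*}

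For inequality~\eqref{eq:Sobolev-Bessel_1}, I would notice that the Riesz kernel $\n{z}^{-(2d-s)}$ belongs to the weak Lebesgue space $L^{b,\infty}(\Rdd)$ with $b=\tfrac{2d}{2d-s}$, and then apply the semiclassical Hardy--Littlewood--Sobolev inequality of Theorem~\ref{thm:quantum_HLS} to the above identity. The exponent condition $1+\tfrac{1}{q}=\tfrac{1}{b}+\tfrac{1}{p}$ of that theorem is exactly~\eqref{eq:exponent_sobolev}; the hypothesis $p\leq q$ provides the constraint $r\leq p$ required in Theorem~\ref{thm:quantum_HLS}; and the restriction $p,q\in(1,\infty)$ comes directly from the HLS inequality (which is precisely why the case $p=1$ has to be excluded here).

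For inequality~\eqref{eq:Sobolev-Bessel_2}, I would use instead the vector-valued representation
\begin{equation*}
\op \;=\; c'_s\,\sum_{j=1}^{2d}\,\frac{z_j}{\n{z}^{2d-s+1}} \star \bigl[(-\DDh)^{(s-1)/2}\Dh\op\bigr]_j.
\end{equation*}
Classically this follows from the identity $\divg_z\bigl[(-\Delta_z)^{(s-1)/2}\nabla_z f\bigr]=-(-\Delta_z)^{(s+1)/2} f$, after inverting $(-\Delta_z)^{(s+1)/2}$ by the Riesz potential of order $s+1$ and integrating by parts to transfer the divergence onto the kernel $\n{z}^{-(2d-s-1)}$ (whence the restriction $s<2d-1$, needed to ensure the Riesz potential of order $s+1$ is well-defined). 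The identity is then transported to operators through~\eqref{eq:lap_frac_def}, \eqref{eq:weyl_quantization_gradients} and~\eqref{eq:commutation_convolution}. Each scalar kernel $z_j\,\n{z}^{-(2d-s+1)}$ has modulus bounded by $\n{z}^{-(2d-s)}$ and hence still lies in $L^{2d/(2d-s),\infty}(\Rdd)$, so Theorem~\ref{thm:quantum_HLS} applies term by term, and summing by the triangle inequality yields the claim with the same exponent relation~\eqref{eq:exponent_sobolev}.

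The main technical obstacle is the rigorous justification of the two convolution identities for a general compact $\op$: the semiclassical convolution has to be interpreted as a weak (Pettis) integral in the sense of Remark~\ref{rmk:semiclassical_convolution_def}, and neither side is \emph{a priori} absolutely convergent. I would first verify the identities on Weyl quantizations $\op_g$ with $g\in\mathcal{S}(\Rdd)$, where both sides reduce to absolutely convergent classical Riesz potential formulas on $\Rdd$, and then pass to the limit by density, using Theorem~\ref{thm:quantum_HLS} itself to stabilise both sides in $\L^p$ and $\L^q$ under symbol approximation. Tracking an explicit constant $C$ would require assembling the Riesz potential constant $c_s$ (or $c'_s$) with the interpolation constant hidden in Theorem~\ref{thm:quantum_HLS}, which is exactly why the statement only claims the existence of $C$ at this stage.
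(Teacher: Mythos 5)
Your proposal is correct and follows essentially the same route as the paper: write $\op = K_s \star (-\DDh)^{s/2}\op$ with the Riesz kernel $K_s(z)=c_{2d,-s}\n{z}^{-(2d-s)}\in L^{2d/(2d-s),\infty}$ (and $\nabla K_{s+1}$ for the second inequality), then apply the semiclassical Hardy--Littlewood--Sobolev inequality of Theorem~\ref{thm:quantum_HLS}, with exactly the same bookkeeping of exponents and of the constraint $p\leq q$. Your additional remarks on justifying the convolution identity via Schwartz symbols and density are a reasonable elaboration of what the paper leaves implicit, not a different argument.
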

	
	\begin{remark}
		As a particular case of Inequality~\eqref{eq:Sobolev-Bessel_2}, when $s=1$ and $1<p\leq q<\infty$ satisfy $\frac{1}{2d} = \frac{1}{p} - \frac{1}{q}$, then we already obtain the Sobolev inequality~\eqref{eq:Sobolev}.
	\end{remark}

	\begin{proof}
		For $s \in\R_+\setminus\{2d\}$, define the function acting on $z\in\Rdd$ by
		\begin{equation*}
			K_s(z) = \frac{c_{2d,-s}}{\n{z}^{2d-s}}
		\end{equation*}
		with $c_{2d,-s} = \tfrac{\omega_s}{\(2\pi\)^s\omega_{2d-s}}$. In the case when $s=2d$, define instead $K_{2d}(z) = -\omega_{2d}\ln(\n{\pi x})$. When $s\in(0,2d)$, it is well-known that this function is a solution in the sense of distributions of
		\begin{equation*}
			(-\Delta_z)^{s/2}K_s = \delta_0
		\end{equation*}
		as can be checked for example taking the Fourier transform. Since $K_s \in L^{\frac{2d}{2d-s},\infty}$ when $s\in(0,2d)$ and $1+\frac{1}{q} = \frac{2d-s}{2d} + \frac{1}{p}$ with $p\leq q$, it follows from Formula~\eqref{eq:convolution_Dirac} and the quantum Hardy--Littlewood--Sobolev inequality~\eqref{eq:quantum_HLS} that
		\begin{align*}
			\Nrm{\op}{\L^q} &= \Nrm{(-\Delta_z)^{s/2}K_{s}\star\op}{\L^q} = \Nrm{K_{s}\star (-\DDh)^{s/2}\op}{\L^q} \leq C \Nrm{(-\DDh)^{s/2}\op}{\L^p}
		\end{align*}
		which proves Inequality~\eqref{eq:Sobolev-Bessel_1}. In the same spirit, when $s\in(0,2d-1)$, the vector valued function $\nabla K_{s+1} \in L^{\frac{2d}{2d-s},\infty}$ satisfies $\(-\Delta_z\)^\frac{s-1}{2}\nabla_z\cdot\(\nabla_z K_{s+1}\) = -\delta_0$, and one obtains Inequality~\eqref{eq:Sobolev-Bessel_2}.
	\end{proof}

	We now prove the Sobolev inequalities for the fractional Sobolev spaces $\dot{\cW}^{s,p}$ as defined in Equation~\eqref{eq:Sobolev_frac_def}. This proves Theorem~\ref{thm:Gag_Sob}, except for the special bounds on the constant $\CSh_{s,p}$ in the case when $p=2$ or $q=2$, which will be proved in lemmas~\ref{lem:Sobolev_01} and~\ref{lem:Sobolev_02}, and the lower bound that will be proved in Lemma~\ref{lem:lower_bound}.

	\begin{prop}
		Let $s\in [0,1]$ and $1\leq p\leq q < \infty$ satisfy~\eqref{eq:exponent_sobolev}. Then
		\begin{equation*}
			\Nrm{\op}{\L^q} \leq \CSh_{s,p} \Nrm{\op}{\dot{\cW}^{s,p}}
		\end{equation*}
		where $\CSh_{s,p}$ satisfies Inequality~\eqref{eq:Gag_Sob_cst_frac} when $s\in (0,1)$ and Inequality~\eqref{eq:Gag_Sob_cst_1} when $s=1$.
	\end{prop}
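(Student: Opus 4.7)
My plan is to split the operator as $\op = \ttildop + (\op - \ttildop)$, where $\ttildop := (g_h*g_h)\star\op = g_h \star g_h \star \op$ is a double semiclassical Gaussian smoothing, and to bound each piece in $\L^q$ separately. The structural fact linking the quantum and classical pictures is that $\ttildop = \op_{\widetilde{\tilde{f}_\op}} = \tildop_{\tilde{f}_\op}$ is the T\"oplitz (anti-Wick) quantization of the Husimi transform $\tilde{f}_\op$ of $\op$.

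For the smooth piece, the T\"oplitz bound~\eqref{eq:Toplitz_bound} applied with symbol $\tilde f_\op$ yields $\Nrm{\ttildop}{\L^q} \leq \Nrm{\tilde f_\op}{L^q(\Rdd)}$, and the classical Sobolev inequality~\eqref{eq:optimal_sobolev_cst} bounds the right-hand side by $\CS_{s,p}\Nrm{\tilde f_\op}{\dot{W}^{s,p}}$. To control the Gagliardo semi-norm of $\tilde f_\op$, I observe that the Husimi transform intertwines quantum and classical phase-space translations, so $\tilde f_{\sfT_z\op - \op} = \tilde f_\op(\cdot - z) - \tilde f_\op$; applying Inequality~\eqref{eq:Husimi_vs_op} pointwise in $z$ yields $\Nrm{\tilde f_\op(\cdot - z) - \tilde f_\op}{L^p} \leq \Nrm{\sfT_z\op - \op}{\L^p}$, and integrating against the weight $\gamma_{s,p}/\n{z}^{2d+sp}$ then gives $\Nrm{\tilde f_\op}{\dot{W}^{s,p}} \leq \Nrm{\op}{\dot{\cW}^{s,p}}$. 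Combining these, $\Nrm{\ttildop}{\L^q} \leq \CS_{s,p}\Nrm{\op}{\dot{\cW}^{s,p}}$, producing the leading $\CS_{s,p}$ term of the constant.

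For the residual, since $\int(g_h*g_h) = 1$ one has $\op - \ttildop = -\int (g_h * g_h)(z)(\sfT_z\op - \op)\d z$. Minkowski's integral inequality in $\L^p$ followed by H\"older in $z$ against the weight $\n{z}^{(2d+sp)/p}$ gives
\begin{equation*}
\Nrm{\op - \ttildop}{\L^p} \leq \gamma_{s,p}^{-1/p}\Bigl(\int (g_h * g_h)^{p'}(z)\,\n{z}^{(2d+sp)p'/p}\d z\Bigr)^{1/p'} \Nrm{\op}{\dot{\cW}^{s,p}},
\end{equation*}
and the Gaussian moment evaluates explicitly to $h^{s/2}(p')^{-(d+s/2)}(\omega_{2d}/\omega_{(2d+s)p'})^{1/p'}$. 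To transfer the estimate from $\L^p$ to $\L^q$, Schatten monotonicity $\Nrm{A}{q} \leq \Nrm{A}{p}$ for $p\leq q$ combined with the normalisation $\Nrm{A}{\L^r} = h^{d/r}\Nrm{A}{r}$ and the Sobolev exponent relation $1/p - 1/q = s/(2d)$ yields $\Nrm{A}{\L^q} \leq h^{-s/2}\Nrm{A}{\L^p}$. The factor $h^{-s/2}$ exactly cancels the $h^{s/2}$ from the Gaussian moment, producing a $\hbar$-uniform bound of the announced form, up to the refinement $(1-2^{(s-1)r})^{1/r}$.

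Recovering this last factor is the main obstacle, since the crude Schatten inclusion above does not produce it. I expect it to arise from a finer dyadic decomposition of the residual over annuli $\n{z}\sim 2^k$, exploiting the operator identity $\sfT_z - \mathbf{1} = (\sfT_{z/2}+\mathbf{1})(\sfT_{z/2}-\mathbf{1})$ to relate translation increments at different scales, whose summation produces a geometric series whose $\ell^r$ norm equals $(1-2^{(s-1)r})^{-1/r}$, with $r = \max(p,p')$ dictated by H\"older duality. The endpoint $s=1$ admits a cleaner route: the fundamental theorem of calculus~\eqref{eq:fundamental_theorem} yields $\sfT_z\op - \op = z\cdot\int_0^1 \sfT_{\theta z}\Dh\op\d\theta$, so the residual becomes a semiclassical convolution $-K\star\Dh\op$ with an explicit vector-valued kernel $K\in L^{2d/(2d-1),\infty}$, and the semiclassical Hardy--Littlewood--Sobolev inequality of Theorem~\ref{thm:quantum_HLS}, together with an exact evaluation of $\Nrm{K}{L^{2d/(2d-1),\infty}}$, yields the constant $\omega_{2d}/\omega_{2d+1}$.
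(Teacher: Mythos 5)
Your skeleton is exactly the paper's: split $\op = \ttildop + (\op-\ttildop)$ with $\ttildop = g_{2h}\star\op$, bound the smooth part via the T\"oplitz bound~\eqref{eq:Toplitz_bound}, the classical Sobolev inequality and the Husimi bound~\eqref{eq:Husimi_vs_op}, and bound the residual by H\"older against the Gaussian followed by the Schatten-monotonicity scaling that cancels $h^{\pm s/2}$. That much is correct and already gives the inequality with additive constant $\gamma_{s,p}^{-1/p}(p')^{-(d+s/2)}(\omega_{2d}/\omega_{(2d+s)p'})^{1/p'}$. But the proposition asserts the constants~\eqref{eq:Gag_Sob_cst_frac} and~\eqref{eq:Gag_Sob_cst_1}, and your routes to those refined constants do not close. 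For $s\in(0,1)$ you leave the factor $(1-2^{(s-1)r})^{1/r}$ to a speculative dyadic telescoping of first differences; note that this factor is \emph{smaller} than $1$ (a gain), whereas your geometric series produces, as you write yourself, the reciprocal $(1-2^{(s-1)r})^{-1/r}>1$, so that mechanism can only enlarge the constant, not refine it. The paper's actual device is different: since $g_{2h}$ is even, the residual is rewritten with the symmetric second difference, $\ttildop-\op=\tfrac12\intdd g_{2h}(z)\(\sfT_z\op+\sfT_{-z}\op-2\op\)\d z$, which gains a factor $\tfrac12$ and, after H\"older, produces the second-difference Besov norm $\Nrm{\op}{\dot{\cB}^s_{p,p}}$; returning to the Gagliardo norm then costs only $2(1-2^{(s-1)r})^{1/r}\gamma_{s,p}^{-1/p}$ by Proposition~\ref{prop:comparison_besov_sobolev_frac}, whose proof rests on the Clarkson--McCarthy (uniform convexity) inequalities with $r=\max(p,p')$ — a triangle-inequality or telescoping argument only yields the cruder comparison~\eqref{eq:comparison_besov_sobolev_frac} and loses the refinement.

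The endpoint $s=1$ has the same defect. Writing the residual as $-K\star\Dh\op$ via~\eqref{eq:fundamental_theorem} is fine, but invoking Theorem~\ref{thm:quantum_HLS} with $K\in L^{2d/(2d-1),\infty}$ cannot deliver the explicit constant $\omega_{2d}/\omega_{2d+1}$ of~\eqref{eq:Gag_Sob_cst_1}: the semiclassical Hardy--Littlewood--Sobolev constant comes from real interpolation and is not explicit (the paper points this out), and the theorem excludes $p=1$, which the statement covers. What does work with your kernel is the elementary $L^1$ estimate $\intdd g_{2h}(z)\n{z}\d z = h^{1/2}\,\omega_{2d}/\omega_{2d+1}$ combined with the Young-type bound of constant $1$ (plus $\Nrm{z\cdot A}{p}\leq\n{z}\Nrm{A}{p}$ for the vector-valued kernel) and the same scaling $\Nrm{\cdot}{\L^q}\leq h^{-1/2}\Nrm{\cdot}{\L^p}$; equivalently, the paper keeps the symmetrized second difference with third index $\infty$ and uses~\eqref{eq:comparison_besov_sobolev}, $\Nrm{\op}{\dot{\cB}^1_{p,\infty}}\leq 2\Nrm{\op}{\dot{\cW}^{1,p}}$, so that $\tfrac12\cdot 2$ cancels and only $\omega_{2d}/\omega_{2d+1}$ survives, uniformly down to $p=1$.
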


	\begin{proof}
		It follows from the triangle inequality for Schatten norms that
		\begin{equation}\label{eq:double_convolution_expansion}
			\Nrm{\op}{\L^q} \leq \Nrm{\ttildop}{\L^q} +  \Nrm{\ttildop-\op}{\L^q}
		\end{equation}
		where $\ttildop = g_h\star(g_h\star\op) = (g_h*g_h)\star\op = g_{2h}\star\op$. To bound the first term, notice that $\ttildop = \tildop_{\tilde{f}_{\op}}$ where $\tilde{f}_{\op}$ is a smooth function converging to $0$ at infinity. The smoothness comes from the convolution by the Gaussian. The decay at infinity follows from the fact that compact operators are limit in $\L^\infty$ of finite rank operators, and so their Husimi transform is the uniform limit of integrable functions. Therefore, by the classical Sobolev inequality and by the property~\eqref{eq:Toplitz_bound} of the Wick quantization
		\begin{equation*}
			\Nrm{\ttildop}{\L^q} \leq \Nrm{\tilde{f}_{\op}}{L^q} \leq \CS_{s,p} \Nrm{\tilde{f}_{\op}}{\dot{W}^{s,p}}.
		\end{equation*}
		Then, if $s=1$, by the property~\eqref{eq:Husimi_vs_op} of the Husimi transform,
		\begin{equation}\label{eq:convolution_and_Sobolev_1}
			\Nrm{\tilde{f}_{\op}}{\dot{W}^{1,p}} = \Nrm{\tilde{f}_{\Dh\op}}{L^p} \leq \Nrm{\Dh\op}{\L^p}.
		\end{equation}
		Similarly, if $s\in (0,1)$, by the definition of the fractional Sobolev spaces and property~\eqref{eq:Husimi_vs_op},
		\begin{equation}\label{eq:convolution_and_Sobolev_frac}
			\Nrm{\tilde{f}_{\op}}{\dot{W}^{s,p}}^p = \gamma_{s,p} \Nrm{\frac{\Nrm{\tilde{f}_{\sfT_{z}\op - \op}}{L^p}}{\n{z}^{s+2d/p}}}{L^p}^p \leq \gamma_{s,p} \Nrm{\frac{\Nrm{\sfT_{z}\op - \op}{\L^p}}{\n{z}^{s+2d/p}}}{L^p}^p = \Nrm{\op}{\dot{\cW}^{s,p}}^p.
		\end{equation}
		We now obtain a bound on the second term in the right-hand side of Inequality~\eqref{eq:double_convolution_expansion}. Since $\intdd g_{2h} = 1$ and $g_{2h}$ is even, it holds
		\begin{equation*}
			\ttildop - \op = \frac{1}{2}\intdd g_{2h}(z) \(\sfT_z\op + \sfT_{-z}\op - 2\,\op\) \d z.
		\end{equation*}
		Therefore, taking the $\L^q$ norm leads to
		\begin{equation*}
			\Nrm{\ttildop - \op}{\L^q} \leq \frac{1}{2}\intdd g_{2h}(z) \n{z}^{s+2d/r} \frac{\Nrm{\sfT_z\op + \sfT_{-z}\op - 2\,\op}{\L^q}}{\n{z}^{s+2d/r}} \d z.
%			\\
%			\leq \frac{1}{2} \Nrm{g_h(z) \n{z}^{s+\frac{2d}{r}}}{L^{r'}_z} \Nrm{\frac{\sfT_z\op + \sfT_{-z}\op - 2\,\op}{\n{z}^{s+2d/r}}}{L^r_z\L^q}
		\end{equation*}
%		and so since
%		\begin{equation*}
%			\Nrm{g_h(z) \n{z}^{c}}{L^b_z}^b = \(\frac{2}{h}\)^{db} \Nrm{\n{z}^c e^{-\pi \n{z}^2/(h/2)}}{L^b}^b = \(\frac{h}{2}\)^{d(1-b)+cb/2} \frac{\omega_{2d}}{b^\frac{2d+bc}{2}\omega_{2d+bc}}
%		\end{equation*}
		By H\"older's inequality, this leads to
		\begin{equation}\label{eq:convolution_error_quantum}
			\Nrm{\ttildop - \op}{\L^q} \leq \tfrac{1}{2}\,C_{d,s,r}\,h^{s/2} \Nrm{\op}{\dot{\cB}^s_{q,r}}
		\end{equation}
		where
		\begin{equation*}
			C_{d,s,r} = \frac{1}{h^{s/2}} \Nrm{g_{2h}(z) \n{z}^{s+\frac{2d}{r}}}{L^{r'}_z} = \frac{1}{\(r'\)^{d+\frac{s}{2}}} \(\frac{\omega_{2d}}{\omega_{(2d+s)r'}}\)^\frac{1}{r'}.
		\end{equation*}
		Observe now that for any $p\leq q$, by the scaled definition of the Schatten norms~\eqref{eq:def_norm}, it holds $h^{s/2}\Nrm{\op}{\L^q} = h^{d\(\frac{1}{q}-\frac{1}{p}\)+\frac{s}{2}} \Nrm{\op}{\L^p}$. Therefore, since $\frac{1}{p}-\frac{1}{q} = \frac{s}{2d}$, it leads to
		\begin{equation*}
			\Nrm{\ttildop - \op}{\L^q} \leq \tfrac{1}{2}\,C_{d,s,r} \Nrm{\op}{\dot{\cB}^s_{p,r}}.
		\end{equation*}
		When $s=1$, then we use Inequality~\eqref{eq:comparison_besov_sobolev} to deduce that
		\begin{equation}\label{eq:convolution_error_quantum_12}
			\Nrm{\ttildop - \op}{\L^q} \leq C_{d,s,\infty} \Nrm{\op}{\dot{\cW}^{s,p}}
		\end{equation}
		while if $s\in(0,1)$, then we use Proposition~\ref{prop:comparison_besov_sobolev_frac} to get
		\begin{equation*}
			\Nrm{\ttildop - \op}{\L^q}^p \leq \tfrac{\(1-2^{\(s-1\)r}\)^{p/r}}{\gamma_{s,p}}\,C_{d,s,p}^p \Nrm{\op}{\dot{\cW}^{s,p}}^p
		\end{equation*}
		where $r = \max(p,p')$. In any of these cases, combining the estimates on $\ttildop$ and $\ttildop  - \op$ finishes the proof.
	\end{proof}
	
	We now get explicit upper bounds for the Bessel--Sobolev inequality in the case when $p \leq 2 \leq q$, i.e. when $q\in [2,\frac{2\,d}{d-1}]$, using the fact that the Wigner transform is an isometry from $\L^2$ to $L^2(\Rdd)$.
	\begin{prop}\label{prop:Bessel-sobolev_cst}
		With the notations of theorems~\ref{thm:Gag_Sob} and \ref{thm:Bes_Sob}, assume $1 < p \leq 2 \leq q < \infty$ satisfy Equation~\eqref{eq:exponent_sobolev}, and let $\tilde{s} := 2d\,(\frac{1}{2} - \frac{1}{q})$. Then
		\begin{equation*}
			\CBh_{s,p} \leq \CBh_{\tilde{s},2} \, \CBh_{s-\tilde{s},p} \leq \(\CB_{\tilde{s},2} + \tfrac{\theta_{\tilde{s}}}{\(8\pi\)^{\tilde{s}/2}}\) \Big(\CB_{s-\tilde{s},p} + \tfrac{\theta_{s-\tilde{s}}}{\(8\pi\)^{(s-\tilde{s})/2}}\Big)
		\end{equation*}
		and similarly, if $s=1$, then the same inequality holds with $\CBh$ replaced by $\CSh$ when $1 < p \leq 2 \leq q < \infty$.
	\end{prop}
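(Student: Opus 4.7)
The plan is to factor the embedding $\dot{\cH}^{s,p}\hookrightarrow\L^q$ through the Hilbert intermediate $\dot{\cH}^{\tilde s,2}$ and apply Theorem~\ref{thm:Bes_Sob} twice. First I would check the exponent bookkeeping: from $\frac{1}{p}-\frac{1}{q}=\frac{s}{2d}$ and $\tilde s:=2d(\frac{1}{2}-\frac{1}{q})$ one gets simultaneously
\begin{equation*}
	\tfrac{1}{2}-\tfrac{1}{q}=\tfrac{\tilde s}{2d}\qquad\text{and}\qquad\tfrac{1}{p}-\tfrac{1}{2}=\tfrac{s-\tilde s}{2d}.
\end{equation*}
The assumption $1<p\leq2\leq q<\infty$ together with $s\in[0,1]$ ensures $0\leq\tilde s\leq s\leq1$, so both legs are admissible instances of Theorem~\ref{thm:Bes_Sob}.

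Next, I would apply Theorem~\ref{thm:Bes_Sob} in the $p=2$ case with regularity index $\tilde s$ to obtain
\begin{equation*}
	\Nrm{\op}{\L^q}\leq\CBh_{\tilde s,2}\Nrm{\op}{\dot{\cH}^{\tilde s,2}},
\end{equation*}
and then use the ``regularity shift'' remark that follows Theorem~\ref{thm:Bes_Sob}: substituting $\op\mapsto(-\DDh)^{\tilde s/2}\op$ in the $q=2$ Bessel--Sobolev inequality at exponent $p$ and regularity $s-\tilde s$ yields
\begin{equation*}
	\Nrm{\op}{\dot{\cH}^{\tilde s,2}}=\Nrm{(-\DDh)^{\tilde s/2}\op}{\L^2}\leq\CBh_{s-\tilde s,p}\Nrm{(-\DDh)^{s/2}\op}{\L^p}=\CBh_{s-\tilde s,p}\Nrm{\op}{\dot{\cH}^{s,p}}.
\end{equation*}
Chaining the two estimates proves the first inequality $\CBh_{s,p}\leq\CBh_{\tilde s,2}\CBh_{s-\tilde s,p}$.

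For the second inequality, I would bound each factor using Theorem~\ref{thm:Bes_Sob}. The first factor is of the $p=2$ type, so Theorem~\ref{thm:Bes_Sob} gives $\CBh_{\tilde s,2}=\CSh_{\tilde s,2}$, and~\eqref{eq:Gag_Sob_cst_2} of Theorem~\ref{thm:Gag_Sob} bounds this by $\CS_{\tilde s,2}+\theta_{\tilde s}/(8\pi)^{\tilde s/2}$; Remark~\ref{rmk:bessel_vs_sobolev} identifies $\CS_{\tilde s,2}=\CB_{\tilde s,2}$ at $p=2$, which is the required form. The second factor falls under the $q=2$ clause of Theorem~\ref{thm:Bes_Sob}, yielding the matching bound with $\CB_{s-\tilde s,p}$ and $\theta_{s-\tilde s}/(8\pi)^{(s-\tilde s)/2}$. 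Multiplying the two bounds produces the stated inequality. The $s=1$ Gagliardo variant is handled by the same two-step decomposition, using the $p=2$ and $q=2$ clauses of Theorem~\ref{thm:Gag_Sob} in place of those of Theorem~\ref{thm:Bes_Sob} and the analogue of the regularity shift for integer-order Sobolev spaces mentioned in the remark after Theorem~\ref{thm:Bes_Sob}.

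The argument is essentially formal, so there is no real analytic obstacle; the only delicate point is the legitimacy of the regularity shift at the fractional exponent $\tilde s$, which is granted by the definition~\eqref{eq:lap_frac_def} of $(-\DDh)^{\tilde s/2}$ as the Weyl quantization of the classical fractional Laplacian and by the fact that this operator preserves compactness on the relevant Schatten ideals. The identification $\CS_{\tilde s,2}=\CB_{\tilde s,2}$ at $p=2$ via Remark~\ref{rmk:bessel_vs_sobolev} is the one place where one must be careful to land on the constant written in the proposition rather than on a generic classical Sobolev constant.
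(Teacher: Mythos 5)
Your proposal is correct and follows essentially the same route as the paper: factor through the intermediate exponent $\tilde p=2$, apply the $p=2$ case at regularity $\tilde s$ and the $q=2$ case (after the shift $\op\mapsto(-\DDh)^{\tilde s/2}\op$) at regularity $s-\tilde s$, then bound each constant by the two basis-case lemmas. The paper is equally brief about the $s=1$ variant ("similarly"), so your level of detail there matches its proof.
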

	
	\begin{remark}
		Recall that $\theta_s = \sup_{r\geq 0} \frac{1-e^{-r}}{r^s}$. In the case when $s<1$, then $\theta_s < 1$ and it is reached for some $r>0$ satisfying $e^r = 1+r/s$.
	\end{remark}
	
	We start with the special cases $p=2$ and $q=2$. The proposition then follows by taking the case $\tilde{p} = 2$ as an intermediate case, and writing
	\begin{equation*}
		\Nrm{\op}{\L^q} \leq \CBh_{\tilde{s},2} \Nrm{\op}{\dot{\cH}^s} = \CBh_{\tilde{s},2} \Nrm{(-\DDh)^{\tilde{s}/2}\op}{\L^2} \leq \CBh_{\tilde{s},2}\,\CBh_{s-\tilde{s},p} \Nrm{(-\DDh)^{s/2}\op}{\dot{\L}^p}.
	\end{equation*}
	% s = 2d (1/p - 1/q)
	% ss = 2d (1/2 - 1/q) = s - 2d (1/p - 1/2)
	% s - ss = 2d (1/p - 1/2)
	
	\begin{lem}[First basis case]\label{lem:Sobolev_01}
		Let $s\in[0,1]$ and $q := \frac{2\,d}{d-s}$. Then for any $f\in\dot{H}^s(\Rdd)$,
		\begin{equation}\label{eq:Sobolev_1}
			\Nrm{\op_f}{\L^q} \leq \CSh_{s,2} \Nrm{f}{\dot{H}^s(\Rdd)} = \CSh_{s,2} \Nrm{\op_f}{\dot{\cH}^{s}}
		\end{equation}
		where $\CSh_{s,2} = \CBh_{s,2} \leq \CB_{s,2} + \theta_s \(8\pi\)^{-\frac{s}{2}}$ with $\CS_{s,2} = \CB_{s,2}$ the optimal constant appearing in the associated classical Sobolev inequality \eqref{eq:optimal_bessel_cst}.
	\end{lem}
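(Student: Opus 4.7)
The strategy mirrors the Gaussian-approximation approach used in the proof of the preceding proposition, but exploits the fact that for $p=2$ the Wigner transform is an isometry $\L^2\to L^2(\Rdd)$, so Plancherel-type tools apply directly to the operator side.

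First, I would split $\op_f = \ttildop_f + \op_g$ with $g := f - g_{2h} * f$, where $\ttildop_f := g_{2h}\star\op_f = \op_{g_{2h}*f}$ by~\eqref{eq:commutation_convolution}. Using $g_h * g_h = g_{2h}$ and the associativity of the semiclassical convolution, $\ttildop_f$ can equivalently be viewed as the Wick (Toeplitz) quantization $\tildop_{\tilde f}$ of the Husimi transform $\tilde f := g_h * f$. By the Wick contraction~\eqref{eq:Toplitz_bound}, $\Nrm{\ttildop_f}{\L^q} \leq \Nrm{\tilde f}{L^q}$, and the classical Bessel--Sobolev inequality~\eqref{eq:optimal_bessel_cst} in $\Rdd$ then gives $\Nrm{\tilde f}{L^q} \leq \CB_{s,2}\Nrm{\tilde f}{\dot H^s} \leq \CB_{s,2}\Nrm{f}{\dot H^s}$, the last inequality following because $\widehat{g_h}(\eta) = e^{-\pi^2\hbar|\eta|^2} \in [0,1]$ makes convolution with $g_h$ a contraction on $\dot H^s$. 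This produces the first summand $\CB_{s,2}$ of the claimed constant.

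For the error term, I would use the Wigner isometry $\sNrm{\op_g}{\L^2} = \Nrm{g}{L^2}$ together with the elementary Schatten inequality $\Nrm{T}{q} \leq \Nrm{T}{2}$ valid for $q \geq 2$. Combined with the scaling $\Nrm{\cdot}{\L^p} = h^{d/p}\Nrm{\cdot}{p}$ and the Sobolev relation $1/2 - 1/q = s/(2d)$, this yields
\begin{equation*}
\Nrm{\op_g}{\L^q} \leq h^{d/q-d/2}\Nrm{g}{L^2} = h^{-s/2}\Nrm{g}{L^2}.
\end{equation*}
By Plancherel, $\Nrm{g}{L^2}^2 = \intdd(1 - e^{-2\pi^2\hbar|\eta|^2})^2 |\widehat f(\eta)|^2\,d\eta$, and the definition $\theta_s = \sup_{r\geq 0}(1-e^{-r})/r^s$ combined with $(1-e^{-r})\leq 1$ gives the bound $(1-e^{-r})^2 \leq \theta_s r^s$, whence $\Nrm{g}{L^2} \leq C\,\hbar^{s/2}\Nrm{f}{\dot H^s}$ with $C$ expressible in terms of $\theta_s$ and powers of $\pi$. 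The $\hbar^{\pm s/2}$ factors combine, upon rearrangement using $h = 2\pi\hbar$, to the advertised constant $\theta_s(8\pi)^{-s/2}$.

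The delicate point will be recovering the sharp constant $\theta_s(8\pi)^{-s/2}$ rather than the somewhat weaker constant that the naive combination of $\Nrm{T}{q}\le\Nrm{T}{2}$ with $(1-e^{-r})^2\le\theta_s r^s$ produces. This likely requires either a sharper interpolation between $\L^2$ and a higher-exponent endpoint (taking advantage of the Fourier-localization of $g$ near $|\eta|\gtrsim \hbar^{-1/2}$), or a more symmetric distribution of the two factors in $(1-e^{-r})^2$ across the Plancherel integral — analogous to the Hölder-type bookkeeping used in the error estimate of the preceding proposition, but with $p=2$ simplifications that bypass the Besov detour and the passage through $\dot{\cB}^s_{p,r}$.
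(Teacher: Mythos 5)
Your overall strategy is the same as the paper's (Gaussian mollification through the semiclassical convolution, the Wick bound~\eqref{eq:Toplitz_bound} plus the classical Sobolev inequality for the main term, and the Wigner isometry together with $\Nrm{T}{q}\leq\Nrm{T}{2}$ and the $\L^p$ scaling for the error term), but as you yourself note, your decomposition does not reach the stated constant, and since the lemma explicitly asserts the bound $\CB_{s,2}+\theta_s(8\pi)^{-s/2}$ this is a real gap. The loss comes from importing the double mollification $g_{2h}=g_h*g_h$ from the general-$p$ proposition. There the detour through $\ttildop=\tildop_{\tilde f_{\op}}$ is forced because only the operator $\op$ is given and one must manufacture a symbol (the Husimi transform) to feed into the classical Sobolev inequality; here the operator is already presented as $\op_f$ with $f\in\dot H^s(\Rdd)$, so a single mollification suffices. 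The paper splits $\op_f=\tildop_f+(\op_f-\tildop_f)$ with $\tildop_f=g_h\star\op_f$: the main term is the Wick quantization of $f$ itself, so \eqref{eq:Toplitz_bound} gives $\Nrm{\tildop_f}{\L^q}\leq\Nrm{f}{L^q}\leq\CB_{s,2}\Nrm{f}{\dot H^s}$ directly (no contraction property of $g_h*$ on $\dot H^s$ is needed), while the error term contains the single factor $1-\widehat{g_h}(w)=1-e^{-\pi h\n{w}^2/2}$, which after Plancherel is bounded by $\sup_{r>0}\frac{1-e^{-r}}{r^{s/2}}\cdot\big(\tfrac{\pi h\n{w}^2}{2}\big)^{s/2}$; this produces exactly the factor $(h/(8\pi))^{s/2}$ that cancels the $h^{-s/2}$ coming from the Schatten-plus-scaling step, with the bounded-function constant entering linearly.

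By contrast, your error term carries $(1-e^{-\pi h\n{\eta}^2})^2$, because $\widehat{g_{2h}}(\eta)=e^{-\pi h\n{\eta}^2}$. Even the optimal treatment $(1-e^{-r})^2\leq\big(\sup_{\rho>0}\tfrac{1-e^{-\rho}}{\rho^{s/2}}\big)^2 r^{s}$ yields the coefficient $(4\pi)^{-s/2}$, larger by $2^{s/2}$ than what the single-convolution argument gives, and your written bound $(1-e^{-r})^2\leq\theta_s r^s$ additionally replaces the constant by its square root, which is worse since $\theta_s\leq1$. So no redistribution of the two exponential factors, and no interpolation refinement, recovers the claimed constant within the $g_{2h}$ decomposition: the repair is structural, namely drop one mollification. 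With that single change (and the corresponding simplification of the main term) your argument coincides with the paper's proof.
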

	
	\begin{proof}
		It follows from the definition of $\L^2$ given in Equation~\eqref{eq:def_norm}, the Plancherel theorem and the fact that the Fourier transform of $g_h$ is $\widehat{g_h}(w) = e^{-\pi h\n{w}^2/2}$ that
		\begin{equation*}
			\Nrm{\op_f - \tildop_f}{\L^2} = \Nrm{f - g_h * f}{L^2(\Rdd)} = \Nrm{u(\tfrac{\pi h \n{w}^2}{2}) \(\tfrac{\pi h \n{w}^2}{2}\)^\frac{s}{2}\widehat{f}}{L^2(\Rdd)},
		\end{equation*}
		where $u : \R_+\to\R_+$ be defined by $u(r) = \frac{1-e^{-r}}{r^{s/2}}$. Since $u$ is a bounded function on $\R_+$, it yields
		\begin{equation}\label{eq:convolution_error}
			\Nrm{\op_f - \tildop_f}{\L^2} \leq \theta_s \(\tfrac{h}{8\pi}\)^{s/2} \Nrm{f}{\dot{H}^s(\Rdd)}.
		\end{equation}
		From the relation $\frac{1}{q} = \frac{1}{2} - \frac{s}{2d}$, the scaling of the $\L^p$ norms and the inequalities between Schatten norms, the above inequality leads to
		\begin{align*}
			\Nrm{\op_f - \tildop_f}{\L^q} = h^\frac{d}{q} \Nrm{\op_f - \tildop_f}{q} \leq h^{-\frac{s}{2}} \Nrm{\op_f - \tildop_f}{\L^2} \leq \tfrac{\theta_s}{\(8\pi\)^{s/2}} \Nrm{f}{\dot{H}^s(\Rdd)}.
		\end{align*}
		By the Triangle inequality for the Schatten norms and the bound~\eqref{eq:Toplitz_bound} for the Wick quantization, it leads to the following inequality
		\begin{equation*}
			\Nrm{\op_f}{\L^q} \leq \Nrm{\tildop_f}{\L^q} + \tfrac{\theta_s}{\(8\pi\)^{s/2}} \Nrm{f}{\dot{H}^s(\Rdd)} \leq \Nrm{f}{L^q(\Rdd)} + \tfrac{\theta_s}{\(8\pi\)^{s/2}} \Nrm{f}{\dot{H}^s(\Rdd)}
		\end{equation*}
		and we deduce Inequality~\eqref{eq:Sobolev_1} by the classical Sobolev inequality.
	\end{proof}
	
	\begin{lem}[Second basis case]\label{lem:Sobolev_02}
		Let $s\in[0,1]$ and $p> 1$ be such that $\frac{1}{p} = \frac{1}{2}+\frac{1}{2d}$. Then $\CBh_{s,p} \leq \CB_{s,p} + \tfrac{\theta_s}{\(8\pi\)^{s/2}}$, and if $s=1$ and $p\geq 1$, then $\CSh_{s,p} \leq \CS_{s,p} + \tfrac{\theta_s}{\(8\pi\)^{s/2}}$.
	\end{lem}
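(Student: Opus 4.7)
The plan is to mirror the proof of Lemma~\ref{lem:Sobolev_01}, decomposing $\op = \tildop + (\op - \tildop)$ with $\tildop = g_h \star \op$, bounding the regularized piece via the classical Sobolev inequality applied to the Husimi transform, and bounding the error via Plancherel. The new feature, compared to Lemma~\ref{lem:Sobolev_01}, is that the target norm is $\L^2$ while the source norm is $\L^p$-based with $p \leq 2$; this mismatch will be reconciled by a Schatten-scaling argument exploiting the Sobolev relation $\frac{1}{p}-\frac{1}{2} = \frac{s}{2d}$.

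For the main piece I use that $\tildop = \op_{\tilde{f}_{\op}}$ is a Wick quantization, so Inequality~\eqref{eq:Toplitz_bound} gives $\Nrm{\tildop}{\L^2} \leq \Nrm{\tilde{f}_{\op}}{L^2}$, and the classical Bessel--Sobolev inequality~\eqref{eq:optimal_bessel_cst} then yields $\Nrm{\tilde{f}_{\op}}{L^2} \leq \CB_{s,p}\Nrm{(-\Delta)^{s/2}\tilde{f}_{\op}}{L^p}$. Since convolution by $g_h$ commutes with $(-\Delta)^{s/2}$ and Definition~\eqref{eq:lap_frac_def} gives $(-\Delta)^{s/2}f_{\op} = f_{(-\DDh)^{s/2}\op}$, one has $(-\Delta)^{s/2}\tilde{f}_{\op} = \tilde{f}_{(-\DDh)^{s/2}\op}$, so the Husimi--vs--operator bound~\eqref{eq:Husimi_vs_op} produces $\Nrm{\tildop}{\L^2} \leq \CB_{s,p}\Nrm{\op}{\dot{\cH}^{s,p}}$.

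For the error I repeat the Plancherel computation of Lemma~\ref{lem:Sobolev_01} applied to $f = f_{\op}$; together with $\Nrm{f_{\op}}{\dot{H}^s} = \Nrm{(-\DDh)^{s/2}\op}{\L^2}$ this yields
\begin{equation*}
	\Nrm{\op - \tildop}{\L^2} \leq \theta_s\Big(\tfrac{h}{8\pi}\Big)^{s/2}\Nrm{(-\DDh)^{s/2}\op}{\L^2}.
\end{equation*}
The $\L^2$--$\L^p$ mismatch is then resolved by the elementary Schatten inequality $\Nrm{\opA}{p} \geq \Nrm{\opA}{2}$ (valid since $p \leq 2$) combined with the normalization~\eqref{eq:def_norm}: because $\frac{d}{p}-\frac{d}{2} = \frac{s}{2}$, one obtains $h^{s/2}\Nrm{\opA}{\L^2}\leq \Nrm{\opA}{\L^p}$, which applied to $\opA = (-\DDh)^{s/2}\op$ absorbs the factor $h^{s/2}$ and gives $\Nrm{\op - \tildop}{\L^2} \leq \theta_s(8\pi)^{-s/2}\Nrm{\op}{\dot{\cH}^{s,p}}$. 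Adding the two contributions via the triangle inequality closes the Bessel case.

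For the Gagliardo case $s=1$ and $p\geq 1$, the same scheme goes through with Bessel norms replaced by Gagliardo norms, using $\nabla\tilde{f}_{\op} = \tilde{f}_{\Dh\op}$ in the step for $\tildop$ and the Plancherel-type identity $\Nrm{(-\DDh)^{1/2}\op}{\L^2} = \Nrm{\Dh\op}{\L^2}$ (from $\Nrm{\nabla f}{L^2} = \Nrm{(-\Delta)^{1/2}f}{L^2}$ on the Wigner side) in the error step, so that $\CB_{s,p}$ is replaced by $\CS_{1,p}$ and $(-\DDh)^{s/2}\op$ by $\Dh\op$ throughout. Since the classical Sobolev inequality is available for $p=1$ and the Schatten step works for any $p\in[1,2]$, the endpoint $p=1$ (occurring in dimension $d=1$) is included. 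The main delicacy I anticipate is simply the bookkeeping of scaling exponents: the exact cancellation $h^{s/2}\cdot h^{-s/2} = 1$ between the Plancherel error and the Schatten rescaling is what forces the target $q=2$ and is the reason the lemma is confined to this endpoint.
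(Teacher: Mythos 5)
Your proposal is correct and follows essentially the same route as the paper's proof: the same splitting into $\tildop$ and $\op-\tildop$ (equivalently $\tilde{f}_{\op}$ and $f_{\op}-\tilde{f}_{\op}$ via the $\L^2$--$L^2$ isometry of the Wigner transform), the Plancherel estimate~\eqref{eq:convolution_error} combined with $\Nrm{\cdot}{2}\leq\Nrm{\cdot}{p}$ and the $h^{s/2}$ rescaling for the error, and the classical Bessel--Sobolev inequality plus~\eqref{eq:Husimi_vs_op} for the main term, with the same substitution of $\Dh$ for $(-\DDh)^{1/2}$ when $s=1$. The only cosmetic point is your justification of $\Nrm{\tildop}{\L^2}\leq\Nrm{\tilde{f}_{\op}}{L^2}$: the Toplitz bound~\eqref{eq:Toplitz_bound} applied to $\tildop=\tildop_{f_{\op}}$ gives a bound by $\Nrm{f_{\op}}{L^2}$, while the inequality you actually use holds (indeed as an equality) because $\tildop=\op_{\tilde{f}_{\op}}$ and the Weyl quantization is an isometry on $\L^2$, so nothing is affected.
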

	
	\begin{proof}
		Once again we use Inequality~\eqref{eq:convolution_error}, which can be written in terms of the Wigner transform $f_{\op}$ of $\op$ in the form
		\begin{equation*}
			\Nrm{f_{\op} - \tilde{f}_{\op}}{L^2(\Rdd)} \leq \theta_s \(\tfrac{h}{8\pi}\)^{s/2} \Nrm{(-\Delta)^\frac{s}{2} f_{\op}}{L^2(\Rdd)} = \theta_s \(\tfrac{h}{8\pi}\)^{s/2} \Nrm{(-\DDh)^\frac{s}{2} \op}{\L^2},
		\end{equation*}
		and so by the fact that $\Nrm{\opmu}{2} \leq \Nrm{\opmu}{p}$ since $p\leq 2$, the definition of $p$, and the definition of $\L^p$ norms, we deduce
		\begin{equation*}
			\Nrm{f_{\op} - \tilde{f}_{\op}}{L^2(\Rdd)} \leq \tfrac{\theta_s}{\(8\pi\)^{s/2}}
			 \Nrm{\op}{\dot{\cH}^{s,p}}.
		\end{equation*}
		Then by the classical Bessel--Sobolev inequalities~\eqref{eq:optimal_bessel_cst} and Inequality~\eqref{eq:Husimi_vs_op}, we get
		\begin{equation*}
			\Nrm{\tilde{f}_{\op}}{L^2(\Rdd)} \leq \CB_{s,p} \Nrm{(-\Delta)^\frac{s}{2}\tilde{f}_{\op}}{L^p(\Rdd)} = \CB_{s,p} \Nrm{\tilde{f}_{(-\DDh)^\frac{s}{2}\op}}{L^p(\Rdd)} \leq \CB_{s,p} \Nrm{(-\DDh)^\frac{s}{2}\op}{\L^p}.
		\end{equation*}
		As in the proof of Lemma~\ref{lem:Sobolev_01}, the triangle inequality and the two previous inequalities lead to a bound on $\Nrm{\op}{\L^2} = \Nrm{f_{\op}}{L^2(\Rdd)}$. When $s=1$, the bound on $\CSh$ follows similarly replacing $(-\DDh)^\frac{s}{2}$ by the quantum gradient $\Dh^s$.
	\end{proof}
	
\subsubsection{Bound by below for the quantum Sobolev constants}

	In this section, we finish the proof of Theorem~\ref{thm:Gag_Sob} by proving a lower bound for the constant $\CSh_{s,p}$.

	\begin{lem}\label{lem:lower_bound}
		Under the hypotheses of Theorem~\ref{thm:Gag_Sob}, assuming $p=2$ or $s=1$, then the optimal uniform-in-$\hbar$ constant appearing in inequality~\eqref{eq:Sobolev-Gagliardo} satisfies
		\begin{equation*}
			\CSh_{s,p} \geq \CS_{s,p}.
		\end{equation*}
	\end{lem}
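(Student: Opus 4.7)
The plan is to test Inequality~\eqref{eq:Sobolev-Gagliardo} against the Wick quantization $\tildop_f = g_h\star\op_f$ of a fixed smooth, compactly supported symbol $f$, and then pass to the semiclassical limit $\hbar\to 0^+$. Applying Theorem~\ref{thm:Gag_Sob} to $\op = \tildop_f$ yields
\begin{equation*}
    \Nrm{\tildop_f}{\L^q} \leq \CSh_{s,p} \Nrm{\tildop_f}{\dot{\cW}^{s,p}}.
\end{equation*}
The idea is to lower-bound the left-hand side by a classical Lebesgue norm of $f$ and to upper-bound the right-hand side by the classical Sobolev norm of $f$, so that the classical Sobolev inequality for $f$ emerges in the limit.

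For the lower bound, since $f_{\tildop_f} = g_h * f$, one has $\tilde{f}_{\tildop_f} = g_{2h} * f$, so the Husimi bound~\eqref{eq:Husimi_vs_op} gives
\begin{equation*}
    \Nrm{g_{2h} * f}{L^q(\Rdd)} \leq \Nrm{\tildop_f}{\L^q}.
\end{equation*}
Since $(g_{2h})_{\hbar > 0}$ is an approximation of the identity and $f\in C_c^\infty(\Rdd)\subset L^q(\Rdd)$, the left-hand side converges to $\Nrm{f}{L^q(\Rdd)}$ as $\hbar\to 0^+$.

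For the upper bound I treat the two cases of the hypothesis separately. When $s=1$, the compatibility~\eqref{eq:weyl_quantization_gradients} combined with linearity of the semiclassical convolution gives $\Dh\tildop_f = \tildop_{\nabla f}$, so the Wick bound~\eqref{eq:Toplitz_bound} applied componentwise yields
\begin{equation*}
    \Nrm{\tildop_f}{\dot{\cW}^{1,p}} = \Nrm{\tildop_{\nabla f}}{\L^p} \leq \Nrm{\nabla f}{L^p(\Rdd)} = \Nrm{f}{\dot{W}^{1,p}(\Rdd)}.
\end{equation*}
When $p=2$, using $\Nrm{\op}{\dot{\cW}^{s,2}} = \Nrm{\op}{\dot{\cH}^s}$ and the definition~\eqref{eq:lap_frac_def}, one finds $(-\DDh)^{s/2}\tildop_f = \op_{g_h * (-\Delta)^{s/2} f}$, so by the isometry $\sNrm{\op_F}{\L^2} = \Nrm{F}{L^2(\Rdd)}$ and the classical Young's inequality
\begin{equation*}
    \Nrm{\tildop_f}{\dot{\cW}^{s,2}} = \Nrm{g_h * (-\Delta)^{s/2} f}{L^2(\Rdd)} \leq \Nrm{f}{\dot{H}^s(\Rdd)} = \Nrm{f}{\dot{W}^{s,2}(\Rdd)}.
\end{equation*}

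Combining these bounds and letting $\hbar\to 0^+$ yields $\Nrm{f}{L^q(\Rdd)}\leq \CSh_{s,p}\Nrm{f}{\dot{W}^{s,p}(\Rdd)}$ for every $f\in C_c^\infty(\Rdd)$. Taking the supremum over such $f$ and invoking the optimality of $\CS_{s,p}$ in the classical Sobolev inequality~\eqref{eq:optimal_sobolev_cst} then gives $\CSh_{s,p}\geq \CS_{s,p}$. The only mildly delicate point is the identification $\Dh\tildop_f = \tildop_{\nabla f}$ as an equality of vector-valued operators in the sense of Appendix~\ref{appendix:vec_operators}; everything else reduces to the Wick and Husimi sandwich bounds together with standard properties of approximate identities.
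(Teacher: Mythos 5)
Your proof is correct, and it takes a genuinely different route from the paper's. The paper tests Inequality~\eqref{eq:Sobolev-Gagliardo} on the Wick quantization of the explicit classical extremizer $g_{s,p}$ (whose form is known only when $p=2$ or $s=1$, which is where the hypothesis comes from), establishes the extra $\dot{W}^{\alpha,q}$ regularity of that extremizer, and uses a quantitative error bound of order $h^{\alpha/2}$ to get $\CSh_{s,p}\geq \CS_{s,p}-C\,h^{\alpha/2}$ at fixed $\hbar$ before letting $\hbar\to0$. You instead test on Wick quantizations of arbitrary $C_c^\infty$ symbols and pass to the semiclassical limit qualitatively: the Husimi bound~\eqref{eq:Husimi_vs_op} plus the approximation-of-identity property of $g_{2h}$ handle the left-hand side, the Wick contraction properties handle the right-hand side, and the optimality of $\CS_{s,p}$ is then invoked through density of $C_c^\infty$ in $\dot{W}^{s,p}$ rather than through the explicit optimizer. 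What each approach buys: yours is shorter, needs no regularity analysis of the extremizer, and in fact does not use the restriction ``$p=2$ or $s=1$'' in any essential way --- since $\sfT_z\tildop_f-\tildop_f=\tildop_{f(\cdot-z)-f}$, the bound~\eqref{eq:Toplitz_bound} gives $\Nrm{\tildop_f}{\dot{\cW}^{s,p}}\leq\Nrm{f}{\dot{W}^{s,p}}$ for all $s\in(0,1)$ as well, so your argument would yield the lower bound in the full range of Theorem~\ref{thm:Gag_Sob}; the paper's argument, in exchange, exhibits explicit near-optimizers at each fixed $\hbar$ with a quantified defect. Two points worth making explicit in your write-up: the bound $\Nrm{\tildop_{\nabla f}}{\L^p}\leq\Nrm{\nabla f}{L^p}$ uses the vector-valued (i.e. $\ell^2$-modulus) form of the Wick bound, which is exactly what the paper also invokes in~\eqref{eq:lower_bound_2} and, in the Husimi direction, in~\eqref{eq:convolution_and_Sobolev_1}, so it is consistent with the paper's toolkit but deserves a sentence; and the density step implicitly uses $sp<2d$, which is guaranteed here by $q<\infty$.
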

	
	\begin{proof}
		When $p=2$ or $s=1$, it is well-known \cite{talenti_best_1976, lieb_sharp_1983} that the optimizer of the Sobolev inequality~\eqref{eq:optimal_sobolev_cst} is given by the function
		\begin{equation*}
			g_{s,p}(z) = \frac{1}{\(1+\n{z}^{p'}\)^{2d/q}}
		\end{equation*}
		up to rotation, translation and dilatation. This means that $g\in \dot{W}^{s,p}(\Rdd)$ and $\Nrm{g_{s,p}}{L^q} = \CS_{s,p}\Nrm{g_{s,p}}{\dot{W}^{s,p}}$. The observation that we will use is that these functions are actually in a slightly more regular space than $\dot{W}^{s,p}$. Letting $a=p'$ and $b = 2d/q$, then by an explicit computation, it is not difficult to see that for any $n\in\N$ and any $x\neq 0$,
		\begin{equation*}
			\n{\nabla^n g} \leq C\, \frac{ \n{z}^{a-n} + \n{z}^{n\(a-1\)}}{\(1+\n{z}^a\)^{b+n}}
		\end{equation*}
		for some constant $C > 0$ depending only one $d$, $a$, $b$ and $n$. Therefore, $g\in \dot{W}^{n,r}(\Rdd)$ as soon as $a > n-2d/r$, and $ab > 2d/r - n$. We can then estimate by above the $\dot{W}^{\alpha,q}(\Rdd)$ norm of $g_{s,p}$ by the classical Sobolev embedding $\dot{W}^{n,r}(\Rdd) \subset \dot{W}^{\alpha,q}(\Rdd)$ with $n = \lceil \alpha\rceil$. This leads to
		\begin{equation*}
			g_{s,p}\in \dot{W}^{\alpha,q}(\Rdd) \text{ for any } 0 \leq \alpha  < p' + \frac{2d}{q}.
		\end{equation*}
		Now consider $\op := \tildop_{g_{s,p}}$. Then by the properties of the Husimi transform,
		\begin{equation}\label{eq:lower_bound_1}
			\Nrm{\op}{\L^q} \geq \Nrm{\tilde{f}_{\op}}{L^q} = \Nrm{\tilde{\tilde{g}}_{s,p}}{L^q} \geq \Nrm{g_{s,p}}{L^q} - C \,h^{\alpha/2} \Nrm{g_{s,p}}{\dot{W}^{\alpha,q}}
		\end{equation}
		where the last inequality follows by the same proof as the proof of Inequality~\eqref{eq:convolution_error_quantum} together with the triangle inequality. The first term on the right-hand side of Equation~\eqref{eq:lower_bound_1} is bounded below using the fact that similarly as in~\eqref{eq:convolution_and_Sobolev_1} and~\eqref{eq:convolution_and_Sobolev_frac}
		\begin{equation}\label{eq:lower_bound_2}
			\Nrm{g_{s,p}}{L^q} = \CS_{s,p}\Nrm{g_{s,p}}{\dot{W}^{s,p}} \geq \CS_{s,p}\bNrm{\tildop_{g_{s,p}}}{\dot{\cW}^{s,p}} = \CS_{s,p}\Nrm{\op}{\dot{\cW}^{s,p}}.
		\end{equation}
		To control the second term on the right-hand side of Equation~\eqref{eq:lower_bound_1} by something of the form $-C h^{\alpha/2} \Nrm{\op}{\dot{W}^{s,p}}$, one similarly writes
		\begin{equation}\label{eq:lower_bound_3}
			\Nrm{\op}{\dot{W}^{s,p}} = \bNrm{\tildop_{g_{s,p}}}{\dot{W}^{s,p}} \geq \Nrm{\tilde{\tilde{g}}_{s,p}}{\dot{W}^{s,p}} \geq C
		\end{equation}
		where $C$ does not depend on $\hbar$. Combining the above equations~\eqref{eq:lower_bound_1},~\eqref{eq:lower_bound_2} and~\eqref{eq:lower_bound_3} yields
		\begin{equation*}
			\(\CS_{s,p} - C\,h^{\alpha/2}\)\Nrm{\op}{\dot{W}^{s,p}} \leq \Nrm{\op}{\L^q}
		\end{equation*}
		for some constant $C$ depending only on $(d,s,p,\alpha)$. Hence, the constant $\CSh_{s,p}$ satisfies $\CSh_{s,p} \geq \CS_{s,p} - C\,h^{\alpha/2}$ for any $h>0$. Since $\CSh_{s,p}$ is by definition independent of $h$, we deduce the result by letting $h\to 0$.
	\end{proof}
	
\subsection{Quantum Besov spaces}

	Define  $\delta^1_z\op := \sfT_z\op - \op$ and $\delta^2_z\op := \delta^1_z\delta^1_z\op = \sfT_{2z}\op - 2 \sfT_z\op + \op$. When $s\in(0,1)$, the following norm
	\begin{equation}\label{eq:Bspq_1}
		\Nrm{\op}{\dot{\cB}^s_{p,r,(1)}} :=  \BNrm{\frac{\sfT_z\op - \op}{\n{z}^{s+2d/r}}}{L^r_z(\L^p)}
	\end{equation}
	is equivalent to the $\dot{\cB}^s_{p,r}$ norm, independently of $\hbar$. Proposition~\ref{prop:comparison_besov_sobolev_frac} corresponds to the special case $p=q$ in the following proposition comparing these equivalent norms.
	
	\begin{prop}\label{prop:Besov_differences_2}
		Let $(s,p,q)\in (0,1)\times [1,\infty]^2$ and $\op\in \dot{\cB}^s_{p,q}$. Then $\op\in\dot{\cB}^s_{p,q,(1)}$ and
		\begin{equation}\label{eq:Besov_differences_2}
			(2^{r'} - 2^{sr'})^\frac{1}{r'} \Nrm{\op}{\dot{\cB}^s_{p,q,(1)}} \leq \Nrm{\op}{\dot{\cB}^s_{p,q}} \leq \(2^r - 2^{sr}\)^\frac{1}{r} \Nrm{\op}{\dot{\cB}^s_{p,q,(1)}}
		\end{equation}
		where $r = \max(p,p',q,q')$.
	\end{prop}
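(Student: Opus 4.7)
The plan is to realize the two Besov norms as norms in the mixed Bochner--Schatten space $L^q(\Rdd;\L^p)$ of two operator-valued functions related by a semiclassical translation, and then to apply the two Clarkson-type inequalities in that uniformly convex space. Set
\begin{equation*}
F(z) := \n{z}^{-s-2d/q}\,\delta^1_z\op \qquad\text{and}\qquad G(z) := \sfT_z F(z).
\end{equation*}
The algebraic identities $\sfT_z\delta^1_z\op - \delta^1_z\op = \delta^2_z\op$ and $\sfT_z\delta^1_z\op + \delta^1_z\op = \delta^1_{2z}\op$, combined with the $\L^p$-isometry of $\sfT_z$ and the change of variable $w = 2z$ in the weighted integral of $\Nrm{\delta^1_{2z}\op}{\L^p}^q\n{z}^{-sq-2d}$, together yield
\begin{align*}
\Nrm{F}{L^q(\L^p)} = \Nrm{G}{L^q(\L^p)} &= \Nrm{\op}{\dot{\cB}^s_{p,q,(1)}}, \\
\Nrm{G - F}{L^q(\L^p)} &= \Nrm{\op}{\dot{\cB}^s_{p,q}}, \\
\Nrm{G + F}{L^q(\L^p)} &= 2^s\,\Nrm{\op}{\dot{\cB}^s_{p,q,(1)}}.
\end{align*}

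Next I apply the two Clarkson-type inequalities in $L^q(\Rdd;\L^p)$ at the sharp joint exponent $r := \max(p,p',q,q') \geq 2$, namely
\begin{align*}
\Nrm{G + F}{L^q(\L^p)}^r + \Nrm{G - F}{L^q(\L^p)}^r &\leq 2^{r-1}\bigl(\Nrm{F}{L^q(\L^p)}^r + \Nrm{G}{L^q(\L^p)}^r\bigr), \\
\Nrm{G + F}{L^q(\L^p)}^{r'} + \Nrm{G - F}{L^q(\L^p)}^{r'} &\geq 2\bigl(\Nrm{F}{L^q(\L^p)}^r + \Nrm{G}{L^q(\L^p)}^r\bigr)^{r' - 1}.
\end{align*}
Since $\Nrm{F}{L^q(\L^p)} = \Nrm{G}{L^q(\L^p)}$ and $r(r'-1) = r'$, the right-hand sides collapse to $2^r\Nrm{F}{L^q(\L^p)}^r$ and $2^{r'}\Nrm{F}{L^q(\L^p)}^{r'}$ respectively. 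Substituting the three identifications above isolates $\Nrm{\op}{\dot{\cB}^s_{p,q}}^r \leq (2^r - 2^{sr})\Nrm{\op}{\dot{\cB}^s_{p,q,(1)}}^r$ and $\Nrm{\op}{\dot{\cB}^s_{p,q}}^{r'} \geq (2^{r'} - 2^{sr'})\Nrm{\op}{\dot{\cB}^s_{p,q,(1)}}^{r'}$, which are exactly the two inequalities in~\eqref{eq:Besov_differences_2}. The boundary cases $p,q \in \{1,\infty\}$, which give $r = \infty$ and $r' = 1$, are recovered by the triangle inequality applied to $\delta^2_z\op = \delta^1_{2z}\op - 2\delta^1_z\op$ (upper bound) and $2\delta^1_z\op = \delta^1_{2z}\op - \delta^2_z\op$ (lower bound), since the Clarkson constants $(2^r-2^{sr})^{1/r}$ and $(2^{r'}-2^{sr'})^{1/r'}$ degenerate there to $2$ and $2-2^s$.

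The main obstacle is justifying the two Clarkson-type inequalities in the mixed-norm Bochner space $L^q(\Rdd;\L^p)$ with the stated sharp exponent $r$. For a single Schatten class or a single scalar $L^q$ the statements are classical (Clarkson, Dixmier, McCarthy); their promotion to the Bochner space $L^q(\Rdd;\L^p)$ in the non-trivial range $p,q \in (1,\infty)$ is routine when both $p,q\geq 2$ or both $\leq 2$ (apply pointwise Schatten Clarkson and then scalar $L^q$ Clarkson at the matching exponent), but requires an additional interpolation step when $p$ and $q$ lie on opposite sides of $2$, in order to align the two individual Clarkson exponents $\max(p,p')$ and $\max(q,q')$ at their common maximum~$r$.
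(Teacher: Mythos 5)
Your proposal follows essentially the same route as the paper: the identities $\delta^2_z\op=\delta^1_z\sfT_z\op-\delta^1_z\op$ and $\delta^1_{2z}\op=\delta^1_z\sfT_z\op+\delta^1_z\op$, the $\L^p$-isometry of $\sfT_z$, the factor $2^s$ from the dilation $z\mapsto 2z$, and a pair of Clarkson-type inequalities at the exponent $r=\max(p,p',q,q')$, whose collapse using $\Nrm{F}{}=\Nrm{G}{}$ and $r(r'-1)=r'$ you carry out correctly. The genuine gap is the step you yourself flag: the Clarkson inequalities in the mixed space $L^q_\mu(\L^p)$, with $\mu(\d z)=\n{z}^{-2d-sq}\d z$, at the joint exponent $r$ are asserted, not proved, and beyond the algebra above this is the entire content of the proposition. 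Note also that the genuine (uniform convexity) constant $2^{r-1}$ is needed: the trivial triangle-inequality bound only gives $2^r$, hence the worse constant $(2^{r+1}-2^{sr})^{1/r}$. The paper closes exactly this step without ever formulating a Bochner-space Clarkson inequality: it applies the two-component Clarkson--McCarthy inequality
\begin{equation*}
	\Nrm{(A-B,\,A+B)}{\ell^r(\L^p)} \leq 2^{1/r} \Nrm{(A,B)}{\ell^{r'}(\L^p)},
\end{equation*}
valid for every $r\geq\max(p,p')$ (proved in the appendix by complex interpolation à la Boas between the $\ell^2(\L^2)$ and $\ell^\infty(\L^{p_1})$--$\ell^1(\L^{p_1})$ endpoints), pointwise in $z$ with $A=\delta^1_z\op$, $B=\delta^1_z\sfT_z\op$, and then exchanges the $\ell^r$ (resp. $\ell^{r'}$) norm with the $L^q_\mu$ norm by Minkowski's integral inequality, which is legitimate precisely because $r\geq q$ and $r'\leq q$, i.e. because $r\geq\max(q,q')$. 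This yields your mixed-norm inequalities for all $(p,q)\in[1,\infty]^2$ in one stroke; no case distinction according to whether $p$ and $q$ lie on the same side of $2$, and no interpolation beyond the two-vector inequality, is needed. To make your version complete you should either run this pointwise-plus-Minkowski argument, or interpolate directly in the mixed scale $\ell^r(L^q_\mu(\L^p))$, which again requires only the single condition $r\geq\max(p,p',q,q')$.

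Two smaller remarks. The endpoint cases $p,q\in\{1,\infty\}$ need no separate treatment: at $r=\infty$ the two-vector inequality degenerates to the triangle inequality, so the general argument already covers them. Moreover, your parenthetical recovery of the upper bound at $r=\infty$ from $\delta^2_z\op=\delta^1_{2z}\op-2\,\delta^1_z\op$ only gives the constant $2^s+2$; the constant $2$ comes from $\delta^2_z\op=\delta^1_z\sfT_z\op-\delta^1_z\op$ together with the isometry of $\sfT_z$ (your lower-bound identity, by contrast, does give $2-2^s$).
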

	
	\begin{remark}
		We could also have defined $r$ by $\frac{1}{2} - \frac{1}{r} = \max\!\(\n{\frac{1}{2}-\frac{1}{p}}, \n{\frac{1}{2}-\frac{1}{q}}\)$. Similarly, $r'$ satisfies $r' = \min(p,p',q,q')$. In all the cases, $r' \leq 2 \leq r$, with equality only if $r=2$.
	\end{remark}
	
	\begin{proof}[Proof of Proposition~\ref{prop:Besov_differences_2} and Proposition~\ref{prop:comparison_besov_sobolev_frac}]
		Notice that
		%		\begin{align*}
		%			\delta_z^2 u(x) + \delta_{2z}^1 u(x) &= 2\, \delta_z^1 u(x+z)
		%			\\
		%			\delta_z^2 u(x) - \delta_{2z}^1 u(x) &= -2\,\delta_z^1 u(x)
		%		\end{align*}
		%		or equivalently,
		\begin{align*}
			\delta_z^2 \op &= \delta_z^1 \sfT_z\op - \delta_z^1 \op
			\\
			\delta_{2z}^1 \op &= \delta_z^1 \sfT_z\op + \delta_z^1 \op.
		\end{align*}
		Let $\opmu_1 = \n{\delta_z^2 \op}$, $\opmu_2 = \n{\delta_{2z}^1 \op}$, $\opnu_1 = \n{\delta_z^1 \sfT_z\op}$ and $\opnu_2 = \n{\delta_z^1 \op}$. Then, defining $\mu(\d z) := \frac{\d z}{\n{z}^{d+sq}}$, and the notation $\Nrm{\opmu}{L^q_\mu \L^p_{\vphantom\mu}}^q := \intd \Nrm{\opmu(z)}{\L^p}^q\mu(\d z)$, it holds
		\begin{align*}
			\Nrm{\opmu_1}{L^q_\mu \L^p_{\vphantom\mu}} &= \Nrm{\op}{\dot{\cB}^s_{p,q}}
			\\
			\Nrm{\opmu_2}{L^q_\mu \L^p_{\vphantom\mu}} &= 2^s \Nrm{\op}{\dot{\cB}^s_{p,q,(1)}}
			\\
			\Nrm{\opnu_1}{L^q_\mu \L^p_{\vphantom\mu}} &= \Nrm{\opnu_2}{L^q_\mu \L^p_{\vphantom\mu}} =  \Nrm{\op}{\dot{\cB}^s_{p,q,(1)}}.
		\end{align*}
		Since $r\geq \max(p,p')$, with the notation $\opmu = (\opmu_1,\opmu_2)$ and $\opnu = (\opnu_1, \opnu_2)$ we deduce from the Clarkson--McCarthy inequalities (See Equation~\eqref{eq:Clarkson-McCarthy} in Appendix~\ref{appendix:vec_operators}) that
		\begin{equation*}
			\Nrm{\opmu}{\ell^r(\L^p)} \leq 2^{1/r} \Nrm{\opnu}{\ell^{r'}(\L^p)}.
		\end{equation*} 
		By Jensen's inequality, since $q\leq r$ and $r' \leq q$, this leads to
		\begin{equation*}
			\Nrm{\opmu}{\ell^{^r}\! L^q_\mu \L^{^p}} \leq \Nrm{\opmu}{L^q_\mu \ell^{^r}\! \L^{^p}} \leq 2^{1/r} \Nrm{\opnu}{L^q_\mu \ell^{^{r'}}\!\! \L^{^p}} \leq 2^{1/r} \Nrm{\opnu}{\ell^{^{r'}}\!\! L^q_\mu \L^{^p}}.
		\end{equation*}
		In terms of $\op$, this yields
		\begin{equation*}
			\(\Nrm{\op}{\dot{\cB}^s_{p,q}}^r + 2^{rs} \Nrm{\op}{\dot{\cB}^s_{p,q,(1)}}^r\)^{1/r}  \leq 2^{1/r}\(2\Nrm{\op}{\dot{\cB}^s_{p,q,(1)}}^{r'}\)^{1/r'} = 2 \Nrm{\op}{\dot{\cB}^s_{p,q,(1)}}
		\end{equation*}
		which proves the second inequality in Equation~\eqref{eq:Besov_differences_2}. The first inequality follows similarly using Jensen's inequality and the reversed version of the Clarkson--McCarthy inequalities (Formula~\eqref{eq:Clarkson-McCarthy_2} in Appendix~\ref{appendix:vec_operators}).
	\end{proof}
	
	\begin{proof}[Proof of Proposition~\ref{prop:comparison_besov_sobolev}]
		From the Formula~\eqref{eq:fundamental_theorem} and the fact that the unitary operators $\sfT_{\theta z}$ preserve Schatten norms, one obtains
		\begin{equation*}
			\Nrm{\delta^2_z\op}{\L^p} \leq 2\n{z}\Nrm{\op}{\dot{\cW}^{1,p}}
		\end{equation*}
		while using the triangle inequality yields $\Nrm{\delta^2_z\op}{\L^p} \leq 4\Nrm{\op}{\L^p}$. Therefore, cutting the integral appearing in the definition of the Besov norm into two parts $\n{z}< R$ and $\n{z}\geq R$ leads to
		\begin{equation*}
			\intdd \frac{\Nrm{\delta^2_z\op}{\L^p}^r}{\n{z}^{2d+s r}}\d z %&\leq \int_{\n{z}< R} \frac{2^r\Nrm{\op}{\dot{\cW}^{1,p}}^r}{\n{z}^{2d+\(s-1\)r}}\d z + \int_{\n{z}\geq R} \frac{4^r\Nrm{\op}{\L^p}^r}{\n{z}^{2d+sr}}\d z
			\leq 2^r\omega_{2d}\(\frac{R^{\(1-s\)r}}{\(1-s\)r}\Nrm{\op}{\dot{\cW}^{1,p}}^r + \frac{2^r}{s\,r\,R^{s r}} \Nrm{\op}{\L^p}^r\)
		\end{equation*}
		optimizing with respect to $R$ leads to
		\begin{align*}
			\intdd \frac{\Nrm{\delta^2_z\op}{\L^p}^r}{\n{z}^{2d+sr}}\d z \leq \frac{2^{\(2-s\) r}\,\omega_{2d}}{r\,s\(1-s\)} \Nrm{\op}{\dot{\cW}^{1,p}}^{s r} \Nrm{\op}{\L^p}^{\(1-s\) r}
		\end{align*}
		which proves the result.
	\end{proof}
	
\subsubsection{Semiclassical Littlewood--Paley decomposition}\label{sec:Littlewood-Paley}

	We refer for example to~\cite{triebel_theory_1992, bahouri_fourier_2011} for details on the Littlewood--Paley decomposition in the classical case as well as the classical analogue of the results stated in this section. As in the classical case, we let $\varphi$ be a smooth function of the phase space compactly supported in an annulus $\cC = \set{z\in\Rdd : r \leq \n{z} \leq R}$ such that it is a dyadic partition of the unity, i.e for any $z\neq 0$,
	\begin{equation*}
		\sum_{j\in\Z} \varphi_j(z) = 1
	\end{equation*}
	where $\varphi_j(z) = \varphi(2^{-j}\,z)$, so $\widehat{\varphi_j} = 2^{jd}\,\widehat{\varphi}(2^j\,z)$. To define the quantum analogue of the dyadic block $\triangle_j$ for an operator $\op$, we can use the semiclassical convolution~\eqref{eq:semiclassical_convolution_def}. It leads us to define 
	\begin{equation}
		\triangle_j\op = \widehat{\varphi_j} \star \op = \op_{\widehat{\varphi}_j * f_{\op}} = \op_{\triangle_j f_{\op}},
	\end{equation}
	that is nothing but the Weyl quantization of the dyadic block of the Wigner transform. We can then get a Littlewood--Paley characterization of Besov spaces. We define the homogeneous semi-norm
	\begin{equation*}
		\Nrm{\op}{\tilde{\cB}^s_{p,q}} := \Nrm{2^{js}\Nrm{\triangle_j\op}{\L^p}}{\ell^q}.
	\end{equation*}
	As in the classical case, it gives an equivalent norm to the $\dot{\cB}^s_{p,q}$ norm defined in Section~\ref{sec:besov}. The proof follows the proof given in \cite[Theorem~(2.36) and~(2.37)]{bahouri_fourier_2011}, and we just give the idea on how to adapt the proof by writing the proof of one of the inequalities in the case $s\in(0,1)$.
	\begin{prop}\label{prop:besov_littlewood-paley}
		Let $s\in (0,1)$ and $(p,q)\in [1,\infty]^2$. Then there exists a constant $C>0$ independent of $\hbar$ such that for any $\op\in \dot{\cB}^s_{p,q}$
		\begin{equation}\label{eq:besov_littlewood-paley}
			C^{-1} \Nrm{\op}{\tilde{\cB}^s_{p,q}} \leq \Nrm{\op}{\dot{\cB}^s_{p,q}} \leq C \Nrm{\op}{\tilde{\cB}^s_{p,q}}.
		\end{equation}
	\end{prop}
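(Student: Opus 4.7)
The strategy is to mimic the classical Littlewood--Paley characterization of Besov spaces (see e.g.~\cite{bahouri_fourier_2011}), replacing classical convolution throughout by the semiclassical convolution of Section~\ref{sec:semiclassical_convolution} and invoking Theorem~\ref{thm:semiclassical_convolution} in place of the classical Young inequality. By Proposition~\ref{prop:Besov_differences_2} the norm $\Nrm{\op}{\dot{\cB}^s_{p,q}}$ is equivalent, independently of $\hbar$, to the first-difference norm $\Nrm{\op}{\dot{\cB}^s_{p,q,(1)}}$, so I would work with the latter. The cornerstone of the whole argument is a semiclassical Bernstein inequality: fix a smooth $\widetilde{\varphi}$ supported in a slightly larger annulus than $\cC$ and equal to $1$ on $\cC$; then $\triangle_j\op = \widehat{\widetilde{\varphi_j}}\star\triangle_j\op$, hence $\Dh\triangle_j\op = \nabla\widehat{\widetilde{\varphi_j}}\star\triangle_j\op$, and Theorem~\ref{thm:semiclassical_convolution} combined with the scaling $\Nrm{\nabla\widehat{\widetilde{\varphi_j}}}{L^1} = 2^j\Nrm{\nabla\widehat{\widetilde{\varphi}}}{L^1}$ yields $\Nrm{\Dh\triangle_j\op}{\L^p} \leq C\,2^j\Nrm{\triangle_j\op}{\L^p}$ uniformly in $\hbar$.

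For the upper bound $\Nrm{\op}{\dot{\cB}^s_{p,q,(1)}} \leq C\Nrm{\op}{\tilde{\cB}^s_{p,q}}$, I would decompose $\sfT_z\op - \op = \sum_{j\in\Z}(\sfT_z\triangle_j\op - \triangle_j\op)$ and combine two estimates on each summand: the triangle inequality yields $\Nrm{\sfT_z\triangle_j\op - \triangle_j\op}{\L^p} \leq 2\Nrm{\triangle_j\op}{\L^p}$, while Formula~\eqref{eq:fundamental_theorem} applied to $\triangle_j\op$ together with the Bernstein bound gives $\Nrm{\sfT_z\triangle_j\op - \triangle_j\op}{\L^p} \leq C\,\n{z}\,2^j\Nrm{\triangle_j\op}{\L^p}$. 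Setting $a_j := 2^{js}\Nrm{\triangle_j\op}{\L^p}$, taking the minimum of the two bounds and computing the weighted $L^q_z$ norm produces, after a logarithmic change of variable in $\n{z}$, a discrete convolution of $(a_j)$ against the kernel $g(k) = \min(2^{k(1-s)},\,2^{-ks})$, which is summable precisely because $s\in(0,1)$. Young's inequality $\ell^1 \ast \ell^q \hookrightarrow \ell^q$ then gives the desired bound with a constant independent of $\hbar$.

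For the reverse inequality $\Nrm{\op}{\tilde{\cB}^s_{p,q}} \leq C\Nrm{\op}{\dot{\cB}^s_{p,q,(1)}}$, the vanishing of $\varphi$ at the origin gives $\int \widehat{\varphi_j}(z)\,\d z = \varphi_j(0) = 0$, hence
\begin{equation*}
\triangle_j\op = \intdd \widehat{\varphi_j}(z)\,(\sfT_z\op - \op)\,\d z.
\end{equation*}
Since $\widehat{\varphi}$ is Schwartz, $\n{\widehat{\varphi_j}(z)} \leq C_N\,2^{2jd}(1+2^j\n{z})^{-N}$ for every $N$. I would split this integral into dyadic annuli $\n{z}\sim 2^{-k}$, apply Hölder on each annulus, and recognize the resulting estimate as a discrete convolution in the index $k - j$ against a rapidly decaying kernel, concluding once more by a Young inequality on $\ell^q$. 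The genuine technical obstacle is the semiclassical Bernstein estimate above; once it is in hand, the remaining manipulations of dyadic scales are $\hbar$-free, and the constants in~\eqref{eq:besov_littlewood-paley} are automatically uniform in $\hbar$ because Theorem~\ref{thm:semiclassical_convolution} is dimensionally and quantitatively identical to the classical Young inequality.
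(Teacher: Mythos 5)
Your proposal is correct and follows essentially the same route as the paper: reduce to the first-difference norm via Proposition~\ref{prop:Besov_differences_2}, bound each block by $\min\bigl(2,\,C\,2^j\n{z}\bigr)\Nrm{\triangle_j\op}{\L^p}$, and sum the dyadic scales (your Young-convolution bookkeeping is equivalent to the paper's explicit split at $j_0$ with optimization in $\lambda$, and your Bernstein-plus-\eqref{eq:fundamental_theorem} derivation of the $2^j\n{z}$ bound matches the paper's direct kernel-translation argument using $\triangle_j=\sum_{\n{j-j'}\leq1}\triangle_j\triangle_{j'}$). You additionally sketch the converse inequality, which the paper delegates to the classical reference, and your sketch of it is the standard argument correctly transposed to the semiclassical convolution.
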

	
	\begin{proof}[Proof of Proposition~\ref{prop:besov_littlewood-paley}]
		Assume $q<\infty$. The case $q=\infty$ can be treated similarly. Since $s\in(0,1)$, it is sufficient to look at $\Nrm{\op}{\dot{\cB}^s_{p,q,(1)}}$ instead of $\Nrm{\op}{\dot{\cB}^s_{p,q}}$ by Proposition~\ref{prop:Besov_differences_2}. Notice that
		\begin{equation*}
			\sfT_{z}\!\(f \star \op\) = \intdd f(z')\,\sfT_{z'+z} \op\d z' = \intdd f(z'-z)\,\sfT_{z'} \op\d z' = \(\sfT_{z}f\) \star \op.
		\end{equation*}
		Therefore, using the fact that for any $j\in\Z$, $\triangle_j = \sum_{j',\n{j-j'}\leq 1} \triangle_j \triangle_{j'}$, one obtains
		\begin{align*}
			\sfT_{z}\triangle_j\op - \triangle_j\op &= \sum_{j',\n{j-j'}\leq 1} \intdd \(\widehat{\varphi_j}(z'-z) - \widehat{\varphi_j}(z')\) \sfT_{z'} \triangle_{j'}\op \d z'
			\\
			&= \sum_{j',\n{j-j'}\leq 1} \intdd \int_0^1 2^j\, z\cdot \(\nabla\widehat{\varphi}\)_j(z'-\theta z) \d\theta\, \sfT_{z'} \triangle_{j'}\op \d z'
		\end{align*}
		and so since $\Nrm{\(\nabla\widehat{\varphi}\)_j(z'-\theta z)}{L^1_{z'}} = \Nrm{\nabla\widehat{\varphi}}{L^1} =: D_{\varphi}$, it leads to
		\begin{align*}
			\Nrm{\sfT_z\triangle_j\op - \triangle_j\op}{\L^p} &\leq D_{\varphi}\,2^j \n{z} \sum_{j',\n{j-j'}\leq 1} \Nrm{\triangle_{j'}\op}{\L^p} =: D_{\varphi}\,\Nrm{\op}{\tilde{B}^s_{p,q}} 2^{j\(1-s\)} \n{z} d_{j,s}
		\end{align*}
		where $\Nrm{d_{j,s}}{\ell^q} \leq 3^{1/q'}$. On the other hand, by the triangle inequality
		\begin{equation*}
			\Nrm{\sfT_{z}\triangle_j\op - \triangle_j\op}{\L^p} \leq 2\Nrm{\triangle_j\op}{\L^p} \leq 2^{-js} \Nrm{\op}{\tilde{B}^s_{p,q}} \tilde{d}_{j,s}
		\end{equation*}
		where $\Nrm{\tilde{d}_{j,s}}{\ell^q} \leq 2$. Hence, summing over $j$ and using the two above inequalities we deduce
		% sum_{j >= i} x^{-j} = sum_{j >= 0} x^{-(i+j)} = x^{-i}/(1-1/x) = x^{1-i}/(x-1)
		% sum_{j <= i} x^{j} = sum_{j >= 0} x^{-(-i+j)} = x^{i}/(1-1/x) = x^{1+i}/(x-1)
		\begin{equation*}
			\Nrm{\sfT_{z}\op - \op}{\L^p} \leq \(D_{\varphi} \n{z} \sum_{j\leq j_0} 2^{j\(1-s\)} d_{j,s} +  \sum_{j>j_0} 2^{-js} \tilde{d}_{j,s}\) \Nrm{\op}{\tilde{B}^s_{p,q}}.
		\end{equation*}
%		When $q=\infty$, this gives
%		\begin{align*}
%			\Nrm{\sfT_{z}\op - \op}{\L^p} &\leq \(A_{\varphi} \n{z} 2^{-j_0\(1-s\)} + B \, 2^{j_0s}\) \Nrm{\op}{\tilde{B}^s_{p,\infty}}
%			\\
%			&\leq \frac{A_\varphi^s\,B^{1-s}}{s^s\(1-s\)^{1-s}} \n{z}^s \Nrm{\op}{\tilde{B}^s_{p,\infty}}
%		\end{align*}
%		with $A_\varphi = 3\,D_\varphi\,\frac{2^{1-s}}{2^{1-s}-1}$ and $B = \frac{2^{s+1}}{2^s-1}$.
		% x = 2^{j_0}
		% u = a x^{s-1} + b x^{s}
		% u'= -(1-s) a x^{s-2} + s b x^{s-1}
		% u'= 0  <=>  x = ((1-s)a)/(sb)  ==>  u = ( (s/(1-s))^{1-s} + ((1-s)/s)^{s} ) a^{s} b^{1-s} = ((1-s)/s)^{s} (s/(1-s) + 1) a^{s} b^{1-s} = 1/((1-s)^{1-s} s^s) a^{s} b^{1-s}
		The triangle inequality then yields
		\begin{equation}\label{eq:LP_cutoff_1}
			\Nrm{\frac{\Nrm{\sfT_{z}\op - \op}{\L^p}}{\n{z}^{s+2d/q}}}{L^q} \leq \(D_\varphi \,I_1 + I_2\) \Nrm{\op}{\tilde{B}^s_{p,q}},
		\end{equation}
		where
		\begin{align*}
			I_1^q &=  \intdd \frac{1}{\n{z}^{2d-(1-s)q}} \Big(\sum_{j\leq j_0} 2^{j\(1-s\)}\, d_{j,s}\Big)^q \d z
			\\
			I_2^q &= \intdd \frac{1}{\n{z}^{2d+sq}} \Big(\sum_{j>j_0} 2^{-js}\, \tilde{d}_{j,s}\Big)^q \d z.
		\end{align*}
		By H\"older's inequality and the explicit formula for the geometric sum
		\begin{align*}
			\Big(\sum_{j\leq j_0} 2^{j\(1-s\)} d_{j,s}\Big)^q 
%			\leq \Big(\sum_{j\leq j_0} 2^{j\(1-s\)} \Big)^{q-1} \sum_{j\leq j_0} 2^{j\(1-s\)} d_{j,s}^q
%			\\
			\leq C_{1-s}^{q-1} \, 2^{j_0\(1-s\)\(q-1\)} \sum_{j\leq j_0} 2^{j\(1-s\)} d_{j,s}^q
		\end{align*}
		where $C_{1-s} = \frac{2^{1-s}}{2^{1-s}-1}$. Hence, $I_1$ can be estimated by
		\begin{equation*}
			I_1^q \leq C_{1-s}^{q-1}\intdd\sum_{j\leq j_0} \frac{2^{j_0\(1-s\)\(q-1\)}}{\n{z}^{2d-(1-s)q}}  \, 2^{j\(1-s\)} d_{j,s}^q \d z.
		\end{equation*}
		Taking $j_0$ such that $\frac{\lambda}{\n{z}} \leq 2^{j_0} \leq \frac{2\lambda}{\n{z}}$, one finds by Fubini's theorem
		\begin{equation*}
			I_1^q %\leq C_{1-s}^{q-1} \sum_{j\in\Z}\int_{\n{z}\leq \lambda\,2^{1-j}} \frac{2^{\(1-s\)\(j+q-1\)}\,\lambda^{\(1-s\)\(q-1\)}}{\n{z}^{2d-1+s}} \d z  \, d_{j,s}^q
			\leq \frac{\(3C_{1-s}\)^{q-1}\(2\lambda\)^{\(1-s\)q}\,\omega_{2d}}{1-s}.
		\end{equation*}
		Similarly,
		\begin{equation*}
			I_2^q
%			&\leq \n{C_s}^{q-1}\intdd\sum_{j> j_0} \frac{2^{-j_0 s\(q-1\)}}{\n{z}^{2d+sq}}  \, 2^{-js} \tilde{d}_{j,s}^q \d z
%			\\
			\leq \n{C_s}^{q-1} \sum_{j\in\Z} \int_{\n{z}> \lambda \,2^{-j}} \frac{\lambda^{-s\(q-1\)}}{\n{z}^{2d+s}} \, 2^{-js} \tilde{d}_{j,s}^q \d z
%			\\
%			&\leq \n{C_s}^{q-1}\omega_{2d} \sum_{j\in\Z} \int_{r> \lambda \,2^{-j}} r^{-1-s} \d z \,\lambda^{-s\(q-1\)}\, 2^{-js} \tilde{d}_{j,s}^q
%			\\
%			&\leq \lambda^{-sq} \n{C_s}^{q-1} \frac{\omega_{2d}}{s} \sum_{j\in\Z} \tilde{d}_{j,s}^q
%			\\
			\leq \frac{2^q\n{C_s}^{q-1}\omega_{2d}}{s\,\lambda^{sq}}.
		\end{equation*}
		Combining these two inequalities in Inequality~\eqref{eq:LP_cutoff_1} and optimizing with respect to $\lambda$ yields
		\begin{equation*}
			\Nrm{\frac{\Nrm{\sfT_{z}\op - \op}{\L^p}}{\n{z}^{s+2d/q}}}{L^q}
%			\leq \frac{2^{\(1-s\)\(s+1\)}\omega_{2d}^{1/q}}{s^s\(1-s\)^{1-s}} \(D_\varphi\frac{\(3C_{1-s}\)^{q-1}}{1-s}\)^{s/q}\(\frac{\n{C_s}^{q-1}}{s}\)^{(1-s)/q}
%			\\
			\leq \frac{2^{\(1-s\)\(s+1\)}\omega_{2d}^{1/q} D_\varphi^{s} \(3C_{1-s}\)^{s/q'}\(\n{C_s}\)^{(1-s)/q'}}{s^{s+(1-s)/q}\(1-s\)^{1-s+s/q}}  \Nrm{\op}{\tilde{B}^s_{p,q}}
		\end{equation*}
		finishing the proof of the second inequality in Equation~\eqref{eq:besov_littlewood-paley}.
	\end{proof}
	
	From the above proposition, we deduce the classical embedding between Besov spaces with different third index.
	\begin{cor}[Comparison of Quantum Besov spaces]\label{cor:besov_inclusions}
		For any $1\leq r < q \leq \infty$, $\Nrm{\op}{\dot{\cB}^s_{p,q}} \leq C \Nrm{\op}{\dot{\cB}^s_{p,r}}$ for some constant $C > 0$ independent of $\hbar$.
	\end{cor}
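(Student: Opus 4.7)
My plan is to reduce this embedding between quantum Besov spaces to the classical embedding $\ell^r \hookrightarrow \ell^q$ of sequence spaces, using the Littlewood--Paley characterization established in Proposition~\ref{prop:besov_littlewood-paley} as the bridge.

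More precisely, I would first apply Proposition~\ref{prop:besov_littlewood-paley} to obtain the equivalence
\begin{equation*}
	\Nrm{\op}{\dot{\cB}^s_{p,q}} \leq C_1 \Nrm{\op}{\tilde{\cB}^s_{p,q}} = C_1 \Nrm{2^{js}\Nrm{\triangle_j\op}{\L^p}}{\ell^q_j}
\end{equation*}
with a constant $C_1$ independent of $\hbar$ (and likewise for index $r$). The problem is therefore reduced to comparing the two sequence-space norms $\ell^q_j$ and $\ell^r_j$ applied to the nonnegative sequence $a_j := 2^{js}\Nrm{\triangle_j\op}{\L^p}$. This is a purely real-variable (and in fact $\hbar$-free) statement.

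Second, I invoke the elementary monotonicity of discrete $\ell^p$ norms: since $1 \leq r < q \leq \infty$, one has $\Nrm{a_j}{\ell^q_j} \leq \Nrm{a_j}{\ell^r_j}$ for every nonnegative sequence $(a_j)_{j \in \Z}$. Combining this inequality sandwiched between the two directions of Proposition~\ref{prop:besov_littlewood-paley} yields
\begin{equation*}
	\Nrm{\op}{\dot{\cB}^s_{p,q}} \leq C_1 \Nrm{a_j}{\ell^q_j} \leq C_1 \Nrm{a_j}{\ell^r_j} = C_1 \Nrm{\op}{\tilde{\cB}^s_{p,r}} \leq C_1 C_2 \Nrm{\op}{\dot{\cB}^s_{p,r}},
\end{equation*}
which is the desired bound with $C = C_1 C_2$ independent of $\hbar$.

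The main (minor) obstacle is that Proposition~\ref{prop:besov_littlewood-paley} as stated covers $s \in (0,1)$, whereas Corollary~\ref{cor:besov_inclusions} is phrased for general $s$; so one should either restrict to $s \in (0,1)$ or invoke the generalization to other orders alluded to in Section~\ref{sec:besov} (reducing, if $s \geq 1$, to the case $s - n \in (0,1)$ by applying the same argument to $\Dh^n \op$, or equivalently by shifting via $(-\DDh)^\alpha$). Apart from this bookkeeping, no further estimate is needed, since the constant coming from $\ell^r \hookrightarrow \ell^q$ is $1$ and the constants from the Littlewood--Paley equivalence are already shown to be $\hbar$-independent.
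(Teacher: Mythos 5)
Your proposal is correct and follows essentially the same route as the paper, which deduces Corollary~\ref{cor:besov_inclusions} directly from the Littlewood--Paley equivalence of Proposition~\ref{prop:besov_littlewood-paley} combined with the monotonicity $\Nrm{a_j}{\ell^q} \leq \Nrm{a_j}{\ell^r}$ for $r< q$. Your remark about extending beyond $s\in(0,1)$ (via $\Dh^n\op$ or the Littlewood--Paley definition of the norms for other orders) matches the reduction the paper alludes to in Section~\ref{sec:besov}, so no gap remains.
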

	
	From the semiclassical Young inequality, we get a Bernstein-type lemma.
	
	\begin{lem}[Quantum Bernstein Lemma]\label{lem:bernstein}
		For any $1\leq p \leq q \leq \infty$, there exists a constant $C$ independent of $\hbar$ such that for any operator $\op\in\L^p$,
		\begin{equation}
			\Nrm{\Dh^n\triangle_j\op}{\L^q} \leq C\,2^{j\(n+2d\(\frac{1}{p} - \frac{1}{p}\)\)} \Nrm{\op}{\L^p}.
		\end{equation}
	\end{lem}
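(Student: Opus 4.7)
The plan is to reduce the Bernstein estimate to the semiclassical Young inequality of Theorem~\ref{thm:semiclassical_convolution}, exploiting the fact that the dyadic block is defined as a semiclassical convolution.

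First, I would use the identity $\triangle_j\op = \widehat{\varphi_j}\star\op$ together with the commutation relation $\Dh(f\star\op) = \nabla f\star\op$ recalled just after Equation~\eqref{eq:commutation_convolution}. Iterating $n$ times gives
\begin{equation*}
	\Dh^n\triangle_j\op = (\nabla^n\widehat{\varphi_j})\star\op.
\end{equation*}
Applying the semiclassical Young inequality~\eqref{eq:semiclassical_convolution} with $r\in[1,\infty]$ defined by $1+\tfrac{1}{q}=\tfrac{1}{r}+\tfrac{1}{p}$ yields
\begin{equation*}
	\Nrm{\Dh^n\triangle_j\op}{\L^q} \leq \Nrm{\nabla^n\widehat{\varphi_j}}{L^r(\Rdd)}\,\Nrm{\op}{\L^p}.
\end{equation*}

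Next, I would compute the right-hand norm by the usual scaling. From $\varphi_j(z) = \varphi(2^{-j}z)$ one gets $\widehat{\varphi_j}(w) = 2^{2jd}\,\widehat{\varphi}(2^j w)$, hence $\nabla^n\widehat{\varphi_j}(w) = 2^{j(2d+n)}\,(\nabla^n\widehat\varphi)(2^j w)$. A change of variables then gives
\begin{equation*}
	\Nrm{\nabla^n\widehat{\varphi_j}}{L^r(\Rdd)} = 2^{j\(n + 2d(1-\frac{1}{r})\)}\,\Nrm{\nabla^n\widehat{\varphi}}{L^r(\Rdd)} = 2^{j\(n + 2d(\frac{1}{p}-\frac{1}{q})\)}\,\Nrm{\nabla^n\widehat{\varphi}}{L^r(\Rdd)},
\end{equation*}
using the defining relation between $p$, $q$ and $r$. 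Since $\varphi\in C_c^\infty(\Rdd)$, its Fourier transform $\widehat{\varphi}$ is a Schwartz function, so $\Nrm{\nabla^n\widehat{\varphi}}{L^r}$ is finite and independent of $\hbar$ and $j$. Combining the two inequalities with $C := \Nrm{\nabla^n\widehat{\varphi}}{L^r(\Rdd)}$ delivers the announced estimate.

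There is essentially no obstacle in this argument: the only thing to verify is that $\Dh$ commutes with the semiclassical convolution in the sense stated after~\eqref{eq:commutation_convolution}, which in turn follows from~\eqref{eq:weyl_quantization_gradients} and the commutativity property~\eqref{eq:commutation_convolution} applied to $\op = \op_{\widehat{\varphi_j}*f_{\op}}$. Everything else is a one-line scaling computation plus a direct application of Theorem~\ref{thm:semiclassical_convolution}, and the constant is explicit in terms of a universal Schwartz norm of $\widehat{\varphi}$.
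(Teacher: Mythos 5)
Your proof is correct and is exactly the argument the paper intends: the lemma is presented as a direct consequence of the semiclassical Young inequality (Theorem~\ref{thm:semiclassical_convolution}) applied to $\Dh^n\triangle_j\op = (\nabla^n\widehat{\varphi_j})\star\op$, with the scaling $\widehat{\varphi_j}(w)=2^{2jd}\widehat{\varphi}(2^jw)$ producing the factor $2^{j(n+2d(\frac{1}{p}-\frac{1}{q}))}$ and the constant given by a Schwartz norm of $\widehat{\varphi}$. Your computation also confirms that the exponent in the statement should read $\frac{1}{p}-\frac{1}{q}$ rather than $\frac{1}{p}-\frac{1}{p}$, which is a typo in the paper.
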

	
	The above inequalities are a very useful tool. In particular, they allow to prove that the Schatten norms can always be compared to Besov norms in the following way.
	\begin{cor}\label{cor:besov_vs_schatten}
		For any $1\leq p \leq q \leq \infty$ and $r\in[1,\infty]$, there exists $C>0$ such that for any compact operator $\op$
		\begin{equation*}
			C^{-1}\Nrm{\op}{\tilde{\cB}^0_{p,\infty}} \leq \Nrm{\op}{\L^p} \leq C \Nrm{\op}{\tilde{\cB}^0_{p,1}}.
		\end{equation*}
	\end{cor}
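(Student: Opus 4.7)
The plan is to prove the two inequalities separately, using the Littlewood--Paley blocks $\triangle_j\op = \widehat{\varphi_j}\star\op$ and the semiclassical convolution machinery from Section~\ref{sec:semiclassical_convolution}. The right inequality is a triangle-inequality consequence of a reconstruction formula, while the left inequality follows from the uniform-in-$j$ boundedness of each block on $\L^p$.

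For the upper bound on $\Nrm{\op}{\L^p}$, I would start from the fact that $\sum_{j\in\Z}\varphi_j(z) = 1$ on $\Rdd\setminus\{0\}$, so that on the Fourier side $\sum_{j\in\Z}\widehat{\varphi_j} = \delta_0$ as a tempered distribution. By linearity of the semiclassical convolution and Equation~\eqref{eq:convolution_Dirac}, one formally gets the reconstruction identity
\begin{equation*}
    \sum_{j\in\Z} \triangle_j\op = \Big(\sum_{j\in\Z}\widehat{\varphi_j}\Big)\star\op = \delta_0\star\op = \op,
\end{equation*}
and the triangle inequality for Schatten norms yields $\Nrm{\op}{\L^p} \leq \sum_{j\in\Z}\Nrm{\triangle_j\op}{\L^p} = \Nrm{\op}{\tilde{\cB}^0_{p,1}}$. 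To make this rigorous, I would first check the identity on the Wigner side, where $f_{\sum_{j}\triangle_j\op} = \sum_j \widehat{\varphi_j}*f_\op$, and approximate a general compact $\op$ by finite-rank operators (where $f_\op\in\cS(\Rdd)$ after smoothing, so the partial sums converge in $L^p(\Rdd)$); one then transfers back to $\L^p$ using the isometry $\Nrm{\op_f}{\L^2} = \Nrm{f}{L^2}$ and the Weyl-quantization bounds already implicit in the preceding sections.

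For the lower bound, I would apply the semiclassical Young inequality (Theorem~\ref{thm:semiclassical_convolution}) directly to each dyadic block:
\begin{equation*}
    \Nrm{\triangle_j\op}{\L^p} = \Nrm{\widehat{\varphi_j}\star\op}{\L^p} \leq \Nrm{\widehat{\varphi_j}}{L^1(\Rdd)}\Nrm{\op}{\L^p}.
\end{equation*}
Since $\widehat{\varphi_j}(z) = 2^{2dj}\widehat{\varphi}(2^j z)$, the $L^1(\Rdd)$ norm is scale invariant, so $\Nrm{\widehat{\varphi_j}}{L^1} = \Nrm{\widehat{\varphi}}{L^1}$ independently of $j$. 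Taking the supremum over $j\in\Z$ then gives $\Nrm{\op}{\tilde{\cB}^0_{p,\infty}} = \sup_j \Nrm{\triangle_j\op}{\L^p} \leq \Nrm{\widehat{\varphi}}{L^1}\Nrm{\op}{\L^p}$, which is the desired bound with $C = \Nrm{\widehat{\varphi}}{L^1}$.

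The main obstacle is the justification of the reconstruction identity $\op = \sum_j\triangle_j\op$ in an appropriate topology, especially for $p=\infty$, where even classically the series converges only weakly. For $1\leq p<\infty$ and $\op$ compact, density of finite-rank operators together with the semiclassical Young inequality should suffice; the $p=\infty$ case must instead be interpreted weakly, consistently with the definition of the semiclassical convolution as a Pettis integral (Remark~\ref{rmk:semiclassical_convolution_def}), after which the triangle inequality still gives the stated bound.
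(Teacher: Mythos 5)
Your proposal is correct and takes essentially the same route as the paper: the upper bound is the identical reconstruction $\op=\sum_j\triangle_j\op$ plus the triangle inequality for Schatten norms, and your direct application of the semiclassical Young inequality to each block (with the scale-invariant $L^1$ norm of $\widehat{\varphi_j}$) is precisely the content of the quantum Bernstein Lemma~\ref{lem:bernstein} with $n=0$ and $q=p$, which is what the paper invokes for the lower bound. Your extra care in justifying the reconstruction identity (via the Wigner side, density of finite-rank operators, and a weak interpretation when $p=\infty$) only adds detail the paper leaves implicit.
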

	
	\begin{proof}
		The first inequality follows from Bernstein lemma with $p-q=n=0$. To get the second inequality, write $\op = \sum_{j\in \N} \triangle_j\op$, so that
		\begin{equation*}
			\Nrm{\op}{\L^p} \leq \sum_{j\in \N} \Nrm{\triangle_j\op}{\L^p} = \Nrm{\op}{\tilde{\cB}^0_{p,1}}
		\end{equation*}
		follows by the triangle inequality for Schatten norms.
	\end{proof}
	
	From Bernstein's Lemma~\ref{lem:bernstein}, we also obtain the analogue of the Sobolev embedding for Besov spaces.
	\begin{cor}[Sobolev inequalities for quantum Besov spaces]\label{cor:besov_Sobolev_ineq}
		For any $1\leq p \leq q \leq \infty$, $r\in[1,\infty]$ and $(s_0,s_1)\in (0,1)^2$ such that $s_1-s_0 = 2d\(\frac{1}{p}-\frac{1}{q}\)$, there exists $C>0$ such that for any compact operator $\op$
		\begin{equation*}
			 \Nrm{\op}{\dot{\cB}^{s_0}_{q,r}} \leq C \Nrm{\op}{\dot{\cB}^{s_1}_{p,r}}.
		\end{equation*}
	\end{cor}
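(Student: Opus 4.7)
The plan is to reduce everything to the Littlewood--Paley characterization from Proposition~\ref{prop:besov_littlewood-paley} and then apply Bernstein's lemma block by block. The key observation is that $\triangle_j \op$ is a Weyl quantization whose symbol has phase-space Fourier spectrum concentrated in a dyadic annulus of scale $2^j$, so its $\L^q$ norm is controlled by its $\L^p$ norm with a multiplicative loss of $2^{2dj(1/p-1/q)}$. The relation $s_1 - s_0 = 2d(1/p - 1/q)$ is exactly what is needed to absorb this loss into a shift of the regularity index.

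Concretely, I would first invoke Proposition~\ref{prop:besov_littlewood-paley} to replace $\Nrm{\op}{\dot{\cB}^{s_0}_{q,r}}$ and $\Nrm{\op}{\dot{\cB}^{s_1}_{p,r}}$ by the equivalent Littlewood--Paley semi-norms $\Nrm{\op}{\tilde{\cB}^{s_0}_{q,r}}$ and $\Nrm{\op}{\tilde{\cB}^{s_1}_{p,r}}$, up to constants independent of $\hbar$. Next, I would apply the quantum Bernstein inequality (Lemma~\ref{lem:bernstein}) with $n=0$ to each dyadic block, giving
\begin{equation*}
    \Nrm{\triangle_j \op}{\L^q} \leq C\, 2^{2dj\(\frac{1}{p}-\frac{1}{q}\)} \Nrm{\triangle_j \op}{\L^p}
\end{equation*}
with $C$ independent of $\hbar$ and of $j$ (after first observing that Bernstein applied to $\triangle_j\op = \triangle_j \tilde{\triangle}_j \op$, where $\tilde{\triangle}_j$ is the enlarged block obtained by summing over $|j-j'|\leq 1$, so that the argument on the right-hand side can indeed be the block itself).

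Multiplying the previous inequality by $2^{j s_0}$ and using the defining relation $s_1 = s_0 + 2d(1/p - 1/q)$ gives
\begin{equation*}
    2^{j s_0}\Nrm{\triangle_j \op}{\L^q} \leq C\, 2^{j s_1}\Nrm{\triangle_j \op}{\L^p}.
\end{equation*}
Taking the $\ell^r_j$ norm of both sides yields $\Nrm{\op}{\tilde{\cB}^{s_0}_{q,r}} \leq C\,\Nrm{\op}{\tilde{\cB}^{s_1}_{p,r}}$, and one transfers back to the original Besov semi-norms via Proposition~\ref{prop:besov_littlewood-paley}.

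There is no real obstacle here: once Bernstein's lemma and the Littlewood--Paley equivalence are in hand, the argument is identical to the classical proof. The only minor subtlety is to make sure that the Bernstein inequality as stated in Lemma~\ref{lem:bernstein} can be bootstrapped to involve $\triangle_j\op$ on the right-hand side (rather than $\op$), which is automatic from the idempotency property $\triangle_j \op = \triangle_j(\tilde{\triangle}_j \op)$ of the Littlewood--Paley blocks together with the semiclassical Young inequality of Theorem~\ref{thm:semiclassical_convolution}.
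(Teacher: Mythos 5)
Your argument is correct and is exactly the route the paper intends: the corollary is stated as a consequence of the quantum Bernstein Lemma~\ref{lem:bernstein} combined with the Littlewood--Paley characterization of Proposition~\ref{prop:besov_littlewood-paley}, which is precisely your block-by-block estimate $2^{js_0}\Nrm{\triangle_j\op}{\L^q}\leq C\,2^{js_1}\Nrm{\triangle_j\op}{\L^p}$ followed by taking the $\ell^r$ norm. You also correctly handle the only subtlety (using $\triangle_j\op=\triangle_j\tilde{\triangle}_j\op$ so that Bernstein applies with the block itself on the right-hand side) and implicitly read the exponent in Lemma~\ref{lem:bernstein} as $2d(\tfrac1p-\tfrac1q)$, as intended.
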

	
\subsection{Equivalence of norms}

	Recall that a sequence of bounded operators $(A_n)_{n\in\N}$ is said to converge strongly to an operator $A \in \L^\infty$ if and only if
	\begin{equation*}
		\forall \varphi\in L^2(\Rd), A_n\varphi \underset{n\to\infty}{\to} A\varphi.
	\end{equation*}
	It will be denoted by $A_n\overset{\mathrm{s}}{\to} A$. We start by recalling a useful convergence lemma.
	\begin{lem}[Gr\"umm~\cite{grumm_two_1973}]\label{lem:grumm_convergence}
		Assume $A_n\overset{\mathrm{s}}{\to} A\in\L^\infty$ and let $p\in[1,\infty]$ and $B_n$ a sequence of operators of $\L^p$ such that $B_n\to B$ in $\L^p$ for some $B\in \L^p$. Then
		\begin{equation*}
			A_n\,B_n \to A\,B  \text{ in } \L^p.
		\end{equation*}
	\end{lem}
	
	As an application, we deduce the following convergence result.
	\begin{lem}\label{lem:convergence_gradients}
		Let $p\in[1,\infty]$, $\Dh$ denotes either $\Dhx$, $\Dhv$ or any commutator $\com{C,\cdot}$ with an operator $C$ and $\op_n$ be a sequence of operators in $\cW^{1,p}$ such that $\op_n\to\op$ in $\cW^{1,p}$. Assume $A_n\overset{\mathrm{s}}{\to} 1 = \Id_{L^2(\Rd)}$ is a sequence of bounded operators such that $\Dh A_n \overset{\mathrm{s}}{\to} 0$. Then
		\begin{align*}
			&&A_n\,\op_n\, A_n &\to \op &&\text{ in } \L^p&&
			\\
			&&\Dh\!\(A_n\,\op_n\, A_n\) &\to \Dh\op &&\text{ in } \L^p.&&
		\end{align*}
	\end{lem}
	
	\begin{proof}
		The first limit follows by writing $A_n\,\op_n\, A_n - \op = A_n\,\op_n\, A_n - A_n\,\op + A_n\,\op - \op$ so that by the triangle inequality for Schatten norms
		\begin{equation*}
			\Nrm{A_n\,\op_n\, A_n - \op}{\L^p} \leq \big(\sup_{n\in\N}\Nrm{A_n}{\L^\infty}\big) \Nrm{\op_n\, A_n - \op}{\L^p} + \Nrm{A_n\,\op_n - \op}{\L^p}
		\end{equation*}
		and one concludes using Gr\"umm's Lemma~\ref{lem:grumm_convergence}. For the second limit, one proceeds similarly by writing
		\begin{equation*}
			\Dh\!\(A_n\,\op_n\, A_n\) - \Dh\op = \(\Dh A_n\)\op_n\, A_n + A_n \,\op_n \(\Dh A_n\) +  A_n\(\Dh\op_n\) A_n - \Dh\op
		\end{equation*}
		The first two terms of the right-hand side converge to $0$ by Gr\"umm's Lemma. The difference of the two last terms converges to $0$ by the first part of the proof.
	\end{proof}
	
	We are now ready to prove Proposition~\ref{prop:commut_exponential}. Notice moreover that with the notation~\eqref{eq:Bspq_1}, this implies that $\Nrm{\op}{\dot{\cW}^{1,p}}$ is uniformly-in-$\hbar$ equivalent to $\Nrm{\op}{\dot{\cB}^1_{p,\infty,(1)}}$.
	\begin{proof}[Proof of Proposition~\ref{prop:commut_exponential}]
		For $n\in\N\setminus\{0\}$, let $A_n := \(1+\frac{1}{n}\n{x}^2\)^{-1}$. It follows from the dominated convergence theorem that $A_n\overset{\mathrm{s}}{\to} 1$, while by the definition~\eqref{eq:quantum_gradients} of $\Dhvj = \frac{1}{i\hbar}\com{x_{\jj},\cdot}$ where $\jj\in\set{1,\dots,d}$, it follows that $\Dhvj A_n=0$. Moreover, $\op_n := A_n\,\op\,A_n$ satisfies
		\begin{equation}\label{eq:second_gradient_bound}
			\Nrm{\Dhvj^2\op_n}{\L^p} \leq \frac{4}{\hbar^2} \Nrm{\op_n \n{x_\jj}^2}{\L^p} \leq \frac{4\,n}{\hbar^2} \Nrm{\op}{\L^p}
		\end{equation}
		where we used \cite[Inequality~(56)]{lafleche_strong_2023} which tells that for self-adjoint operators $A$ and $B$, $\Nrm{BAB}{\L^p} \leq \Nrm{AB^2}{\L^p}$ and the triangle inequality. Now, by a second order Taylor expansion of the Wigner transform of $\op_n$, and then taking the Weyl quantization, it follows that for any $\xi_\jj = (0,\xi_\jj)\in\Rdd$,
		\begin{equation*}
			\sfT_{\xi_\jj}\op_n - \op_n = \xi_\jj\, \Dhvj{\op_n} + \xi_\jj^2 \int_0^1 \(1-\theta\) \sfT_{\theta \xi_\jj} \Dhvj^2{\op_n} \d \theta.
		\end{equation*}
		Dividing the above equation by $\xi_\jj$ whenever it is nonzero, it follows that
		\begin{equation*}
			\Nrm{\Dhvj{\op_n}}{\L^p} \leq \sup_{\xi_\jj\neq 0}\(\frac{\Nrm{\sfT_{\xi_\jj}\op_n - \op_n}{\L^p}}{\n{\xi_\jj}}\) + \frac{\n{\xi_\jj}}{2} \Nrm{\Dhvj^2{\op_n}}{\L^p}.
		\end{equation*}
		Since $\Nrm{\Dhvj^2\op_n}{\L^p}$ is bounded uniformly with respect to $\xi_\jj$ by Inequality~\eqref{eq:second_gradient_bound}, letting $\xi_j\to 0$ yields
		\begin{equation*}
			\Nrm{\Dhvj{\op_n}}{\L^p} \leq \sup_{\xi_\jj\neq 0}\(\frac{\Nrm{\sfT_{\xi_\jj}\op_n - \op_n}{\L^p}}{\n{\xi_\jj}}\).
		\end{equation*}
		On the other hand, the reversed inequality is true by the first order Taylor formula~\eqref{eq:fundamental_theorem}. Hence
		\begin{equation*}
			\Nrm{\Dhvj{\op_n}}{\L^p} = \sup_{\xi_\jj\neq 0}\(\frac{\Nrm{\sfT_{\xi_\jj}\op_n - \op_n}{\L^p}}{\n{\xi_\jj}}\).
		\end{equation*}
		When $n\to\infty$, the left-hand side converges to $\Nrm{\Dhvj{\op}}{\L^p}$ by Lemma~\ref{lem:convergence_gradients}. On the other hand, by the unitarity of the operator of multiplication by $x\mapsto e^{i\,\xi_\jj\,x_\jj/\hbar}$, it holds
		\begin{equation*}
			\n{\sfT_{\xi_\jj}\op-\op} = \n{e^{i\,\xi_\jj\,x_\jj/\hbar} \op\, e^{i\,\xi_\jj\,x_\jj/\hbar} - \op} = \n{e^{i\,\xi_\jj\,x_\jj/\hbar} \op - \op\, e^{i\,\xi_\jj\,x_\jj/\hbar}} = \n{\com{e^{i\,\xi_\jj\,x_\jj/\hbar}, \op}}.
		\end{equation*}
		Hence
		\begin{equation*}
			\frac{\Nrm{\sfT_{\xi_\jj}\op_n - \op_n}{\L^p}}{\n{\xi_\jj}} \underset{n\to\infty}{\to} \frac{\Nrm{\sfT_{\xi_\jj}\op - \op}{\L^p}}{\n{\xi_\jj}}
		\end{equation*}
		by Lemma~\ref{lem:convergence_gradients}. Verifying that this convergence is uniform with respect to $\xi_\jj$, it follows that
		\begin{equation*}
			\Nrm{\Dhvj{\op}}{\L^p} = \sup_{\xi_\jj\neq 0}\(\frac{\Nrm{\sfT_{\xi_\jj}\op - \op}{\L^p}}{\n{\xi_\jj}}\) = \sup_{\xi_\jj\neq 0}\(\frac{\Nrm{\com{e^{i\,\xi_\jj\,x_\jj/\hbar}, \op}}{\L^p}}{\n{\xi_\jj}}\).
		\end{equation*}
		which proves Equation~\eqref{eq:commut_exponential_j}. Inequalities~\eqref{eq:commut_exponential} then follows on the one side from the first order Taylor expansion Formula~\eqref{eq:fundamental_theorem} and on the other side from the fact that
		\begin{equation*}
			\Nrm{\Dhv\op}{\L^p}^2 \leq \sum_{\jj=1}^d \Nrm{\Dhvj\op}{\L^p}^2 \leq d\, \sup_{z\in\Rdd\setminus\{0\}}\(\frac{\Nrm{\sfT_z\op - \op}{\L^p}}{\n{z}}\)^2
		\end{equation*}
		where the first inequality is proved in appendix (Proposition~\ref{prop:schatten_norm_vec_valued_ineq}).
	\end{proof}
	
\bigskip
\appendix
\section{Vector valued operators}\label{appendix:vec_operators}

	Let $A = (A_1,\dots,A_n)$ and $B = (B_1,\dots,B_n)$ be vector valued operators, i.e. for any $k\in\set{1,\dots,n}$, $A_k$ is an operator acting on $L^2(\Rd)$. Then we denote by $A\cdot B$ the operator $A\cdot B := \sum_{k=1}^n A_k B_k$ and by $\n{A}$ the square root of the positive operator
	\begin{equation*}
		\n{A}^2 = A^*\cdot A = \sum_{k=1}^n \n{A_k}^2.
	\end{equation*}
	More generally, we define $\n{A}_{\ell^r}^r = \sum_{k=1}^n \n{A_k}^r$. We warn the reader about the fact that similarly as the case $r=2$, these quantities do not satisfy in general the triangle inequality. We will use the same notation $\ell^r$ to denote $\Nrm{a_k}{\ell^r}^r = \sum |a_k|^r$ for sequences $(a_k)_{k\geq 1}$, in which case these are indeed norms. The Schatten norm of a vector valued operator is given by $\Nrm{A}{\fS^p} = \Nrm{A}{p} = \(\Tr{\n{A}^p}\)^\frac{1}{p}$. In the case of vector valued operators, we define the Wigner transform component by component, i.e. $f_{A} = (f_{A_1},\dots, f_{A_d})$. One easily verifies that
		\begin{equation*}
			\Nrm{f_A}{L^2} = \Nrm{A}{\L^2}.
		\end{equation*}
	
	\begin{prop}
		Let $x\in\R^n$. Then for any $p\in[0,2]$ $\n{x\cdot A}^p \leq \n{x}^p\n{A}^p$ and for any $p\in[0,\infty]$, $\Nrm{x\cdot A}{p} \leq \n{x}\Nrm{A}{p}$.
	\end{prop}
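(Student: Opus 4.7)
The plan is to first establish the operator inequality $\n{x\cdot A}^2\leq \n{x}^2\n{A}^2$ by a pointwise Cauchy--Schwarz argument on $L^2(\Rd)$, and then leverage this via two distinct mechanisms: operator monotonicity for the range $p\in[0,2]$, and a Hölder-type factorization for the range $p\in[2,\infty]$.

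First I would fix $\psi\in L^2(\Rd)$ and apply the usual Cauchy--Schwarz inequality in $\CC^n$ to the vector $(A_k\psi)_{k=1}^n$ against the scalar vector $x$, obtaining
\begin{equation*}
	\nrm{(x\cdot A)\psi}_{L^2}^2 = \Bigl|\sum_{k=1}^n x_k A_k\psi\Bigr|_{L^2}^2 \leq \n{x}^2 \sum_{k=1}^n \nrm{A_k\psi}_{L^2}^2 = \n{x}^2 \inprod{\psi}{\n{A}^2\psi}.
\end{equation*}
Since this holds for every $\psi$, it gives the operator inequality $\n{x\cdot A}^2 \leq \n{x}^2\n{A}^2$ between positive operators. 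The function $t\mapsto t^{p/2}$ is operator monotone on $[0,\infty)$ precisely when $p/2\in[0,1]$ (Löwner--Heinz), and therefore for $p\in[0,2]$ one obtains
\begin{equation*}
	\n{x\cdot A}^p = \bigl(\n{x\cdot A}^2\bigr)^{p/2} \leq \bigl(\n{x}^2 \n{A}^2\bigr)^{p/2} = \n{x}^p \n{A}^p,
\end{equation*}
which is the first claim. Taking traces immediately yields the Schatten bound $\Nrm{x\cdot A}{p}\leq \n{x}\Nrm{A}{p}$ for $p\in[0,2]$.

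For the remaining range $p\in(2,\infty]$ the operator inequality fails in general (since $t\mapsto t^{p/2}$ is no longer operator monotone), so I would instead factorize $x\cdot A$ through the auxiliary Hilbert space $L^2(\Rd)\otimes \CC^n$. Define the column operator $\tilde A \colon L^2(\Rd) \to L^2(\Rd)\otimes\CC^n$ by $\tilde A\psi = (A_1\psi,\dots,A_n\psi)$ and the row operator $R_x\colon L^2(\Rd)\otimes\CC^n\to L^2(\Rd)$ by $R_x(\psi_1,\dots,\psi_n) = \sum_k x_k\psi_k$, so that $x\cdot A = R_x\,\tilde A$. A direct computation gives $\tilde A^*\tilde A = \sum_k A_k^*A_k = \n{A}^2$, hence $\Nrm{\tilde A}{p} = \Nrm{A}{p}$, while $R_x R_x^* = \n{x}^2\,\mathbf{1}$ shows $\Nrm{R_x}{\infty} = \n{x}$. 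The noncommutative Hölder inequality $\Nrm{BC}{p}\leq \Nrm{B}{\infty}\Nrm{C}{p}$ then produces
\begin{equation*}
	\Nrm{x\cdot A}{p} = \Nrm{R_x\,\tilde A}{p} \leq \Nrm{R_x}{\infty}\,\Nrm{\tilde A}{p} = \n{x}\,\Nrm{A}{p},
\end{equation*}
completing the second claim in the full range $p\in[0,\infty]$.

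The only delicate point is the $p>2$ Schatten bound, where one must resist the temptation to raise the operator inequality to a power greater than one; the factorization step above is what circumvents this obstruction, and it even subsumes the $p\in[1,2]$ case so that the trace-taking argument is really only needed for $p<1$.
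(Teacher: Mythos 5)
Your proposal is correct. The first half (the operator inequality $\n{x\cdot A}^2\leq\n{x}^2\n{A}^2$, Löwner--Heinz for $p\in[0,2]$, then taking traces) is essentially the paper's argument, except that the paper obtains the operator Cauchy--Schwarz step from the explicit positivity identity
\begin{equation*}
	\n{x}^2\n{A}^2-\n{x\cdot A}^2=\n{x}^{-2}\bigl|\,\n{x}^2A-(x\cdot A)\,x\,\bigr|^2,
\end{equation*}
which has the small bonus of exhibiting the defect term, whereas you get the same inequality from the quadratic-form Cauchy--Schwarz on each $\psi$ --- an equivalent and equally valid route. Where you genuinely diverge is the range $p\in(2,\infty]$: the paper takes the operator norm for $p=\infty$ and, for $2<p<\infty$, invokes the fact that $0\leq B\leq C$ implies $\Tr f(B)\leq\Tr f(C)$ for any increasing (not necessarily operator monotone) $f$, applied to $f(t)=t^{p/2}$; you instead factor $x\cdot A=R_x\tilde A$ through $L^2(\Rd)\otimes\CC^n$, compute $\tilde A^*\tilde A=\n{A}^2$ and $R_xR_x^*=\n{x}^2\mathbf{1}$, and conclude by H\"older $\Nrm{BC}{p}\leq\Nrm{B}{\infty}\Nrm{C}{p}$. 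Your factorization argument is correct (the singular values of $\tilde A$ coincide with the eigenvalues of $\n{A}$, and H\"older holds at the level of singular values, so it in fact covers every Schatten exponent, including $p<2$, in one stroke); its advantage is that it bypasses the trace-monotonicity fact entirely, while the paper's route stays within a single Hilbert space and keeps the explicit remainder identity.
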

	
	\begin{proof}
		From the identity
		\begin{equation*}
			\n{x}^2\n{A}^2 - \n{x\cdot A}^2 = \n{x}^{-2}\n{\n{x}^2A - \(x\cdot A\) x}^2
		\end{equation*}
		for any $x\neq 0$, we deduce that $\n{x\cdot A}^2 \leq \n{x}^2\n{A}^2$, and so we deduce the first claimed inequality from the fact that the function $t\to t^a$ is operator monotone for any $a\in[0,1]$ by the L\"owner--Heinz theorem. Taking the trace yields the second inequality for $p\leq 2$. Taking the operator norm gives the case $p=\infty$. The result for $p\in(2,\infty)$ follows by noticing that $f: t\mapsto \n{t}^k t$ is increasing on $\R$ for any $k\geq 0$, so  $A\mapsto \Tr{f(A)}$ is increasing (see e.g.~\cite{carlen_trace_2010}).
	\end{proof}
	
	\begin{prop}\label{prop:schatten_norm_vec_valued_ineq}
		Let $p\in[2,\infty]$, then
		\begin{equation*}
			\(\sum_{k=1}^n\Nrm{A_k}{p}^p\)^{1/p} \leq \Nrm{A}{p} \leq \(\sum_{k=1}^n\Nrm{A_k}{p}^2\)^{1/2}
		\end{equation*}
		while if $p\in[1,2]$, then
		\begin{equation*}
			n^{\frac{1}{2}-\frac{1}{p}}\(\sum_{k=1}^n\Nrm{A_k}{p}^p\)^{1/p} \leq  \Nrm{A}{p} \leq \(\sum_{k=1}^n\Nrm{A_k}{p}^p\)^{1/p}.
		\end{equation*}
	\end{prop}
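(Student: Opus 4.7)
The plan is to derive all four inequalities from the fundamental identity $|A|^2 = \sum_{k=1}^n A_k^*A_k$, combined with classical facts about Schatten quasi-norms. For the upper bound when $p\in[2,\infty]$, I would rewrite $\Nrm{A}{p}^2 = \Nrm{|A|^2}{p/2}$ and apply the ordinary triangle inequality for $\L^{p/2}$, which is valid because $p/2\geq 1$, yielding $\Nrm{A}{p}^2 \leq \sum_k\Nrm{|A_k|^2}{p/2} = \sum_k\Nrm{A_k}{p}^2$. For the upper bound when $p\in[1,2]$, I would invoke the Rotfel'd trace inequality (for positive operators $B_k$ and $r\in(0,1]$, $\Tr\bigl(\sum_k B_k\bigr)^r \leq \sum_k\Tr B_k^r$), applied with $B_k=|A_k|^2$ and $r=p/2$, which is precisely $\Nrm{A}{p}^p \leq \sum_k\Nrm{A_k}{p}^p$.

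For the lower bound when $p\in[1,2]$, the key ingredient is that $t\mapsto t^{p/2}$ is operator concave on $[0,\infty)$ by the L\"owner--Heinz theorem (since $p/2\in(0,1]$), so that $X\mapsto\Tr X^{p/2}$ is concave on the positive cone. Jensen's inequality applied to the average $\bar X := \tfrac{1}{n}\sum_k|A_k|^2 = \tfrac{1}{n}|A|^2$ then gives
\begin{equation*}
	\tfrac{1}{n}\sum_k\Nrm{A_k}{p}^p = \tfrac{1}{n}\sum_k\Tr(|A_k|^2)^{p/2} \leq \Tr\bar X^{p/2} = n^{-p/2}\Nrm{A}{p}^p,
\end{equation*}
which rearranges to $\sum_k\Nrm{A_k}{p}^p \leq n^{1-p/2}\Nrm{A}{p}^p$, the desired lower bound after extracting the $1/p$-th power.

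The main obstacle is the lower bound when $p\in[2,\infty]$, for which I plan to use a pinching argument. I would introduce the column operator $C\colon L^2(\Rd)\to L^2(\Rd)^n$, $C\varphi := (A_1\varphi,\ldots,A_n\varphi)$, so that $C^*C = \sum_k A_k^*A_k = |A|^2$ and hence $\Nrm{A}{p}^p = \Tr(C^*C)^{p/2} = \Tr(CC^*)^{p/2}$. The operator $CC^*$ is the block operator $(A_iA_j^*)_{i,j}$ acting on $L^2(\Rd)^n$, whose block diagonal is $\diag(A_1A_1^*,\ldots,A_nA_n^*)$. Since the eigenvalues of the block diagonal pinching of a positive operator are majorized by those of the full operator, and $t\mapsto t^{p/2}$ is convex on $[0,\infty)$ for $p/2\geq 1$, it follows that
\begin{equation*}
	\sum_k\Nrm{A_k}{p}^p = \sum_k\Tr(A_kA_k^*)^{p/2} = \Tr\bigl(\diag(CC^*)\bigr)^{p/2} \leq \Tr(CC^*)^{p/2} = \Nrm{A}{p}^p.
\end{equation*}
The only subtle point is justifying the pinching majorization in infinite dimension, which is standard for compact operators via finite-rank truncation. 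The degenerate endpoint $p=\infty$ is covered separately: $\max_k\Nrm{A_k}{\infty}\leq\Nrm{A}{\infty}$ follows directly from $|A_k|^2\leq|A|^2$, and the matching upper bound is the triangle inequality for $\Nrm{\cdot}{\infty}$.
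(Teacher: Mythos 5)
Your proof is correct, and it differs from the paper's in scope and in two of the four ingredients. The paper does not prove the $r=2$ statement directly: it deduces it from a more general mixed-norm result comparing $\Nrm{A}{\fS^p(\ell^r)}$ with $\Nrm{A}{\ell^r(\fS^p)}$ and $\Nrm{A}{\ell^p(\fS^p)}$ for arbitrary $r$, proved with the same template you use (triangle inequality in $\fS^{p/r}$ for the upper bound when $p\geq r$, operator concavity of $t^{p/r}$ for the lower bound when $p\leq r$). Your upper bound for $p\geq 2$ and your Jensen/concavity argument for the lower bound at $p\leq 2$ coincide with the paper's. The two genuine differences: for the lower bound at $p\geq 2$ the paper invokes superadditivity of $B\mapsto\Nrm{B}{q}^q$ ($q\geq1$) on positive operators (Simon, Theorem~1.22), applied directly to $\sum_k\n{A_k}^2$, whereas you route through the column operator $C$, the identity $\tr(C^*C)^{p/2}=\tr(CC^*)^{p/2}$ and a pinching majorization; your argument is valid (the pinching is an average of unitary conjugations, hence a Schatten contraction, which also settles the infinite-dimensional worry you flag), but note that the superadditivity fact applied to $\sum_k\n{A_k}^2$ gives the same conclusion without introducing $CC^*$ at all. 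For the upper bound at $p\leq 2$ you cite the Rotfel'd--McCarthy subadditivity $\tr\(\sum_k B_k\)^{p/2}\leq\sum_k\tr B_k^{p/2}$ directly, while the paper derives this subadditivity from the Birman--Koplienko--Solomyak--Ando inequality plus induction; these are the same underlying fact, so this is a difference of citation rather than of substance. Your proof buys a shorter, self-contained treatment of the $r=2$ case and an explicit handling of $p=\infty$; the paper's buys the general $\ell^r$ statement, which it needs elsewhere (e.g.\ for the Clarkson--McCarthy-type estimates).
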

	
	The above proposition follows by taking $r=2$ in the following inequalities.
	
	\begin{prop}
		Let $(r,p)\in[1,\infty]^2$. Then
		\begin{align*}
			\Nrm{A}{\ell^r(\fS^r)} &\leq \Nrm{A}{\fS^p(\ell^r)} \leq \Nrm{A}{\ell^r(\fS^p)} &&\text{ if } p\geq r
			\\
			n^{\frac{1}{r}-\frac{1}{p}} \Nrm{A}{\ell^p(\fS^p)} &\leq  \Nrm{A}{\fS^p(\ell^r)} \leq \Nrm{A}{\ell^p(\fS^p)} &&\text{ if } p\leq r.
		\end{align*}
	\end{prop}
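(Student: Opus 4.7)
The plan is to reduce all four inequalities to trace comparisons of the form $\sum_k \Tr{f(B_k)}$ versus $\Tr{f\(\sum_k B_k\)}$ for the positive operators $B_k := |A_k|^r$ and the power function $f(x) = x^{p/r}$ (which vanishes at the origin). The dichotomy $p \geq r$ versus $p \leq r$ corresponds exactly to $f$ being convex versus concave, and the two main inputs will be (i) the triangle inequality for the Schatten norm $\Nrm{\cdot}{p/r}$ when $p/r \geq 1$, and (ii) the Rotfel'd-type trace inequality $\Tr{f\(\sum_k B_k\)} \leq \sum_k \Tr{f(B_k)}$ for $f$ concave with $f(0) = 0$ and $B_k \geq 0$, together with its convex counterpart (the inequality reversed).

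For the case $p \geq r$, the right inequality $\Nrm{A}{\fS^p(\ell^r)} \leq \Nrm{A}{\ell^r(\fS^p)}$ becomes, after raising both sides to the $r$-th power, $\Nrm{\sum_k |A_k|^r}{p/r} \leq \sum_k \Nrm{|A_k|^r}{p/r}$, which is the triangle inequality for $\Nrm{\cdot}{p/r}$ (valid since $p/r \geq 1$). The left inequality, which to reduce to the special case $r=2$ recovered in Proposition~\ref{prop:schatten_norm_vec_valued_ineq} is most naturally read as $\Nrm{A}{\ell^p(\fS^p)} \leq \Nrm{A}{\fS^p(\ell^r)}$, becomes after raising to the $p$-th power $\sum_k \Tr{|A_k|^p} \leq \Tr{\(\sum_k |A_k|^r\)^{p/r}}$. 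Written as $\sum_k \Tr{B_k^{p/r}} \leq \Tr{\(\sum_k B_k\)^{p/r}}$ with $B_k \geq 0$ and $p/r \geq 1$, this is the super-additivity of $B \mapsto \Tr{B^q}$ on the positive cone for $q \geq 1$, i.e.\ Rotfel'd's inequality in its convex form.

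For the case $p \leq r$, the right inequality $\Nrm{A}{\fS^p(\ell^r)} \leq \Nrm{A}{\ell^p(\fS^p)}$ becomes, after raising to the $p$-th power, $\Tr{\(\sum_k |A_k|^r\)^{p/r}} \leq \sum_k \Tr{|A_k|^p}$, which is precisely the standard (concave) Rotfel'd inequality applied to $f(x) = x^{p/r}$ with $p/r \leq 1$. For the left inequality $n^{1/r - 1/p}\Nrm{A}{\ell^p(\fS^p)} \leq \Nrm{A}{\fS^p(\ell^r)}$, I will use the concavity of the trace functional $B \mapsto \Tr{B^{p/r}}$ on the cone of positive operators, which is a consequence of Löwner--Heinz operator concavity of $x^\alpha$ for $\alpha \in (0,1]$: applied with equal weights $1/n$ to the convex combination $\tfrac{1}{n}\sum_k |A_k|^r$, concavity yields $\Tr{\(\tfrac{1}{n}\sum_k |A_k|^r\)^{p/r}} \geq \tfrac{1}{n}\sum_k \Tr{|A_k|^p}$, and pulling out the prefactor $n^{-p/r}$ on the left before taking $p$-th roots produces exactly the factor $n^{1/r - 1/p}$.

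The main analytical obstacle is the convex form of Rotfel'd's trace inequality needed for $p \geq r$: this is the less standard of the two directions and requires a genuine non-commutative argument (for instance integrating $t \mapsto \Tr{(A + tB)^q}$ on $[0,1]$ using Löwner--Heinz monotonicity of $x^{q-1}$ when $q \in [1,2]$ and then iterating, or appealing to Bourin-type subadditivity results). The concave Rotfel'd inequality, the triangle inequality for Schatten norms, and operator concavity of $x^\alpha$ on $[0,\infty)$ for $\alpha \in (0,1]$ are classical, so once the correct trace-level reductions are set up the other three inequalities drop out immediately. No input specific to the semiclassical framework is required, as the statement lives purely at the level of Schatten classes of vector-valued bounded operators.
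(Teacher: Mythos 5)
Your proposal is correct and follows essentially the same route as the paper: the same reduction to trace inequalities for the positive operators $\n{A_k}^r$ and the power $p/r$, using the triangle inequality for $\Nrm{\cdot}{p/r}$ when $p\geq r$, the superadditivity of $B\mapsto\Tr{B^{q}}$ for $q\geq 1$ on positive operators (which the paper simply quotes from Simon's \emph{Trace Ideals}, Theorem~1.22, rather than reproving), and operator concavity of $t\mapsto t^{p/r}$ with equal weights to produce the factor $n^{1/r-1/p}$; your reading of the first left-hand inequality as $\Nrm{A}{\ell^p(\fS^p)}\leq\Nrm{A}{\fS^p(\ell^r)}$ is also the one the paper actually proves. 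The only divergence is the subadditive step for $p\leq r$, where you invoke the concave Rotfel'd trace inequality directly, while the paper derives the same bound from the Birman--Koplienko--Solomyak--Ando inequality combined with operator monotonicity and an induction on $n$ --- equivalent classical inputs leading to the identical trace statement.
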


	\begin{proof}
		If $p\geq r$, then by the triangle inequality for Schatten norms,
		\begin{equation*}
			\Nrm{\n{A}_{\ell^r}}{p}^r = \Nrm{\sum_{k=1}^n \n{A_k}^r}{p/r} \leq \sum_{k=1}^n\Nrm{\n{A_k}^r}{p/r} = \sum_{k=1}^n\Nrm{A_k}{p}^r.
		\end{equation*}
		Conversely, by \cite[Theorem~1.22]{simon_trace_2005}, it holds
		\begin{equation*}
			\sum_{k=1}^n\Nrm{A_k}{p}^p = \sum_{k=1}^n\Nrm{\n{A_k}^r}{p/r}^{p/r} \leq \Nrm{\sum_{k=1}^n\n{A_k}^r}{p/r}^{p/r} = \Nrm{\n{A}_{\ell^r}}{p}^p.
		\end{equation*}
		Now let $p\in [1,r]$. Then, $t\mapsto t^{p/r}$ is operator concave on positive operators, which yields
		\begin{equation*}
			\sum_{k=1}^n \n{A_k}^{p} \leq n^{1-\frac{p}{r}} \(\sum_{k=1}^n \n{A_k}^{r}\)^\frac{p}{r}
		\end{equation*}
		and so taking the trace,
		\begin{equation*}
			\sum_{k=1}^n \Nrm{A_k}{p}^p \leq n^{1-\frac{p}{r}} \Nrm{\n{A}_{\ell^r}}{p}^p.
		\end{equation*}
		Finally, assume $n=2$. Then since $\n{A_1}^r+\n{A_2}^r \geq \n{A_1}^r$, and $t\mapsto t^{p/r}$ is operator monotone, we deduce that $\(\n{A_1}^r+\n{A_2}^r\)^\frac{p}{r} - \n{A_1}^p \geq 0$. Since by the Birman--Koplienko--Solomyak--Ando inequality~\cite{birman_estimates_1975, ando_comparison_1988}
		\begin{equation*}
			\Tr{\n{\(\n{A_1}^r+\n{A_2}^r\)^\frac{p}{r} - \(\n{A_1}^r\)^\frac{p}{r}}} \leq \Tr{\(\n{A_2}^r\)^\frac{p}{r}},
		\end{equation*}
		we deduce that $\Tr{\(\n{A_1}^r+\n{A_2}^r\)^\frac{p}{r}} \leq \Tr{\n{A_1}^p+\n{A_2}^p}$. By induction, this generalizes to the case $n\geq 2$ leading to $\Nrm{\n{A}_{\ell^r}}{p}^p \leq \sum_{k=1}^n \Nrm{A_k}{p}^p$.
	\end{proof}
	
	We recall the Clarkson--McCarthy inequalities~\cite{mccarthy_c_p_1967}
	\begin{align*}
		\Nrm{A-B}{p}^{p'} + \Nrm{A+B}{p}^{p'} &\leq 2\(\Nrm{A}{p}^p + \Nrm{B}{p}^p\)^\frac{p'}{p} && \text{ if } p\in(1,2]
		\\
		\Nrm{A-B}{p}^p + \Nrm{A+B}{p}^p &\leq 2\(\Nrm{A}{p}^{p'} + \Nrm{B}{p}^{p'}\)^\frac{p}{p'} && \text{ if } p\in[2,\infty)
	\end{align*}
	More generally, these can be written as follows.
	\begin{prop}\label{prop:Clarkson-McCarthy}
		Let $(p,r)\in[1,\infty]^2$ be such that $r \geq \max(p,p')$. Then defining $M = (A-B, A+B)$ and $N = (A,B)$, it holds
		\begin{align}\label{eq:Clarkson-McCarthy}
			\Nrm{M}{\ell^r(\L^p)} &\leq 2^{1/r} \Nrm{N}{\ell^{r'}(\L^p)}
			\\\label{eq:Clarkson-McCarthy_2}
			2^{1/r'} \Nrm{N}{\ell^{r}(\L^p)} &\leq \Nrm{M}{\ell^{r'}(\L^p)}.
		\end{align}
	\end{prop}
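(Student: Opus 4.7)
I would view the linear map $T \colon (A,B) \mapsto (A-B,\, A+B)$ as a bounded operator between the two-component vector-valued Schatten spaces $\ell^r(\L^p)$. The plan is to establish~\eqref{eq:Clarkson-McCarthy} by complex interpolation between two natural endpoint bounds on $T$, and then to derive~\eqref{eq:Clarkson-McCarthy_2} from~\eqref{eq:Clarkson-McCarthy} by applying it to the pair $(A-B, A+B)$ in place of $(A, B)$.

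Set $r_0 := \max(p, p')$. First I would observe that both cases $p \in (1, 2]$ and $p \in [2, \infty)$ of the classical Clarkson--McCarthy inequalities recalled just above the proposition can be rewritten uniformly as
\begin{equation*}
\Nrm{T(N)}{\ell^{r_0}(\L^p)} \leq 2^{1/r_0} \Nrm{N}{\ell^{r_0'}(\L^p)}.
\end{equation*}
At the other endpoint, the triangle inequality for Schatten norms gives the trivial bound $\Nrm{T(N)}{\ell^\infty(\L^p)} \leq \Nrm{N}{\ell^1(\L^p)}$.

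Next I would invoke complex interpolation of vector-valued sequence spaces: for any Banach space $X$ (in particular $X = \L^p$) and $\theta \in [0, 1]$, one has the standard identities $[\ell^{r_0}(X),\, \ell^\infty(X)]_\theta = \ell^{r_\theta}(X)$ with $1/r_\theta = (1 - \theta)/r_0$, together with $[\ell^{r_0'}(X),\, \ell^1(X)]_\theta = \ell^{r_\theta'}(X)$ (indeed the source exponent works out to the dual of $r_\theta$, since $(1-\theta)/r_0' + \theta = 1 - (1 - \theta)/r_0 = 1/r_\theta'$). Applying complex interpolation to $T$ between the two endpoint bounds therefore yields
\begin{equation*}
\Nrm{T(N)}{\ell^{r_\theta}(\L^p)} \leq \bigl(2^{1/r_0}\bigr)^{1-\theta} \cdot 1^{\theta} \Nrm{N}{\ell^{r_\theta'}(\L^p)} = 2^{1/r_\theta}\, \Nrm{N}{\ell^{r_\theta'}(\L^p)}.
\end{equation*}
As $\theta$ runs over $[0, 1]$, $r_\theta$ covers the whole interval $[r_0, \infty]$, so this is exactly \eqref{eq:Clarkson-McCarthy} for all admissible $r$.

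To deduce~\eqref{eq:Clarkson-McCarthy_2}, I would apply~\eqref{eq:Clarkson-McCarthy} to the new pair $(A', B') := M = (A-B,\, A+B)$. A direct computation gives $T(A', B') = \bigl((A-B)-(A+B),\, (A-B)+(A+B)\bigr) = (-2B,\, 2A)$, whose $\ell^r(\L^p)$ norm is exactly $2\Nrm{N}{\ell^r(\L^p)}$. Substituting into~\eqref{eq:Clarkson-McCarthy} applied to $(A',B')$ yields $2\Nrm{N}{\ell^r(\L^p)} \leq 2^{1/r}\Nrm{M}{\ell^{r'}(\L^p)}$, which after dividing by $2$ is precisely $2^{1/r'}\Nrm{N}{\ell^r(\L^p)} \leq \Nrm{M}{\ell^{r'}(\L^p)}$. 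The only substantive step is the interpolation identity for $\ell^r(\L^p)$, but this is standard and requires no delicate analysis, since the target $\L^p$ is a Banach space and the underlying index set has only two elements.
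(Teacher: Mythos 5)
Your argument is correct, and the deduction of \eqref{eq:Clarkson-McCarthy_2} from \eqref{eq:Clarkson-McCarthy} via the pair $(A-B,A+B)$ is essentially the same substitution trick the paper uses (the paper substitutes $(\frac{A+B}{2},\frac{A-B}{2})$, which is equivalent). The difference lies in the interpolation endpoints for \eqref{eq:Clarkson-McCarthy}. You take the classical Clarkson--McCarthy inequalities (recalled just before the proposition) as a black box: they give exactly the boundedness of $T:(A,B)\mapsto(A-B,A+B)$ from $\ell^{r_0'}(\L^p)$ to $\ell^{r_0}(\L^p)$ with norm $2^{1/r_0}$, $r_0=\max(p,p')$, and you then interpolate only in the outer sequence index, with $\L^p$ fixed, against the trivial bound $\Nrm{T(N)}{\ell^\infty(\L^p)}\leq\Nrm{N}{\ell^1(\L^p)}$; the exponent bookkeeping $(1-\theta)/r_0'+\theta=1/r_\theta'$ and the constant $2^{(1-\theta)/r_0}=2^{1/r_\theta}$ are right, and since the index set has two elements the exact complex-interpolation identity $[\ell^{r_0}(X),\ell^\infty(X)]_\theta=\ell^{r_\theta}(X)$ (with equal norms, which matters for the constant) is unproblematic. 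The paper instead runs a self-contained Boas-type argument: its endpoints are the parallelogram identity $\Nrm{M}{\ell^2(\L^2)}=\sqrt{2}\,\Nrm{N}{\ell^2(\L^2)}$ and the same triangle-inequality bound $\ell^1(\L^{p_1})\to\ell^\infty(\L^{p_1})$, and it interpolates in both indices simultaneously (which is where the hypothesis $r\geq 2$, $p\in[r',r]$ enters and where the exactness of the family $\ell^r(\L^p)$ is cited). So the paper's route re-derives the classical Clarkson--McCarthy inequalities as the boundary case $r=\max(p,p')$, whereas yours assumes them and only extends to larger $r$; your version is shorter given McCarthy's result, the paper's buys independence from it at the cost of a two-parameter interpolation.
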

	
	\begin{proof}
		Following~\cite{boas_uniformly_1940}, the result can be proved by interpolation. Notice first that $\n{M}_{\ell^2}^2 = 2\n{N}_{\ell^2}^2$, to deduce that
		\begin{equation*}
			\Nrm{M}{\ell^2(\L^2)} = \sqrt{2} \Nrm{N}{\ell^2(\L^2)}.
		\end{equation*}
		On the other hand, for any $p_1\in[1,\infty]$, by the triangle inequality for Schatten norms,
		\begin{equation*}
			\Nrm{M}{\ell^\infty(\L^{p_1})} \leq \Nrm{N}{\ell^1(\L^{p_1})}.
		\end{equation*}
		Therefore, since $r\geq 2$ and $p\in[r',r]$, the Inequality~\eqref{eq:Clarkson-McCarthy} follows using the fact that the spaces $\ell^r(\L^p)$ are an exact family of complex interpolation (see~\cite[Satz~8]{pietsch_interpolationstheorie_1968} and~\cite[1.18.1]{triebel_interpolation_1978}). Inequality~\eqref{eq:Clarkson-McCarthy_2} then follows from Equation~\eqref{eq:Clarkson-McCarthy} just by replacing $(A, B)$ by $(\frac{A+B}{2}, \frac{A-B}{2})$.
	\end{proof}

\section{The quantum fractional Laplacian}\label{appendix:frac_lap}

	As indicated in Section~\ref{sec:quantum_Sobolev_spaces}, a possible definition for the quantum fractional Laplacian is through the Wigner transform and Weyl quantization by the Formula
	\begin{equation}\label{eq:lap_frac_def_2}
		(-\DDh)^s\op := \op_{(-\Delta_z)^s f_{\op}}.
	\end{equation}
	In this section, we justify the identities stated in Section~\ref{sec:quantum_Sobolev_spaces} about this operator and state some other identities. Notice first that when $s=1$, it follows directly by linearity of the Weyl quantization that
	\begin{equation*}
		(-\DDh)\op = \op_{(-\Dx\cdot\Dx) f_{\op}} + \op_{(-\Dv\cdot\Dv) f_{\op}},
	\end{equation*}
	Therefore, by Equation~\eqref{eq:weyl_quantization_gradients}, it follows that
	\begin{equation*}
		\DDh\op := -(-\DDh)\op = \Dhx\cdot\Dhx\op + \Dhv\cdot\Dhv\op.
	\end{equation*}
	In particular, this can be written as $\DDh\op = \DDh_x\op + \DDh_\xi\op$ with $\DDh_x := \Dhx\cdot\Dhx$ and $\DDh_\xi := \Dhv\cdot\Dhv$. Moreover, since the integral kernel of $\Dhv\op$ is given by $\frac{1}{i\hbar} \(x-y\) \op(x,y)$, one deduces that the integral kernel of $\DDh_\xi\op$ is given by
	\begin{equation*}
		\DDh_\xi\op(x,y) = -\,\frac{\n{x-y}^2}{\hbar^2}\, \op(x,y).
	\end{equation*}
	More generally, defining $(-\DDh_\xi)^\frac{s}{2}\op := \op_{(-\Delta_\xi)^\frac{s}{2} f_{\op}}$, the following result holds.
	\begin{prop}
		Let $s\in[0,1]$ and $\op\in\L^2$ be such that $(-\DDh_\xi)^s\op\in\L^2$. Then the integral kernel of $(-\DDh_\xi)^s\op$ is given by
		\begin{equation*}
			(-\DDh_\xi)^\frac{s}{2}\op(x,y) = \frac{\n{x-y}^s}{\hbar^s}\, \op(x,y).
		\end{equation*}
	\end{prop}
	
	\begin{proof}
		Recalling that $f_{\op}(x,\xi) = \cF_y\!\(\op(x+\tfrac{y}{2},x-\tfrac{y}{2})\)\!\(\frac{\xi}{h}\)$, it follows from the classical Fourier definition of the fractional Laplacian and its scaling properties that
		\begin{align*}
			\((-\Delta_\xi)^\frac{s}{2} f_{\op}\)(x,h\xi) &= \frac{1}{h^s} \(-\Delta_\xi\)^\frac{s}{2}\!\cF_y\!\(\op(x+\tfrac{y}{2},x-\tfrac{y}{2})\)\!\(\xi\)
			\\
			&= \frac{1}{h^s} \cF_y\!\(\n{2\pi y}^s\op(x+\tfrac{y}{2},x-\tfrac{y}{2})\)\!\(\xi\)
		\end{align*}
		and so since $\n{y} = \n{x+\tfrac{y}{2}-\(x-\tfrac{y}{2}\)}$, using again the Fourier definition of the Wigner transform yields
		\begin{equation*}
			\((-\Delta_\xi)^\frac{s}{2} f_{\op}\)(x,h\xi) = \frac{1}{\hbar^s}\, f_{\n{x-y}^s\op(x,y)}(x,h\xi).
		\end{equation*}
		Replacing $h\xi$ by $\xi$ and taking the Weyl quantization on both sides finishes the proof.
	\end{proof}
	
	Let us now give a proof of Equation~\eqref{eq:lap_frac_integral} in the case when the integral can be written as a Bochner integral.
	\begin{prop}
		Let $s\in (0,1)$ and $\op \in\cB^s_{2,1}$. Then
	\begin{equation}\label{eq:lap_frac_integral_2}
		(-\DDh)^\frac{s}{2}\op = c_{2d,s} \intdd \frac{\sfT_{z}\op - \op}{\n{z}^{2d+s}} \d z.
	\end{equation}
	\end{prop}
	
	\begin{proof}
		Notice first that since the Wigner transform is an isometry from $\L^2$ to $L^2(\Rdd)$ and $\op\in\cB^s_{2,1}$, it holds
		\begin{equation*}
			\intdd \frac{\Nrm{f_{\sfT_z\op - \op}}{L^2(\Rdd)}}{\n{z}^{2d+s}} \d z = \intdd \frac{\Nrm{\sfT_z\op - \op}{\L^2}}{\n{z}^{2d+s}} \d z = \Nrm{\op}{\dot\cB^s_{2,1}} < \infty.
		\end{equation*}
		It follows from Formula~\eqref{eq:Weyl_translation} that $f_{\sfT_z\op} = f_{\op}(\cdot-z)$, hence by linearity of the Wigner transform, it yields
		\begin{equation*}
			\intdd \Nrm{\frac{f_{\op}(\cdot-z) - f_{\op}}{\n{z}^{2d+s}}}{L^2(\Rdd)} \d z < \infty.
		\end{equation*}
		Therefore, by the integral definition~\eqref{eq:lap_frac_integral_cl}, the fractional Laplacian of $f_{\op}$ is a well-defined Bochner integral given by
		\begin{equation*}
			(-\Delta)^\frac{s}{2} f_{\op} = c_{2d,s} \intdd \frac{f_{\op}(\cdot-z)-f_{\op}}{\n{z}^{2d+s}} \d z.
		\end{equation*}
		Since the Weyl quantization is an isometry from $L^2(\Rdd)$ to $\L^2$, it follows that
		\begin{equation*}
			(-\DDh)^s\op = \op_{(-\Delta_z)^s f_{\op}} = c_{2d,s} \intdd \frac{\op_{f_{\op}(\cdot-z)}-\op_{f_{\op}}}{\n{z}^{2d+s}} \d z,
		\end{equation*}
		and so since $\op_{f_{\op}(\cdot-z)} = \sfT_z\op_{f_{\op}}$ by Formula~\eqref{eq:Weyl_translation} and $\op_{f_{\op}} = \op$, this leads to the claimed formula.
	\end{proof}

% ----- Acknowledgments -----
\medskip
\paragraph{\bf Acknowledgment.} This project has received funding from the European Research Council (ERC) under the European Union’s Horizon 2020 research and innovation program (grant agreement No 865711).

%% ********************  Bibliographie  ********************

\renewcommand{\bibname}{\centerline{Bibliography}}
\bibliographystyle{abbrv} % apalike, ieee, plain, alpha, unsrt, abbrv
\bibliography{Vlasov}

\end{document}